\newtheorem{theorem}{Theorem}[section]
\newtheorem{definition}[theorem]{Definition}
\newtheorem{proposition}[theorem]{Proposition}
\newtheorem{corollary}[theorem]{Corollary}
\newtheorem{lemma}[theorem]{Lemma}
\newtheorem{remark}[theorem]{Remark}
\newtheorem{question}[theorem]{Question}
\title[Random and free positive maps with applications to entanglement detection]{Random and free positive maps with applications to entanglement detection}
\author {Beno\^\i{}t Collins}
\address{
Department of Mathematics, Graduate School of Science, Kyoto University, Kyoto 606-8502, Japan
and
D\'epartement de Math\'ematique et Statistique, Universit\'e d'Ottawa,
585 King Edward, Ottawa, ON, K1N6N5 Canada, 
CNRS, Institut Camille Jordan Universit\'e  Lyon 1, 43 Bd du 11 Novembre 1918, 69622 Villeurbanne
France
} 
\email{collins@math.kyoto-u.ac.jp}
\author {Patrick Hayden}
\address{Physics Department, Stanford University. Varian Physics Building, 382 Via Pueblo Mall, Stanford, 94305-4060, USA.} \email{phayden@stanford.edu}
\author{Ion Nechita}
\address{Zentrum Mathematik, M5, Technische Universit\"at M\"unchen, Boltzmannstrasse 3, 85748 Garching, Germany, and CNRS, Laboratoire de Physique Th\'eorique, Toulouse, France}
\email{nechita@irsamc.ups-tlse.fr}
\date{\today}
\begin{document}

\begin{abstract}
We apply random matrix and free probability techniques to the study of linear maps of interest in quantum information theory. Random quantum channels have already been widely investigated with
spectacular success. Here, we are interested in more general maps, asking only for $k$-positivity instead of the complete positivity required of quantum channels.
Unlike the theory of completely positive maps, the theory of $k$-positive maps is far from being 
completely understood, and our techniques give many new parametrized families of such maps.

We also establish a conceptual link with free probability theory, and show that our constructions
can be obtained to some extent without random techniques in the setup of free products of von Neumann algebras. 

Finally, we study the properties of our examples and show that for some parameters, they are indecomposable. In particular, they can be used to detect the presence of entanglement missed by the partial transposition test, that is, PPT entanglement.
As an application, we considerably refine our understanding of PPT states in the case where one of the spaces
is large whereas the other one remains small. 

\end{abstract}

\maketitle

\tableofcontents

\section{Introduction}

Completely positive maps play a privileged role in quantum information theory because they provide the quantum mechanical generalization of the notion of a noisy channel. The Stinespring theorem provides a useful classification of these maps that also justifies their physical relevance as ignorance-induced noisy summaries of the unitary evolution on a larger system~\cite{stinespring}. Random methods have played an important role in the study of noisy quantum channels, notably in the proofs of capacity theorems~\cite{sw,how,hhwy} and in the construction of channels with unusual properties~\cite{hlsw,ss,fhs}, including channels violating the additivity conjecture~\cite{hw,hastings,cn1,bcn13}.

Maps with weaker positivity properties are of interest in quantum information theory precisely because they fail in subtle ways to be physically realizable. Most importantly, positive but not completely positive maps acting on entangled states may fail to preserve positivity, mapping states to non-positive operators. Since such maps will always preserve positivity for separable quantum states, which are by definition only classically correlated, positive but not completely positive maps can be used to detect the presence of entanglement. The most famous such test is partial transposition; the states with no entanglement detectable this way are known as the PPT states \cite{per}. 
In this article, we use free probability and random matrix techniques to construct new families of positive but not completely positive maps and apply these maps to the study of entanglement.

As an illustration of the power of the method, we consider quantum states of a bipartite space of finite dimension. Motivated originally by debates about whether room temperature nuclear magnetic resonance experiments were capable of producing entanglement, researchers determined some time ago that there is a neighborhood of separable states in the vicinity of the maximally mixed state~\cite{bcjlp}, and have produced successively sharper estimates on the size of that neighborhood \cite{gba}. We consider the related question of perturbing the maximally mixed state by adding to it a random Hermitian traceless matrix. This question has been studied in turn by Aubrun, Szarek, Ye, and others \cite{aub,asy,bne12}. In the limit where at least one of the systems has large dimension, one finds sharp transitions between separable states, PPT states and, finally, states whose entanglement can be detected using partial transposition. We construct new positive maps capable of detecting entanglement to which the partial transposition test is insensitive.

Formally, let $\mathcal A, \mathcal B$ be two $C^*$-algebras. Positive elements in a $C^*$ algebra are elements $x$ that are self-adjoint and that can be written
as $x=yy^*$ for some $y$ in the $C^*$-algebra. 
A positive linear map $\Phi: \mathcal A\to \mathcal B$ is a  map that sends a positive element to a positive element. 
It is well-known that if $\Phi$ is positive, then $\Phi_k=\mathrm{id}_k\otimes \Phi: \mathcal{M}_k(\mathbb{C})\otimes \mathcal A\to \mathcal{M}_k(\mathbb{C})\otimes \mathcal B$ 
is not necessarily positive, unless $\mathcal A$ or $\mathcal B$ are commutative. 
For example if $\mathcal A=\mathcal B=\mathcal{M}_n(\mathbb{C}), n\geq 2$ and $\Phi$ is the transpose map, then, $\Phi_k$ is not positive as soon as $k\geq 2$.

However if $\Phi_k$ is positive, then for $l\leq k$, the map  $\Phi_l$ is clearly positive too. 
A map $\Phi$ such that  $\Phi_k$ is positive is said to be $k$-positive. A completely positive map is one that has this property for all integers $k$. 
Note that if $\mathcal A=\mathcal{M}_n(\mathbb{C})$ and $\mathcal B=\mathcal M_d(\mathbb C)$, 
$\min (n,d)$-positive is equivalent to completely positive \cite[Corollary 4.19]{sto}.
There are many examples of maps that are $k$ positive but not $(k+1)$-positive, but the classification of positive, or even $k$-positive maps 
is far from complete. 

One of the aims of this paper is to provide a new systematic method to obtain $k$-positive maps from 
$\mathcal{M}_n(\mathbb{C})\to \mathcal{M}_d(\mathbb{C})$, based on 
free probability techniques. More precisely, for all integers $n, k$ such that $n\geq k$, we describe  infinite families
of maps $\Phi: \mathcal{M}_n(\mathbb{C})\to \mathcal{M}_d(\mathbb{C})$ that are $k$-positive but not $(k+1)$-positive. 
Our families are defined for any $d$ large enough (and the threshold depends on $n,k$), and their positivity properties depend on the behaviour of the support of the free convolution powers of a given compactly supported probability measure.

Our first main result is Theorem \ref{thm:k-positivity-from-mu}, and can be stated as follows (we refer to 
Section \ref{sec:free-prob-version} for details):

\begin{theorem}
In a non-commutative probability space where a copy of the $n\times n$ matrices $\mathcal M_n(\mathbb C)$ is free from a self-adjoint element
$X$ of distribution $\mu$, consider the map $\Phi_\mu : \mathcal{M}_n(\mathbb{C})\to \mathcal M$ given by 
$\Phi_\mu (E_{ij})=E_{1i}XE_{j1}$, with $E = (E_{ij})_{i,j\in\{1,\ldots,n\}}$ the basis of matrix units.
Then, the map $\Phi_\mu$ is $k$-positive if and only if $\mathrm{supp}(\mu^{\boxplus n/k})\subseteq [0, \infty)$.
\end{theorem}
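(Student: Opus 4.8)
The plan is to characterize $k$-positivity through the action of $\mathrm{id}_k\otimes\Phi_\mu$ on rank-one positive elements and then to recognize the resulting operator as a free compression of $X$. Since every positive element of $\mathcal M_k(\mathbb C)\otimes\mathcal M_n(\mathbb C)$ is a sum of rank-one projections, linearity reduces $k$-positivity to the requirement that $(\mathrm{id}_k\otimes\Phi_\mu)(|\xi\rangle\langle\xi|)\ge 0$ for every $\xi\in\mathbb C^k\otimes\mathbb C^n$. Writing $\xi=\sum_{a=1}^k e_a\otimes\xi_a$ and unfolding $\Phi_\mu(E_{ij})=E_{1i}XE_{j1}$, I would first record the identity $\Phi_\mu(|\xi_a\rangle\langle\xi_b|)=v_aXv_b^*$, where $v_a=\sum_i(\xi_a)_iE_{1i}$ is the embedding of $\xi_a$ into the first row. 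Collecting these (and writing $E^{(k)}_{ab}$ for the matrix units of $\mathcal M_k$), the image takes the compact form $(\mathrm{id}_k\otimes\Phi_\mu)(|\xi\rangle\langle\xi|)=W(1_k\otimes X)W^*$ with $W=\sum_a E^{(k)}_{a1}\otimes v_a$; in particular it lives in the corner $\mathcal M_k\otimes E_{11}\mathcal M E_{11}$.

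The second step is to strip this expression down to a compression. Conjugating $\xi$ by a unitary on the $\mathbb C^k$ factor leaves positivity unchanged, so the singular value decomposition lets me assume the Schmidt form $\xi_a=\sigma_a f_a$ with $\{f_a\}_{a=1}^k$ orthonormal and $\sigma_a\ge 0$. Factoring the positive diagonal $D=\mathrm{diag}(\sigma_a)$ out as $(D\otimes E_{11})(\cdots)(D\otimes E_{11})$, positivity for all $\xi$ is equivalent to positivity of $B=\sum_{a,b}E^{(k)}_{ab}\otimes\phi_aX\phi_b^*$ over all orthonormal systems, where $\phi_a=\sum_i(f_a)_iE_{1i}$ satisfy $\phi_a\phi_b^*=\delta_{ab}E_{11}$. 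Setting $\Psi=\sum_aE^{(k)}_{a1}\otimes\phi_a$, I would check $\Psi\Psi^*=1_k\otimes E_{11}$ and $\Psi^*\Psi=E^{(k)}_{11}\otimes\bar Q$, where $\bar Q\in\mathcal M_n(\mathbb C)$ is a rank-$k$ projection (the projection onto $\mathrm{span}\{f_a\}$, up to entrywise conjugation). Thus $\Psi$ is a partial isometry with $B=\Psi(1_k\otimes X)\Psi^*$, and conjugation by $\Psi$ gives $\Psi^*B\Psi=E^{(k)}_{11}\otimes\bar QX\bar Q$, so $B\ge 0$ if and only if $\bar QX\bar Q\ge 0$ in the compressed algebra $\bar Q\mathcal M\bar Q$.

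The final step invokes free compression. The projection $\bar Q$ lies in the copy of $\mathcal M_n(\mathbb C)$, hence is free from $X$, and has trace $k/n$; since its joint distribution with $X$ depends only on this trace, the distribution of $\bar QX\bar Q$ is independent of the chosen orthonormal system, and because $q'\le q$ forces $q'Xq'=q'(qXq)q'$ the rank-$k$ case dominates all lower-rank compressions. By the Nica--Speicher compression formula, compressing $X$ by a free projection of trace $k/n$ yields (a positive dilation of) $\mu^{\boxplus n/k}$, which is well defined because $n\ge k$ makes the exponent $\ge 1$. As dilation by a positive scalar preserves the sign of the support, $\bar QX\bar Q\ge 0$ holds exactly when $\mathrm{supp}(\mu^{\boxplus n/k})\subseteq[0,\infty)$, which is the asserted equivalence.

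I expect the main obstacle to be the reduction carried out in the second step: justifying that the Schmidt-coefficient diagonal factors out cleanly, so that a single worst-case orthonormal system (equivalently a single rank-$k$ free projection) governs $k$-positivity, and matching the normalization $\tau(\bar Q)=k/n$ to the free convolution power $n/k$ rather than its reciprocal. Verifying that $\Psi$ is genuinely a partial isometry with the claimed initial and final projections, and that conjugation by it identifies $B$ with the honest corner compression $\bar QX\bar Q$, is where the bookkeeping must be done with care.
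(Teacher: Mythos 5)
Your proposal is correct, and its free-probabilistic core coincides with the paper's: the Choi element of $\Phi_\mu$ is $X$ itself, and compressing $X$ by a rank-$k$ projection lying in the copy of $\mathcal M_n(\mathbb C)$ (hence free from $X$, of normalized trace $k/n$) yields, after rescaling by $n/k$, the distribution $\mu^{\boxplus n/k}$ by the Nica--Speicher compression lemma (Lemma \ref{lem:nica-speicher}), so positivity of all such compressions is equivalent to $\mathrm{supp}(\mu^{\boxplus n/k})\subseteq[0,\infty)$. Where you genuinely differ is in how the reduction of $k$-positivity to these compressions is justified. The paper invokes Proposition \ref{prop:k-positive-from-Choi} (quoted from Hou--Li--Poon--Qi--Sze), whose item (4) says $\Phi$ is $k$-positive iff $(P\otimes 1)C_\Phi(P\otimes 1)\geq 0$ for every rank-$k$ projection $P\in\mathcal M_n(\mathbb C)$, applied to $C_{\Phi_\mu}=X$ via the isomorphism $a\mapsto\sum_{i,j}E_{ij}\otimes E_{1i}aE_{j1}$ between $\tilde{\mathcal M}$ and $\mathcal M_n(\mathcal M)$. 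You instead re-derive this equivalence from scratch: positivity on rank-one states, Schmidt decomposition, factoring out the singular-value diagonal $D\otimes E_{11}$ inside the corner $\mathcal M_k\otimes\mathcal M$, and conjugation by the partial isometry $\Psi$ to identify $B$ with $E^{(k)}_{11}\otimes\bar QX\bar Q$; your computations of $\Psi\Psi^*$, $\Psi^*\Psi$ and of the factorization $B=\Psi\bigl(E^{(k)}_{11}\otimes\bar QX\bar Q\bigr)\Psi^*$ are all correct. What each route buys: yours is self-contained (no external $k$-positivity criterion needed) and makes the corner structure explicit, at the cost of the bookkeeping you anticipated; the paper's is much shorter and isolates the free-probability content cleanly. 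Two points you should state explicitly to fully close the argument, both also implicit in the paper: (i) reading off positivity of the operator $\bar QX\bar Q$ from the support of its distribution requires faithfulness of the trace on the von Neumann free product, which holds here; (ii) since every rank-$k$ projection in $\mathcal M_n(\mathbb C)$ is free from $X$ with the same individual distribution, positivity for one orthonormal system is equivalent to positivity for all of them --- you do note this, and it is exactly what collapses the quantification over all $\xi$ to a single free compression.
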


The second main series of results looks into a random matrix approximation of the above theorem, and focuses
on the Gaussian setup. We summarize the results below, and refer to Section \ref{sec:GUE} for details.

\begin{theorem}
Let $Z_d$ be a shifted GUE random matrix in $\mathcal M_n(\mathbb C) \otimes \mathcal M_d(\mathbb C)$ of mean $2$ and variance $\alpha \in [0,1)$. Then there is a linear map $\Phi_d:\mathcal M_n(\mathbb C) \to \mathcal M_n(\mathbb C)$ such that its Choi matrix is again a shifted GUE matrix (which depends on $Z_d$), of mean $(2+\varepsilon)/\sqrt{n}$ and variance $1$.  With probability one as $d\to \infty$:
\begin{enumerate}
\item The matrix $Z_d$ is positive and PPT.
\item The map $\Phi_d$ is positive. However, $\Phi_d$ is not completely positive, provided that $2+\varepsilon< \sqrt n$.
\item As soon as $2(2+\varepsilon)<\alpha \sqrt n$, the map $\Phi_d$ detects the entanglement present in $Z_d$.
\end{enumerate}
\end{theorem}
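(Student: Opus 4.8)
The plan is to pass to the large-$d$ limit, where by asymptotic freeness the copy of $\mathcal M_n(\mathbb C)$ together with the shifted GUE matrices converges to the free model of Theorem~\ref{thm:k-positivity-from-mu}, and then to upgrade the resulting limiting statements to almost-sure statements about the actual extreme eigenvalues at finite $d$ by invoking strong asymptotic freeness (Haagerup--Thorbj\o rnsen and its matrix-valued extensions). Concretely, the limiting model contains a copy of $\mathcal M_n(\mathbb C)$, a shifted semicircular element $X$ whose distribution $\mu$ is the semicircle of mean $(2+\varepsilon)/\sqrt n$ and variance $1$ governing the Choi matrix of $\Phi_d$, and a shifted semicircular element of mean $2$ and variance $\alpha$ playing the role of $Z_d$. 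Every spectral statement below concerns a fixed $*$-polynomial in these matrices, so strong convergence guarantees that the relevant smallest eigenvalues converge almost surely to the edges of the corresponding limiting supports; strict inequalities at the level of the limit therefore transfer to all large $d$.

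For item (1), the limiting spectral distribution of $Z_d$ is the semicircle of mean $2$ and variance $\alpha$, supported on $[2-2\sqrt\alpha,\,2+2\sqrt\alpha]$, which lies in $(0,\infty)$ precisely because $\alpha\in[0,1)$; strong convergence then forces the least eigenvalue of $Z_d$ to tend to $2-2\sqrt\alpha>0$, so $Z_d\geq 0$ almost surely for $d$ large. For the PPT property, the key point is that the partial transpose $\Gamma=\mathrm{id}\otimes T$ is a real-orthogonal involution of the space of Hermitian matrices that fixes the identity and preserves the GUE law; hence $Z_d^\Gamma$ is again a shifted GUE matrix with the same mean $2$ and variance $\alpha$, and the previous argument applies verbatim to give $Z_d^\Gamma\geq 0$ almost surely. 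Thus $Z_d$ is positive and PPT.

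For item (2), I apply Theorem~\ref{thm:k-positivity-from-mu} in the limit. Positivity is the case $k=1$, requiring $\mathrm{supp}(\mu^{\boxplus n})\subseteq[0,\infty)$. Since the semicircle family is stable under free convolution (means and variances add), $\mu^{\boxplus n}$ is the semicircle of mean $\sqrt n\,(2+\varepsilon)$ and variance $n$, supported on $[\sqrt n\,(2+\varepsilon)-2\sqrt n,\ \sqrt n\,(2+\varepsilon)+2\sqrt n]=[\sqrt n\,\varepsilon,\ \sqrt n\,(4+\varepsilon)]$; for $\varepsilon>0$ this lies in $(0,\infty)$, so $\Phi_d$ is positive. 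Complete positivity, on the other hand, is equivalent to $n$-positivity and hence to $\mathrm{supp}(\mu)\subseteq[0,\infty)$; but the lower edge of $\mu$ is $(2+\varepsilon)/\sqrt n-2$, which is strictly negative as soon as $2+\varepsilon<2\sqrt n$, and in particular whenever $2+\varepsilon<\sqrt n$ as assumed. Strong convergence again promotes these strict support conditions into almost-sure block-positivity, respectively almost-sure failure of positivity, of the Choi matrix, so that $\Phi_d$ is positive but not completely positive.

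The heart of the argument is item (3), where one must locate the bottom of the spectrum of $W:=(\Phi_d\otimes\mathrm{id})(Z_d)$. Writing $Z=\sum_{ij}E_{ij}\otimes Z_{ij}$, one has $W=\sum_{ij}E_{1i}XE_{j1}\otimes Z_{ij}$, and a first-moment computation using the freeness of $\mathcal M_n(\mathbb C)$ from $X$ together with $E_{j1}E_{1i}=E_{ji}$ gives the mean
\begin{equation*}
\tau(W)=\frac{1}{n}\sum_{i}\tau(X)\,\mathrm{tr}_d(Z_{ii})=2\,\tau(X)=\frac{2(2+\varepsilon)}{\sqrt n},
\end{equation*}
the factor $2$ being exactly the mean of $Z$. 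The plan is then to identify the full limiting distribution of $W$ by a free-probabilistic computation (the relevant object is the $E_{11}$-corner of an operator built from the free data $\mathcal M_n(\mathbb C)$, $X$ and $Z$) and to show that its lower spectral edge is negative precisely when $\tau(W)<\alpha$, i.e.\ when $2(2+\varepsilon)<\alpha\sqrt n$. Once this edge is strictly negative, strong convergence yields a negative eigenvalue of $(\Phi_d\otimes\mathrm{id})(Z_d)$ almost surely for large $d$; since $Z_d$ is PPT by item (1), this witnesses entanglement invisible to the partial transposition test. I expect the main obstacle to be exactly this edge computation: setting up the joint free model correctly (in particular handling the stated dependence of the Choi matrix on $Z_d$, which a priori breaks freeness between $X$ and $Z$ and may force one to compute with the single underlying Gaussian family rather than two free semicirculars), and then carrying out the subordination analysis needed to extract the clean threshold $2(2+\varepsilon)<\alpha\sqrt n$. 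Establishing operator-norm rather than merely distributional convergence for these particular polynomials, so that the edges genuinely transfer to almost-sure eigenvalue statements at finite $d$, is the accompanying technical point.
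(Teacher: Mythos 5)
Your treatment of items (1) and (2) is sound and essentially matches the paper: positivity and the PPT property of $Z_d$ follow from the support $[2-2\alpha,2+2\alpha]$ of the limiting semicircle together with norm convergence and the invariance of the GUE law under (partial) transposition, and positivity/non-complete-positivity of $\Phi_d$ follow from the semicircular free convolution computation $\mathrm{SC}_{(2+\varepsilon)/\sqrt n,1}^{\boxplus n}=\mathrm{SC}_{(2+\varepsilon)\sqrt n,\sqrt n}$ fed into the random-matrix version of the $k$-positivity criterion (Theorem \ref{thm:k-positivity-from-mu-random-matrix}).

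Item (3), however, contains a genuine gap, and it is the heart of the theorem. You correctly sense that the dependence between the Choi matrix $C_d$ and the state $Z_d$ is the crux, but you leave the detection step as a plan: "identify the full limiting distribution of $W=[\Phi_d\otimes\mathrm{id}](Z_d)$" and "show that its lower spectral edge is negative precisely when $\tau(W)<\alpha$." That claimed criterion is never justified (it appears to be reverse-engineered from the desired threshold), and the route itself runs into a real wall: $W=\sum_{i,j}C_d(i,j)\otimes Z_d(i,j)$ is a sum of tensor products in which \emph{both} legs are built from the same Gaussian blocks, so it is not a polynomial in independent GUE matrices and deterministic matrices, and the strong-convergence/subordination toolbox you invoke does not apply to it off the shelf. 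The paper avoids all of this: it never computes the spectrum of $W$. Instead, $C_d$ is deliberately built from the \emph{transposed} copies $X_{ij}^\top, Y_{ij}^\top$ of the very Gaussians appearing in $Z_d$, and one tests $W$ against a single witness vector, the Bell state $B_d$. A direct computation gives
\begin{equation*}
\langle B_d, [\Phi_d\otimes\mathrm{id}_d](Z_d)\, B_d\rangle \;=\; \frac{1}{d}\sum_{i,j=1}^n \mathrm{Tr}\bigl[Z_d(i,j)\,C_d(i,j)^\top\bigr],
\end{equation*}
and because the transposes cancel, the cross terms become $\frac{1}{d}\mathrm{Tr}(X_{ij}^2)$ and $\frac{1}{d}\mathrm{Tr}(Y_{ij}^2)$, which converge to $1$ almost surely (an elementary law of large numbers, no strong asymptotic freeness needed); the limit is $2(2+\varepsilon)\sqrt n-\alpha n$, which is negative exactly under the hypothesis $2(2+\varepsilon)<\alpha\sqrt n$. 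One negative quadratic form suffices to certify non-positivity of $W$, hence entanglement of the PPT state $Z_d$. Note also the conceptual point you half-identify but then treat as a nuisance: if $C_d$ were independent of $Z_d$, this Bell expectation would converge to $2(2+\varepsilon)\sqrt n>0$ and the detection would fail; the correlation is not an obstacle to be "handled" but the entire mechanism by which the map is tuned to the state it detects.
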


Our paper is organized as follows. 
Section \ref{sec:positive} sets up some notation about positive and $k$-positive maps, and recalls some results about their Choi matrices. 
Section \ref{sec:free-prob-version} is about free probability: it recalls abstract facts about freeness, the 
free additive convolution semigroup, and it gives our main new construction of $k$-positive maps.
Section \ref{sec:RMT} is about the finite dimensional random approximation of the model of Section \ref{sec:free-prob-version}, and
Section \ref{sec:GUE} specifies to the case of non-centered $\mathrm{GUE}$ random matrices, with applications to typical radii of convergence for PPT vs.~ separability, along with non-decomposability results for our random maps. 

\bigskip

\noindent \textit{Acknowledgements.}
B.C.'s research was partly supported by NSERC, ERA, and Kakenhi funding. 
P.H.'s research is supported by CIFAR, FQXi and the Simons Foundation.
I.N.'s research has been supported by a von Humboldt fellowship and by the ANR projects {OSQPI} {2011 BS01 008 01} and {RMTQIT}  {ANR-12-IS01-0001-01}. Both B.C. and I.N. were supported by the ANR project {STOQ}  {ANR-14-CE25-0003}.

\section{Positive maps: notations and facts}
\label{sec:positive}

We denote by $\mathcal{M}_n(\mathbb{C})$ the algebra of $n\times n$ complex matrices. Let $E=(E_{ij})_{i,j\in \{1,\ldots n\}}$ be the basis of matrix units,
i.e. $E_{ij}E_{kl}=E_{il}\delta_{jk}$ and $E_{ij}^*=E_{ji}$. For a  $C^*$-algebra $\mathcal A$, let $\Phi:\mathcal{M}_n(\mathbb{C})\to \mathcal A$ be a linear map. Its Choi matrix in the basis $E$, denoted by $C_\Phi$ is an element of $\mathcal{M}_n(\mathbb{C})\otimes \mathcal A$ defined as
$$C_\Phi=\sum_{i,j=1}^n E_{ij}\otimes \Phi(E_{ij}).$$
Denoting by $B_d \in \mathbb C^d \otimes \mathbb C^d$ the normalized Bell state
\begin{equation}\label{eq:B-d}
B_d = \frac{1}{\sqrt d} \sum_{i=1}^d e_i \otimes e_i,
\end{equation}
the Choi matrix can also be written as 
$$C_\Phi = d [\mathrm{id}_d \otimes \Phi](B_d B_d^*).$$

A map $\Phi : \mathcal M_n(\mathbb C) \to \mathcal A$ is called \emph{$k$-positive} if the dilated application 
$$\Phi \otimes \mathrm{id}_k : \mathcal M_n(\mathbb C) \otimes \mathcal M_k(\mathbb C) \to \mathcal A  \otimes \mathcal M_k(\mathbb C)$$
is positive. Moreover, such a map is called \emph{completely positive} if it is $k$-positive for all $k$. According to a celebrated result of  Choi \cite{cho}, $\Phi$ is completely positive if and only if $C_\Phi$ is positive in $\mathcal{M}_n(\mathbb{C}) \otimes \mathcal A$. A characterization of $k$-positivity using the Choi-Jamio{\l}kowski isomorphism has been obtained in \cite{tho}. 

We recall the following result from \cite{hlp+}, characterizing $k$-positivity of maps. 
\begin{proposition}{\cite[Proposition 2.2]{hlp+}}\label{prop:k-positive-from-Choi}
Consider a linear map $\Phi : \mathcal M_n(\mathbb C) \to \mathcal{B}(\mathcal H)$, where $\mathcal H$ is a Hilbert space. The following are equivalent:
\begin{enumerate}
\item The map $\Phi$ is $k$-positive.
\item For all vectors $x = \sum_{i=1}^k y_i \otimes z_i$, with $y_i \otimes z_i \in \mathbb C^n \otimes \mathcal H$, 
$\langle x, C_\Phi x \rangle \geq 0$.
\item The operator $(I_n \otimes P) C_\Phi (I_n \otimes P)$ is positive semidefinite for any rank $k$ orthogonal projection $P \in \mathcal{B}(\mathcal H)$.
\item The operator $(P \otimes 1_\mathcal H ) C_\Phi (P \otimes 1_\mathcal H)$ is positive semidefinite for any rank $k$ orthogonal projection $P \in \mathcal M_n(\mathbb C)$  ($1_\mathcal H$ is the identity operator in $\mathcal{B}(\mathcal H)$).
\end{enumerate}
\end{proposition}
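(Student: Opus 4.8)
The plan is to make condition (2) the hub of the argument: I would first prove the core equivalence $(1)\iff(2)$ by a direct computation with the Choi matrix, and then recognize $(3)$ and $(4)$ as two geometric reformulations of $(2)$, obtained by compressing $C_\Phi$ onto the subspaces that carry vectors of Schmidt rank at most $k$.

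For $(1)\iff(2)$, I would first reduce $k$-positivity to rank-one test elements: $\Phi\otimes\mathrm{id}_k$ is positive iff it sends every positive $Y\in\mathcal M_n(\mathbb C)\otimes\mathcal M_k(\mathbb C)$ to a positive operator, and since any such $Y$ is a finite sum of rank-one positives $vv^*$ with $v\in\mathbb C^n\otimes\mathbb C^k$ (spectral decomposition) and positivity is preserved under sums, $k$-positivity is equivalent to $(\Phi\otimes\mathrm{id}_k)(vv^*)\geq 0$ for every $v$. Writing $v=\sum_{a=1}^k v_a\otimes f_a$ in the standard basis $(f_a)$ of $\mathbb C^k$, expanding $v_av_b^*$ in the matrix units $E_{ij}$, and testing positivity against $\xi=\sum_a\xi_a\otimes f_a\in\mathcal H\otimes\mathbb C^k$, I would obtain
\[
\langle\xi,(\Phi\otimes\mathrm{id}_k)(vv^*)\xi\rangle=\sum_{a,b}\sum_{i,j}(v_a)_i\,\overline{(v_b)_j}\,\langle\xi_a,\Phi(E_{ij})\xi_b\rangle.
\]
On the other side, expanding the Choi quadratic form for $x=\sum_{a=1}^k y_a\otimes z_a\in\mathbb C^n\otimes\mathcal H$ and using $E_{ij}y=(y)_j\,e_i$ gives
\[
\langle x,C_\Phi x\rangle=\sum_{a,b}\sum_{i,j}\overline{(y_a)_i}\,(y_b)_j\,\langle z_a,\Phi(E_{ij})z_b\rangle.
\]
The two expressions coincide under the substitution $y_a=\overline{v_a}$, $z_a=\xi_a$. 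Since $v\mapsto\overline{v}$ is a bijection of $\mathbb C^n$, as $v$ and $\xi$ range freely the vector $x$ ranges over exactly the sums of $k$ simple tensors, which is precisely the class in $(2)$; tracing both implications through this identity yields $(1)\iff(2)$.

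For $(2)\iff(3)$, I would use that a vector $x\in\mathbb C^n\otimes\mathcal H$ is a sum of $k$ simple tensors iff it lies in $\mathbb C^n\otimes V$ for some $V\subseteq\mathcal H$ with $\dim V\leq k$ (take $V=\mathrm{span}\{z_i\}$, and conversely a basis of $V$). If $P$ is the rank-$k$ projection onto such a $V$, then $x\in\mathrm{Ran}(I_n\otimes P)$, so $\langle x,C_\Phi x\rangle=\langle x,(I_n\otimes P)C_\Phi(I_n\otimes P)x\rangle$; since the compressed operator vanishes on the orthogonal complement of $\mathrm{Ran}(I_n\otimes P)$, its positive semidefiniteness on all of $\mathbb C^n\otimes\mathcal H$ is equivalent to $\langle x,C_\Phi x\rangle\geq 0$ for every $x\in\mathbb C^n\otimes V$. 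Letting $P$ range over all rank-$k$ projections then recovers all Schmidt-rank-$\leq k$ vectors, giving $(2)\iff(3)$; here one may assume $\dim\mathcal H\geq k$, since otherwise $k$-positivity coincides with complete positivity and $(2)$ reduces to $C_\Phi\geq 0$ by Choi's theorem. For $(2)\iff(4)$ I would invoke the symmetry of the Schmidt rank between the two tensor factors: a sum of $k$ simple tensors equally lies in $W\otimes\mathcal H$ for some $W\subseteq\mathbb C^n$ with $\dim W\leq k$, and compressing $C_\Phi$ by $Q\otimes 1_\mathcal H$ for a rank-$k$ projection $Q\in\mathcal M_n(\mathbb C)$ reproduces the previous argument with the factors interchanged.

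The hard part will be the bookkeeping in $(1)\iff(2)$: correctly tracking the complex conjugation $v_a\mapsto\overline{v_a}$ that matches the two quadratic forms, and verifying that this conjugation is a bijection so that ranging over all $v$ genuinely produces all Schmidt-rank-$\leq k$ vectors $x$. Once this identification is in place, the remaining equivalences are elementary, since ``Schmidt rank at most $k$'' is the common thread linking the analytic condition $(2)$ to the two compression conditions $(3)$ and $(4)$.
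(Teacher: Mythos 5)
The paper does not actually prove this proposition: it is quoted from \cite[Proposition 2.2]{hlp+} and used as a black box, so there is no internal proof to compare yours against; judged on its own, your argument is correct and self-contained. The hub identity is right: with the paper's conventions ($C_\Phi=\sum_{i,j}E_{ij}\otimes\Phi(E_{ij})$, inner products conjugate-linear in the first argument), one checks $\langle\xi,(\Phi\otimes\mathrm{id}_k)(vv^*)\xi\rangle=\langle x,C_\Phi x\rangle$ for $x=\sum_a \overline{v_a}\otimes\xi_a$, the conjugation $v_a\mapsto\overline{v_a}$ is indeed the only bookkeeping subtlety, and since entrywise conjugation is a bijection of $\mathbb C^n$ the two quantifications match, giving $(1)\Leftrightarrow(2)$; the reduction of $k$-positivity to rank-one inputs $vv^*$ by spectral decomposition and linearity is standard and correctly invoked. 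The passage to (3) and (4) through ``Schmidt rank at most $k$ if and only if $x\in\mathbb C^n\otimes V$ (respectively $x\in W\otimes\mathcal H$) for a subspace of dimension at most $k$,'' combined with the compression identity $\langle x,C_\Phi x\rangle=\langle x,(I_n\otimes P)C_\Phi(I_n\otimes P)x\rangle$ for $x$ in the range of $I_n\otimes P$, is also sound. One caveat: your dimensional remark only half-resolves the degenerate cases. If $\dim\mathcal H<k$ then (3) is vacuously true (no rank-$k$ projection exists) while (1) and (2) may still fail, and symmetrically if $n<k$ then (4) is vacuous; so the equivalence of all four items, as literally stated, requires both $\dim\mathcal H\geq k$ and $n\geq k$ (or ``rank $k$'' read as ``rank at most $k$''). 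This is a blemish of the imported statement rather than of your proof --- and it is harmless for the paper, which only applies the result with $k\leq n$ and large or infinite-dimensional $\mathcal H$ --- but since you flag the issue for (3), you should impose both hypotheses, not just the one on $\mathcal H$.
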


\section{A model via free probability}
\label{sec:free-prob-version}

We start this section by recalling some facts from free probability theory that will be used throughout the paper. We then introduce our construction of linear maps from probability measures and we characterize their block-positivity properties (random matrix versions of these results will be considered in Section \ref{sec:RMT}). Finally, we discuss some simple, but illuminating examples in the final subsection. 

\subsection{Freeness and free products}
The monographs \cite{vdn} and \cite{nsp} are excellent introductions to free probability theory; here, we recall only the concepts needed in this paper. 

A \emph{$*$-non-commutative probability space} is
a unital $*$-algebra $\mathcal A$ endowed with a tracial 
state $\phi$, i.e. a linear map $\phi\colon\mathcal A\to\mathbb C$
satisfying $\phi (ab)=\phi (ba),\phi (aa^{*})\geq 0, \phi (1)=1$.
An element of $\mathcal A$ is called
a \emph{non-commutative random variable}. 

Let $\mathcal A_1, \ldots ,\mathcal A_k$ be subalgebras of $\mathcal A$ having the same unit as $\mathcal A$.
They are said to be \emph{free} if for all $a_i\in \mathcal  A_{j_i}$ ($i=1, \ldots, k$) 
such that $\phi(a_i)=0$, one has  
$$\phi(a_1\cdots a_k)=0$$
as soon as $j_1\neq j_2$, $j_2\neq j_3,\ldots ,j_{k-1}\neq j_k$.
Collections $S_{1},S_{2},\ldots $ of random variables are said to be 
free if the unital subalgebras they generate are free.

Let $(a_1,\ldots ,a_k)$ be a $k$-tuple of self-adjoint random variables and let
$\mathbb{C}\langle X_1 , \ldots , X_k \rangle$ be the
free $*$-algebra of non-commutative polynomials on $\mathbb{C}$ generated by
the $k$ self-adjoint indeterminates $X_1, \ldots ,X_k$. 

The {\it joint distribution\it} of the family $\{a_i\}_{i=1}^k$ is the linear form
\begin{align*}
\mu_{(a_1,\ldots ,a_k)} : \mathbb C \langle X_1, \ldots ,X_k \rangle &\to \mathbb C \\
P &\mapsto \phi (P(a_1,\ldots ,a_k)).
\end{align*}

Given a $k$-tuple $(a_1,\ldots ,a_k)$ of free 
random variables such that the distribution of $a_i$ is $\mu_{a_i}$, the joint distribution
$\mu_{(a_1,\ldots ,a_k)}$ is uniquely determined by the
$\mu_{a_i}$'s.

In particular, $\mu_{a_1+a_2}$ and $\mu_{a_1a_2}$ depend only on
$\mu_{a_1}$ and $\mu_{a_2}$. The notations $\mu_{a_1+a_2}=
\mu_{a_1}\boxplus\mu_{a_2}$ and $\mu_{a_1a_2}=\mu_{a_1}\boxtimes
\mu_{a_2}$ were introduced in Voiculescu's works \cite{voi86,voi87}; the operations $\boxplus$ and $\boxtimes$ 
are called the {\em free additive}, respectively
{\em free multiplicative} convolution. Moreover, for any compactly supported probability measure $\mu$, one can define its free additive convolution  powers $\mu^{\boxplus T}$ for any real parameter $T \geq 1$, see \cite[Corollary 14.13]{nsp}.

The following result will be quite useful for us and was proved by Nica and Speicher \cite[Exercise 14.21]{nsp}.

\begin{lemma}\label{lem:nica-speicher}
Let $a,p$ be free elements in a non-commutative probability space $(\mathcal A,\phi)$ and assume that $p$ is a self-adjoint projection of rank $t \in (0,1)$ and that $a$ is a self-adjoint random variable having distribution $\mu$.
Then, the distribution of $t^{-1} p a p$ in $(p \mathcal A p , \phi (p \cdot ))$ is $\mu^{\boxplus 1/t}$.
\end{lemma}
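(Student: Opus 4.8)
The plan is to compute the free cumulants of the compressed variable $b = t^{-1}pap$ directly in the compressed space $(p\mathcal A p, \psi)$, where $\psi(\cdot) = \phi(p\,\cdot\,)/\phi(p) = t^{-1}\phi(\cdot)$ is the normalized restriction of $\phi$ (note $\psi(p)=1$, so $p$ is the unit of $p\mathcal A p$ and $(p\mathcal A p,\psi)$ is a genuine non-commutative probability space). Since the free additive convolution power $\mu^{\boxplus 1/t}$ is by construction the distribution whose free cumulants scale linearly with the exponent, namely $\kappa_n(\mu^{\boxplus 1/t}) = t^{-1}\kappa_n(\mu)$ (this is exactly how $\mu^{\boxplus T}$ is defined for $T\geq 1$ in \cite[Corollary 14.13]{nsp}), it suffices to prove that
\begin{equation*}
\kappa_n^\psi(b,\ldots,b) = t^{-1}\,\kappa_n^\phi(a,\ldots,a) \qquad \text{for all } n \geq 1,
\end{equation*}
where $\kappa^\psi$ and $\kappa^\phi$ denote the free cumulant functionals of $(p\mathcal A p,\psi)$ and $(\mathcal A,\phi)$.

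I would first reduce this cumulant identity to a moment computation. By the moment--cumulant formula $\psi(b^n) = \sum_{\sigma\in\mathrm{NC}(n)}\kappa^\psi_\sigma(b,\ldots,b)$ and the multiplicativity $\kappa^\psi_\sigma = \prod_{V\in\sigma}\kappa^\psi_{|V|}$, the target identity is equivalent, after Möbius inversion over the lattice $\mathrm{NC}(n)$, to the moment formula $\psi(b^n) = \sum_{\sigma\in\mathrm{NC}(n)} t^{-|\sigma|}\,\kappa^\phi_\sigma(a,\ldots,a)$, where $|\sigma|$ is the number of blocks and $\kappa^\phi_\sigma(a,\ldots,a)=\prod_{V\in\sigma}\kappa^\phi_{|V|}(a,\ldots,a)$. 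Expanding $b^n = t^{-n}(pap)^n = t^{-n}\,p a p a\cdots a p$ and using $\psi(b^n)=t^{-(n+1)}\phi(papa\cdots ap)$, this in turn amounts to the purely combinatorial identity
\begin{equation}\label{eq:word-identity}
\phi\big(\underbrace{p a p a \cdots a p}_{n\ \text{copies of }a}\big) = \sum_{\sigma\in\mathrm{NC}(n)} t^{\,n+1-|\sigma|}\,\kappa^\phi_\sigma(a,\ldots,a).
\end{equation}

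The main work, and the key step, is \eqref{eq:word-identity}. Using traciality and $p^2=p$ I would first rewrite the left-hand side as $\phi(papa\cdots pa)$, an alternating word of length $2n$ with $n$ copies each of $p$ and $a$, and then expand it with the moment--cumulant formula over $\mathrm{NC}(2n)$. Because $p$ is free from $a$, all mixed free cumulants vanish, so only partitions $\pi = \sigma\sqcup\tau$ survive, with $\sigma$ partitioning the $a$-positions, $\tau$ the $p$-positions, and $\pi$ non-crossing. On the alternating $2n$-cycle the requirement that $\sigma\sqcup\tau$ be non-crossing is precisely the statement that $\tau$ lies below the Kreweras complement $K(\sigma)$; hence for fixed $\sigma$ the inner sum factors over the blocks of $K(\sigma)$,
\begin{equation*}
\sum_{\tau\leq K(\sigma)} \kappa^\phi_\tau(p,\ldots,p) = \prod_{V\in K(\sigma)}\Big(\sum_{\rho\in\mathrm{NC}(|V|)}\kappa^\phi_\rho(p,\ldots,p)\Big) = \prod_{V\in K(\sigma)}\phi(p^{|V|}) = t^{|K(\sigma)|},
\end{equation*}
where the middle step is again moment--cumulant inversion and the last uses that $p$ is a projection, so $\phi(p^m)=\phi(p)=t$ for every $m\geq1$. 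The cardinality relation $|\sigma|+|K(\sigma)| = n+1$ then converts this weight into $t^{\,n+1-|\sigma|}$ and yields \eqref{eq:word-identity}.

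I expect the Kreweras-complementation bookkeeping to be the only genuine obstacle: one must check carefully that the freeness constraint ``$\sigma\sqcup\tau$ non-crossing on the alternating cycle'' coincides exactly with ``$\tau\leq K(\sigma)$'', and that the interval $[\hat 0,K(\sigma)]$ in the non-crossing lattice factors as $\prod_{V} \mathrm{NC}(|V|)$ so that the projection-cumulant sum splits into the product of moments displayed above. Everything else is routine: the linear scaling of free cumulants under $\boxplus$ and the Möbius inversion over $\mathrm{NC}(n)$. Substituting \eqref{eq:word-identity} back gives $\kappa^\psi_n(b,\ldots,b)=t^{-1}\kappa^\phi_n(a,\ldots,a)$, which is exactly the cumulant profile of $\mu^{\boxplus 1/t}$, and since cumulants determine the distribution this identifies the law of $t^{-1}pap$ with $\mu^{\boxplus 1/t}$.
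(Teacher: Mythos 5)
Your proof is correct, but there is nothing in the paper to compare it against: the paper does not prove this lemma at all, it simply quotes it from Nica and Speicher \cite[Exercise 14.21]{nsp}. What you have written is essentially a reconstruction of the standard argument from that book (the compression theorem, Theorem 14.10 there, specialized to a single variable): reduce the claim to the cumulant identity $\kappa_n^\psi(b,\ldots,b)=t^{-1}\kappa_n^\phi(a,\ldots,a)$, expand the alternating moment $\phi\bigl((pa)^n\bigr)$ over $\mathrm{NC}(2n)$, discard partitions coupling $p$'s and $a$'s by vanishing of mixed cumulants, and resum the projection cumulants over the interval $[\hat 0, K(\sigma)]$ using the factorization of that interval, $\phi(p^m)=t$, and $|\sigma|+|K(\sigma)|=n+1$. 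All the combinatorial ingredients you flag as delicate are indeed standard facts about the non-crossing lattice, and your bookkeeping is right. Two minor points worth noting. First, in your word $(pa)^n$ the $p$'s sit \emph{before} the $a$'s, so the partitions $\tau$ compatible with a given $\sigma$ are those below a cyclically rotated Kreweras complement rather than $K(\sigma)$ in the usual convention; since rotation preserves the number of blocks, the weight $t^{\,n+1-|\sigma|}$ and hence the identity are unaffected. Second, your reduction from $\phi\bigl(p(ap)^n\bigr)$ to the length-$2n$ alternating word uses traciality of $\phi$; this is harmless here because the paper's definition of a $*$-non-commutative probability space assumes $\phi$ is a trace (and the lemma is applied only in tracial settings), but one should be aware that the Nica--Speicher compression theorem itself does not require traciality, at the cost of working directly with the word $pap\cdots pap$ of odd length.
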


Finally, we discuss free products.
Given $(\mathcal A_i,\phi_i)_{i\in I}$ a family of non-commutative probability spaces, there exists a non-commutative probability space $(\mathcal A,\phi)$ in which all $(\mathcal A_i,\phi_i)$ can be embedded in 
a trace-preserving way such that they are free with respect to each other. This is called Voiculescu's \emph{free product construction}, and we will need it below.
There exists a $*$-algebra version of this construction, a $C^*$-algebra and a von Neumann algebra construction.
If $\mathcal A$ is generated by the $\mathcal A_i$'s, we write $(\mathcal A,\phi)=*_{i\in I}(\mathcal A_i,\phi_i)$.

\subsection{Maps associated to probability measures}
We now introduce the main idea of this paper, a construction of a linear map depending on a probability measure $\mu$. 
We start with a compactly supported probability measure $\mu$ on $\mathbb{R}$ and fix an integer $n$. 
We will be interested in the space $L^\infty (\mathbb{R},\mu)$. Note that this space is spanned as a von Neumann algebra
 by the operator $x \mapsto x$, which we will denote by $X$. By construction, $X$ is a self-adjoint operator and its spectrum is the support of $\mu$.

We consider the von Neumann probability space free product \cite{vdn}
$$( \tilde{\mathcal M}, \mathrm{tr}*\mathbb E):= (\mathcal{M}_n(\mathbb{C}),\mathrm{tr})*(L^\infty (\mathbb{R},\mu), \mathbb E)$$
and the contracted von Neumann probability space $(\mathcal M,\tau )$ where
$\mathcal M=E_{11}\tilde{\mathcal M} E_{11}$ is the contracted algebra of $\tilde{\mathcal M}$ and $\tau$ is the restriction of $\mathrm{tr} * \mathbb E$ appropriately normalized (by a factor $n$).

We consider the map $\Phi_\mu : \mathcal{M}_n(\mathbb{C})\to \mathcal M$ given by 
$$\Phi_\mu (E_{ij})=E_{1i}XE_{j1}.$$

Our main result is as follows:

\begin{theorem}\label{thm:k-positivity-from-mu}
Let $\mu$ be a compactly supported probability measure. The map $\Phi_\mu$ defined above is $k$-positive if and only if $\mathrm{supp}(\mu^{\boxplus n/k})\subseteq [0, \infty)$.
\end{theorem}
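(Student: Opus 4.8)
The plan is to compute the Choi matrix of $\Phi_\mu$ explicitly, recognize it as a conjugate of $I_n \otimes X$ by a partial isometry, and then translate the $k$-positivity criterion of Proposition \ref{prop:k-positive-from-Choi}(4) into a statement about compressions of $X$ by free projections, which Lemma \ref{lem:nica-speicher} identifies with $\mu^{\boxplus n/k}$.

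First I would write $C_{\Phi_\mu} = \sum_{i,j} E_{ij}\otimes E_{1i}XE_{j1}$ and observe that, setting $R = \sum_{i=1}^n E_{i1}\otimes E_{1i}\in\mathcal{M}_n(\mathbb C)\otimes\tilde{\mathcal M}$ (external matrix units tensored with the embedded internal ones), one has $C_{\Phi_\mu}=R(I_n\otimes X)R^*$ together with $RR^*=I_n\otimes E_{11}$. By Proposition \ref{prop:k-positive-from-Choi}(4), $\Phi_\mu$ is $k$-positive precisely when $(P\otimes 1)C_{\Phi_\mu}(P\otimes 1)\ge 0$ for every rank-$k$ orthogonal projection $P\in\mathcal{M}_n(\mathbb C)$.

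Next I would analyze the operator $S:=(P\otimes 1)R$. A direct computation gives $SS^*=P\otimes E_{11}$ and $S^*S=E_{11}\otimes\tilde P$, where $\tilde P=\sum_{i,j}P_{ij}E_{ij}$ is the copy of $P$ inside the embedded matrix algebra; since $S^*S$ is a projection, $S$ is a partial isometry. Writing $q=S^*S$ and using $Sq=S$, one obtains $(P\otimes 1)C_{\Phi_\mu}(P\otimes 1)=S\,[\,q(I_n\otimes X)q\,]\,S^*$, and because $S$ restricts to a unitary from $\mathrm{ran}\,q$ onto $\mathrm{ran}\,SS^*$, this is positive semidefinite if and only if $q(I_n\otimes X)q\ge 0$, that is, if and only if the compression $\tilde P X\tilde P$ is positive.

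Finally I would invoke freeness: the embedded $\mathcal{M}_n(\mathbb C)$ is free from $X$, so $\tilde P$ — a projection of normalized trace $t=k/n$ — is free from the self-adjoint element $X$ of distribution $\mu$. Lemma \ref{lem:nica-speicher} then says that $t^{-1}\tilde P X\tilde P$ has distribution $\mu^{\boxplus 1/t}=\mu^{\boxplus n/k}$ in the compressed space, and crucially this law is the same for every rank-$k$ projection $P$ (it depends only on $t$ and $\mu$). Since the free product trace is faithful, a self-adjoint element is positive iff the support of its distribution lies in $[0,\infty)$; as the positive rescaling by $t^{-1}$ does not change the sign of the support, $\tilde P X\tilde P\ge 0$ holds for all (equivalently, any one) rank-$k$ projection $P$ exactly when $\mathrm{supp}(\mu^{\boxplus n/k})\subseteq[0,\infty)$. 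Combining the three steps proves the equivalence. I expect the main obstacle to be the middle step: correctly setting up the partial-isometry identity so that positivity of the compressed Choi matrix is seen to be equivalent to positivity of the free compression $\tilde P X\tilde P$, while keeping the bookkeeping between the external matrix units (acting on $\mathbb C^n$) and the internal embedded ones (living in $\tilde{\mathcal M}$ and free from $X$) straight throughout; the remaining inputs are then the Choi-matrix factorization and a direct appeal to Lemma \ref{lem:nica-speicher}.
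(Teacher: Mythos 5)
Your proposal is correct and follows essentially the same route as the paper: both identify the Choi matrix with $X$ under the corner isomorphism $\tilde{\mathcal M}\cong\mathcal M_n(\mathcal M)$, reduce $k$-positivity via Proposition \ref{prop:k-positive-from-Choi}(4) to positivity of the compression of $X$ by a free trace-$k/n$ projection, and conclude with Lemma \ref{lem:nica-speicher}. Your partial-isometry computation with $R$ and $S$ is simply an explicit, hands-on verification of the isomorphism the paper invokes abstractly (and your remark on faithfulness of the free product trace makes explicit a point the paper leaves implicit).
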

\begin{proof}
The map $\Phi_\mu$ was constructed in such a way that its Choi matrix has distribution $\mu$. Indeed, the von Neumann algebra $\mathcal M$ is a corner of $\tilde{\mathcal M}$, so the map $a \mapsto \sum_{i,j=1}^n E_{ij} \otimes E_{1i} a E_{j1}$ implements an isomorphism between $\tilde{\mathcal M}$ and $\mathcal M_n(\mathcal M)$; hence, $C_{\Phi_\mu} = X$. For any rank $k$ orthogonal projection $P \in \mathcal M_n(\mathbb C)$, the distribution of the element $(P \otimes 1_{\mathcal M}) C_{\Phi_\mu} (P \otimes 1_{\mathcal M})$ is $[(1-k/n) \delta_0 + k/n \delta_1] \boxtimes \mu$. By the Nica-Speicher result \ref{lem:nica-speicher} above, this distribution has positive support if and only if $\mathrm{supp}(\mu^{\boxplus n/k})\subseteq [0, \infty)$. The result follows now from an application of Proposition \ref{prop:k-positive-from-Choi}.
\end{proof}

The remainder of the paper is devoted to the study of the maps $\Phi_\mu$ and to applications of the above construction to Quantum Information Theory. 

\subsection{Examples}

Let us start the subsection devoted to examples by making some general purpose remarks. First, note that in order for the condition in Theorem \ref{thm:k-positivity-from-mu} to hold for some value of $k$, the measure $\mu$ must have non-negative average. Indeed, the average of $\mu^{\boxplus n/k}$ is $n/k$ times the average of $\mu$. It follows that, if one starts with a probability distribution $\mu$ having negative mean,  the measures $\mu^{\boxplus n/k}$ cannot have support in $[0, \infty)$. In particular, in order to obtain non-trivial examples, we have to rule out any distributions which are symmetric with respect to the origin.

Similarly, a measure $\mu$ whose support is already in $\mathbb{R}_+$ will not be interesting for us because the corresponding map $\Phi_\mu$ will automatically be completely positive and thus $k$-positive for all $1 \leq k \leq n$ (recall that we are ultimately interested in positive maps which are not completely positive).
In the following, we are interested in \emph{non-trivial} examples of measures $\mu$, i.e. measures such that
$$\mathrm{supp}(\mu) \nsubseteq [0, \infty) \quad \text{but} \quad \mathrm{supp}(\mu^{\boxplus T}) \subseteq [0, \infty) \text{ for some $T \geq 1$.}$$

\subsubsection{The semicircular case}
We start with semicircular distributions, which will play a central role in Section \ref{sec:GUE}. As mentioned above, in order to produce non-trivial examples of maps $\Phi_\mu$, one should rule out symmetric distributions $\mu$; we consider next shifted (non-centered) semicircle distributions. Let us consider $\mathrm{SC}_{a,\sigma}$ the Wigner semicircle distribution of mean $a$ and variance $\sigma^2$:
$$\mathrm{SC}_{a,\sigma} = \frac{\sqrt{4\sigma^2-(x-a)^2}}{2\pi \sigma^2} \mathbf{1}_{[a-2\sigma ,a+ 2\sigma]}(x) dx.$$
Note that the support of the semicircle measure is $[a-2\sigma,a+2\sigma]$. The semicircle distribution plays a role in free probability theory analogous to the role played by the Gaussian distribution in classical probability. Its free cumulants are all zero, except for the first two, which read $\kappa_1 = a$, $\kappa_2  = \sigma^2$. Since the free cumulants linearize the free additive convolution \cite[Proposition 12.3]{nsp}, their values for $\mathrm{SC}_{a,\sigma}^{\boxplus n/k}$ are $\kappa_1 = an/k$, respectively $\kappa_2=\sigma^2n/k$, all the others being zero. Hence, $\mathrm{SC}_{a,\sigma}^{\boxplus n/k}=\mathrm{SC}_{an/k,\sigma\sqrt{n/k}}$, and thus the support of $\mathrm{SC}_{a,\sigma}^{\boxplus n/k}$ is
$$\mathrm{supp}\left( \mathrm{SC}_{a,\sigma}^{\boxplus n/k} \right) = \left[\frac{an}{k} - 2\sigma\sqrt{\frac n k},\frac{an}{k} + 2\sigma\sqrt{\frac n k} \right].$$

In particular, we have:

\begin{proposition}
Let $n$ be an integer, $a \in \mathbb R$, and $\sigma>0$, and consider the map $\Phi^{(a,\sigma)}_{\mathrm{SC}}: \mathcal{M}_n(\mathbb{C})\to \mathcal M$ associated to the semicircle distribution $\mathrm{SC}_{a,\sigma}$. The map $\Phi^{(a,\sigma)}_{\mathrm{SC}}$ is $k$-positive if and only if $k\leq a^2n/(4\sigma^2)$.
In particular, for any $n$ and any $1 \leq k < n$, there exist parameters $a,\sigma>0$ such that the above map is $k$-positive but not $k+1$-positive.
\end{proposition}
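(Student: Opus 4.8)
The plan is to combine Theorem~\ref{thm:k-positivity-from-mu} with the explicit support computation carried out just above the statement. By Theorem~\ref{thm:k-positivity-from-mu}, the map $\Phi^{(a,\sigma)}_{\mathrm{SC}}$ is $k$-positive if and only if $\mathrm{supp}(\mathrm{SC}_{a,\sigma}^{\boxplus n/k}) \subseteq [0,\infty)$. Since we have already identified $\mathrm{SC}_{a,\sigma}^{\boxplus n/k} = \mathrm{SC}_{an/k,\,\sigma\sqrt{n/k}}$, whose support is the interval $[an/k - 2\sigma\sqrt{n/k},\; an/k + 2\sigma\sqrt{n/k}]$, the support condition reduces to the single scalar inequality asserting that the left endpoint is nonnegative,
$$\frac{an}{k} - 2\sigma\sqrt{\frac{n}{k}} \geq 0,$$
the right endpoint being automatically nonnegative once the left one is. So the whole problem collapses to analyzing this one inequality.

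First I would observe that a nonnegative left endpoint forces $a > 0$: with $\sigma > 0$ the term $2\sigma\sqrt{n/k}$ is strictly positive, so $an/k$ must be positive as well. (For $a \leq 0$ the map is therefore $k$-positive for no value of $k$, in agreement with the general remark that a distribution of nonpositive mean yields no $k$-positive map; thus the stated characterization is really a statement about $a>0$.) Assuming $a > 0$, I would set $s = \sqrt{n/k} > 0$ and rewrite the inequality as $a s^2 \geq 2\sigma s$, i.e. $a s \geq 2\sigma$; since both sides are positive this is equivalent, after dividing by $a$ and squaring, to $s^2 \geq 4\sigma^2/a^2$, that is $n/k \geq 4\sigma^2/a^2$, which rearranges to $k \leq a^2 n/(4\sigma^2)$. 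This establishes the claimed equivalence.

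For the \emph{in particular} clause, the characterization immediately reduces the task to producing, for given $n$ and $1 \leq k < n$, parameters $a,\sigma>0$ with $a^2 n/(4\sigma^2)$ lying in the half-open interval $[k, k+1)$, since then $k \leq a^2 n/(4\sigma^2)$ holds while $k+1 \leq a^2 n/(4\sigma^2)$ fails. As the ratio $a^2/(4\sigma^2)$ ranges over all of $(0,\infty)$, I can simply pin the threshold to the value $k$ by choosing, for instance, $\sigma = 1$ and $a = 2\sqrt{k/n}$, which gives $a^2 n/(4\sigma^2) = k$. The resulting map is then $k$-positive but not $(k+1)$-positive; the hypothesis $k < n$ guarantees that failing $(k+1)$-positivity is a nontrivial conclusion.

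Since every step after the reduction is an elementary manipulation of one inequality, I do not expect a serious obstacle here; the only points requiring a little care are the sign discussion for $a$ (the squaring step is valid precisely because both sides are positive) and the observation that checking the left endpoint alone suffices because the relevant support is a single interval.
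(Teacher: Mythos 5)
Your proof is correct and takes essentially the same route as the paper: the paper offers no separate proof, deriving the proposition directly from Theorem~\ref{thm:k-positivity-from-mu} combined with the support formula $\mathrm{supp}\bigl( \mathrm{SC}_{a,\sigma}^{\boxplus n/k} \bigr) = \bigl[\tfrac{an}{k} - 2\sigma\sqrt{n/k},\,\tfrac{an}{k} + 2\sigma\sqrt{n/k} \bigr]$ computed immediately before the statement, with the endpoint inequality $an/k \geq 2\sigma\sqrt{n/k} \iff k \leq a^2n/(4\sigma^2)$ left implicit. Your extra observation that the equivalence tacitly requires $a>0$ (for $a \leq 0$ the map is $k$-positive for no $k$, even though $a^2n/(4\sigma^2)$ may exceed $1$) is a legitimate and careful refinement of the statement as printed.
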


\subsubsection{The Marchenko-Pastur case}\label{subsubMP}

Here we consider $m_{a,t}$ to be the distribution of the random variable $1-aX_t$ where $X_t$ has a Marcenko-Pastur (or free Poisson of parameter $t$)
distribution and $a>0$.
We recall that the free Poisson law of parameter $t$ is given by (see, e.g.~ \cite[Definition 12.12]{nsp}):
$$\pi_t=\max (1-t,0)\delta_0+\frac{\sqrt{4t-(x-1-t)^2}}{2\pi x}1_{[1+t-2\sqrt{t},1+t+2\sqrt{t}]}dx,$$
and that it has the semigroup property 
$\pi_s\boxplus\pi_t=\pi_{s+t}$ (this follows from the fact that the free cumulants of $\pi_t$ read $\kappa_p = t$, $\forall p \geq 1$). In particular, this implies for our purposes that the bottom of the spectrum of
$m_{a,t}$ is located at
$1-a(1+t+2\sqrt{t})$. It follows from Nica and Speicher's Lemma \ref{lem:nica-speicher} that if $p$ is a free projection of rank $k/n$, the bottom of the spectrum of the 
operator $p(1-aX_t)p$ is located at
$$1-\frac{k}{n}a \left( 1+t\frac{n}{k}+2\sqrt{tn/k} \right).$$
We summarize what we obtained above as follows. 

\begin{proposition}\label{prop:mp}
Let $n$ be an integer, $a,t>0$, $\Phi^{(t)}_{\mathrm{MP}}: \mathcal{M}_n(\mathbb{C})\to \mathcal M$ whose Choi map is associated to the non-commutative random variable
$1-aX_t$.
If $1-\frac{k}{n}a(1+t\frac{n}{k}+2\sqrt{tn/k})\geq 0$ but $1-\frac{k+1}{n}a(1+t\frac{n}{k+1}+2\sqrt{tn/(k+1)})< 0$ then
the map $\Phi^{(t)}_{\mathrm{MP}}$ is $k$-positive but not $(k+1)$-positive. Moreover, for each $n,k$ there exists a choice of $a,t$ that satisfies the above inequalities.
\end{proposition}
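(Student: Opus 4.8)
The plan is to reduce the claim to the support condition of Theorem~\ref{thm:k-positivity-from-mu} and then make the support computation for the free Poisson family explicit. Writing $\mu = m_{a,t}$ for the law of $1 - aX_t$, Theorem~\ref{thm:k-positivity-from-mu} tells us that $\Phi^{(t)}_{\mathrm{MP}}$ is $k$-positive precisely when $\mathrm{supp}(\mu^{\boxplus n/k}) \subseteq [0,\infty)$. So the whole proof amounts to locating the bottom of the support of $\mu^{\boxplus n/k}$ and comparing it to $0$.

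First I would compute $\mu^{\boxplus n/k}$ explicitly. Two facts do the work: the free Poisson semigroup property $\pi_t^{\boxplus s} = \pi_{ts}$ recalled above, and the compatibility of free additive convolution powers with affine maps, namely that $(c+\beta X)^{\boxplus s}$ has the same distribution as $sc + \beta\, X^{\boxplus s}$, which is immediate on free cumulants. Taking $c=1$, $\beta=-a$, $s=n/k$ and $X = X_t$ yields that $\mu^{\boxplus n/k}$ is the law of $\frac nk - a\,X_{tn/k}$. Since $-a<0$, the bottom of the support of this law is governed by the \emph{top} of $\mathrm{supp}(\pi_{tn/k}) = (1+\sqrt{tn/k})^2$, so the bottom equals $\frac nk - a(1+\sqrt{tn/k})^2$; multiplying by $k/n>0$, this is nonnegative iff $1 - \frac kn a\big(1 + t\frac nk + 2\sqrt{tn/k}\big) \geq 0$. (Equivalently, one may apply Lemma~\ref{lem:nica-speicher} directly to the corner $p(1-aX_t)p$ with $p$ a free projection of rank $k/n$, the route already sketched before the statement.) This identifies $k$-positivity with the sign of $f(k) := 1 - \frac kn a\big(1 + t\frac nk + 2\sqrt{tn/k}\big)$, and the conditional assertion follows at once: $f(k)\geq 0$ gives $k$-positivity while $f(k+1)<0$ rules out $(k+1)$-positivity.

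For the existence statement I would exploit the identity $\frac jn a\big(1 + t\frac nj + 2\sqrt{tn/j}\big) = a\big(\sqrt{j/n} + \sqrt t\,\big)^2$, which recasts the two inequalities as $\sqrt a\,(\sqrt{k/n}+\sqrt t) \leq 1 < \sqrt a\,(\sqrt{(k+1)/n}+\sqrt t)$. Since the bracketed quantity is strictly increasing in $j$, this window is nonempty: fixing any $t>0$ and taking $\sqrt a = (\sqrt{k/n}+\sqrt t)^{-1}$ forces the left inequality to hold with equality and the right one to hold strictly, because $\sqrt{(k+1)/n} > \sqrt{k/n}$. This settles the proposition for every $1 \leq k < n$.

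The computation is essentially routine once the setup is in place; the only point demanding care is the sign bookkeeping — because the random variable is $1 - aX_t$ with $a>0$, it is the upper edge of the free Poisson support that controls the positivity threshold, and an off-by-one in which endpoint is used would corrupt the whole formula. The perfect-square factorization is the one genuinely clarifying observation, as it collapses the existence argument into a single explicit choice of parameters.
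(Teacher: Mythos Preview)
Your proof is correct and follows essentially the same route as the paper: both compute the bottom of the support via the free Poisson semigroup property $\pi_s \boxplus \pi_t = \pi_{s+t}$ combined with the Nica--Speicher corner/Theorem~\ref{thm:k-positivity-from-mu} reduction, arriving at the same expression $1 - \frac{k}{n}a(1+t\frac{n}{k}+2\sqrt{tn/k})$. The paper in fact simply \emph{states} the existence claim without argument, so your perfect-square rewrite $a(\sqrt{j/n}+\sqrt t)^2$ and the explicit choice $\sqrt a = (\sqrt{k/n}+\sqrt t)^{-1}$ supply detail the paper omits.
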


\subsubsection{The `small rank projection' case}\label{subsub:small-rank}

Now we are interested in the case where $X=1-aP$ where $P$ is a projection of rank $\varepsilon$ small (i.e.~ $\tau(P) = \varepsilon \in (0,1)$). 

\begin{proposition}
As $\varepsilon\to 0$, the threshold in $a$ to obtain $k$-positivity converges to $n/k$.
\end{proposition}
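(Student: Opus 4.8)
The plan is to read off the measure $\mu$ associated with $X = 1 - aP$ and then apply Theorem~\ref{thm:k-positivity-from-mu} directly. Since $P$ is a projection with $\tau(P) = \varepsilon$, its distribution is $(1-\varepsilon)\delta_0 + \varepsilon\delta_1$, so $X = 1 - aP$ has distribution $\mu = (1-\varepsilon)\delta_1 + \varepsilon\delta_{1-a}$. By Theorem~\ref{thm:k-positivity-from-mu}, the map $\Phi_\mu$ is $k$-positive precisely when $\mathrm{supp}(\mu^{\boxplus n/k}) \subseteq [0,\infty)$, so the entire question reduces to locating the bottom edge of $\mu^{\boxplus n/k}$ as a function of $a$ and $\varepsilon$.

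Next I would convert the free convolution power into a statement about a product of free projections via Lemma~\ref{lem:nica-speicher}. Let $q$ be a self-adjoint projection of rank $t = k/n$, free from $P$. Then the distribution of $t^{-1} q(1-aP)q = t^{-1}(q - a\,qPq)$ in the compressed space is exactly $\mu^{\boxplus n/k}$. Hence $\mathrm{supp}(\mu^{\boxplus n/k}) \subseteq [0,\infty)$ if and only if $q - a\,qPq \geq 0$ in the corner $q\mathcal A q$, and since $a>0$ the binding constraint comes from the largest eigenvalue: the condition is equivalent to $a \leq 1/\lambda_+$, where $\lambda_+ := \|qPq\|$ is the top of the spectrum of the product of the two free projections $q$ and $P$. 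Thus the $k$-positivity threshold in $a$ is exactly $a_{\max}(\varepsilon) = 1/\lambda_+$.

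The core computational step is then to determine $\lambda_+$, the upper edge of the spectral distribution of $qPq$, where $q$ and $P$ are free projections of traces $k/n$ and $\varepsilon$. This distribution is the free multiplicative convolution of two Bernoulli laws, whose support edges are classical; I would recall (or rederive via the $S$-transform of a Bernoulli law) that
$$\lambda_\pm = \frac{k}{n} + \varepsilon - 2\frac{k}{n}\varepsilon \pm 2\sqrt{\frac{k}{n}\,\varepsilon\left(1-\frac kn\right)(1-\varepsilon)}.$$
One must also check that for small $\varepsilon$ there is no atom of the distribution sitting above $\lambda_+$: an atom at $1$ only appears when $k/n + \varepsilon > 1$, which fails once $\varepsilon$ is small, so $\lambda_+$ is genuinely the top of the spectrum, and equals the operator norm $\|qPq\|$.

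Finally, I would take the limit $\varepsilon \to 0$. Directly from the displayed formula, $\lambda_+ \to k/n$, and therefore the threshold $a_{\max}(\varepsilon) = 1/\lambda_+ \to n/k$, which is the claim. I expect the only real obstacle to be pinning down the edge $\lambda_+$ and the atom structure of the product of free projections rigorously; everything else is a direct application of Theorem~\ref{thm:k-positivity-from-mu} and Lemma~\ref{lem:nica-speicher}.
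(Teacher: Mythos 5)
Your argument is correct, but it follows a genuinely different route from the paper's. The paper never identifies the law of $qPq$: it sandwiches $X=1-aP$ between shifted Marchenko--Pastur variables $1-(a+\eta)Y \leq X \leq 1-(a-\eta)Y$ --- using that $\pi_\varepsilon=(1-\varepsilon)\delta_0+\varepsilon\tilde\pi_\varepsilon$ with $\tilde\pi_\varepsilon$ supported in $[(1-\sqrt\varepsilon)^2,(1+\sqrt\varepsilon)^2]$, so that $aP$ is caught between multiples of $Y$ up to an error $\eta$ --- and then invokes the Marchenko--Pastur threshold of Proposition \ref{prop:mp}, letting $\varepsilon\to0$ and then $\eta\to0$. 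You instead reduce $k$-positivity, via Theorem \ref{thm:k-positivity-from-mu} and Lemma \ref{lem:nica-speicher}, to the condition $a\le 1/\|qPq\|$ in the corner $q\mathcal Aq$, and then evaluate $\|qPq\|$ exactly through the classical law of a product of two free projections (the free multiplicative convolution $b_{k/n}\boxtimes b_\varepsilon$ of two Bernoulli laws). Your route yields strictly more than the statement: an exact threshold $a_{\max}(\varepsilon)=1/\lambda_+(\varepsilon)$ for every fixed $\varepsilon$, of which the proposition's limit is an immediate consequence; and your side checks (the atom at $1$ has mass $\max(k/n+\varepsilon-1,0)=0$ for small $\varepsilon$, and the operator norm equals the top of the support by faithfulness of the trace) are exactly the points that need verification. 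Note also that $1-\lambda_+=\bigl(\sqrt{(1-k/n)(1-\varepsilon)}-\sqrt{k\varepsilon/n}\bigr)^2\ge 0$, so $\lambda_+\le 1$ is automatic. What the paper's route buys is economy: it recycles Proposition \ref{prop:mp}, itself a consequence of the semigroup property $\pi_s\boxplus\pi_t=\pi_{s+t}$, and never needs the two-projection spectral formula. The only step you take on faith is that formula for $\lambda_\pm=\alpha+\beta-2\alpha\beta\pm2\sqrt{\alpha\beta(1-\alpha)(1-\beta)}$; it is correct and classical (Voiculescu, see \cite{vdn}), but to make the sketch self-contained you should either cite it precisely or carry out the short $S$-transform computation $S_{b_\alpha}(z)=(1+z)/(z+\alpha)$ that you allude to.
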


\begin{proof}
Let us consider some fixed $\varepsilon >0$. Then, there exists a variable $Y$ having a Marchenko-Pastur distribution of parameter $\varepsilon$ such that, for any $\eta \geq 6\sqrt\varepsilon$,
$$1-(a+\eta )Y\leq X\leq 1-(a-\eta )Y.$$
Indeed, recall that the Marchenko-Pastur distribution of parameter $\varepsilon<1$ can be written as 
$$\pi_\varepsilon = (1-\varepsilon) \delta_0  + \varepsilon \tilde \pi_\varepsilon,$$
where $ \tilde \pi_\varepsilon$ is a probability measure supported on the compact interval $[(1-\sqrt \varepsilon)^2, (1+\sqrt \varepsilon)^2]$. The result follows after applying Proposition \ref{prop:mp} and letting first $\varepsilon\to 0$ and
then $\eta\to 0$.
\end{proof}

\subsection{Quantum states detected by ``free'' maps}

Let $\mathcal D_n$ be the set of $n$-dimensional quantum states
$$\mathcal D_n := \{\rho \in \mathcal M_n \, : \, \rho \geq 0 \text{ and } \mathrm{Tr} \rho = 1\},$$  
and consider the set of \emph{separable} states
$$\mathcal{SEP}_{n,m} = \left\{ \sum_i t_i \rho_i \otimes \sigma_i \, : \, t_i \geq 0, \rho_i \in \mathcal D_n ,\sigma_i \in \mathcal D_m, \sum_i t_i=1 \right\} \subseteq \mathcal D_{nm}.$$
Given a positive map $\Phi : \mathcal M_n \to \mathcal M_d$, we introduce the convex body of $n\cdot m$-dimensional quantum states which remain positive under the partial action of $\Phi$
$$K_{\Phi, m}=\{ \rho \in \mathcal D_{nm} \, : \, [\Phi\otimes \mathrm{id}_m] (\rho) \geq 0\}.$$
We first record the following elementary result:
\begin{lemma}
For any positive map $\Phi$ and any dimension $m$, 
the set $K_{\Phi, m}$ is a convex body and it contains the set $\mathcal{SEP}_{n,m}$ of separable states. 
If $\Phi$ is \emph{not} $m$-positive, then $K_{\Phi, m}$ is a strict subset of $\mathcal D_{nm}$. 
\end{lemma}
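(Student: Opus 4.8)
The plan is to establish the three assertions of the lemma independently, since each follows fairly directly from the linearity of $\Phi\otimes\mathrm{id}_m$ together with the positivity hypothesis on $\Phi$; the only genuine input beyond routine bookkeeping is unwinding the definition of $m$-positivity for the last claim. I will treat in turn the convexity/compactness, the inclusion of the separable set, and the strict containment.

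For the first assertion, I would observe that $K_{\Phi,m}$ is the intersection of $\mathcal D_{nm}$ with the preimage $(\Phi\otimes\mathrm{id}_m)^{-1}(\mathcal P)$, where $\mathcal P$ denotes the cone of positive semidefinite operators on $\mathbb C^n\otimes\mathbb C^m$. Since $\Phi\otimes\mathrm{id}_m$ is linear, hence continuous, between finite-dimensional spaces and $\mathcal P$ is a closed convex cone, this preimage is closed and convex; intersecting with the compact convex set $\mathcal D_{nm}$ yields a compact convex set. Nonemptiness is immediate once the second assertion is available, since the maximally mixed state $(I_n/n)\otimes(I_m/m)$ is separable. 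To obtain the stronger property that $K_{\Phi,m}$ is a convex body with nonempty interior, I would exhibit an interior point: whenever $\Phi(I_n)>0$ one has $(\Phi\otimes\mathrm{id}_m)\big(I_{nm}/(nm)\big)=\Phi(I_n)\otimes I_m/(nm)>0$ strictly, so the maximally mixed state lies in the interior of both defining constraints, which covers the maps of interest here.

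For the second assertion I would simply compute, for a separable state $\rho=\sum_i t_i\,\rho_i\otimes\sigma_i$ with $t_i\ge 0$, $\rho_i\in\mathcal D_n$, $\sigma_i\in\mathcal D_m$, that
$$[\Phi\otimes\mathrm{id}_m](\rho)=\sum_i t_i\,\Phi(\rho_i)\otimes\sigma_i.$$
Each $\Phi(\rho_i)$ is positive semidefinite because $\rho_i\ge 0$ and $\Phi$ is positive, and the tensor product of two positive semidefinite operators is again positive semidefinite; summing with nonnegative weights preserves positivity, so $[\Phi\otimes\mathrm{id}_m](\rho)\ge 0$ and hence $\rho\in K_{\Phi,m}$.

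For the third assertion, the key step is to translate the failure of $m$-positivity into an explicit witness. By definition $\Phi$ is $m$-positive precisely when $\Phi\otimes\mathrm{id}_m$ is a positive map, so its failure supplies a positive element $A\ge 0$ in $\mathcal M_n\otimes\mathcal M_m$ with $[\Phi\otimes\mathrm{id}_m](A)\not\ge 0$. Such $A$ is necessarily nonzero, whence $\mathrm{Tr}(A)>0$ and $A/\mathrm{Tr}(A)\in\mathcal D_{nm}$; by linearity its image is still not positive, so $A/\mathrm{Tr}(A)\in\mathcal D_{nm}\setminus K_{\Phi,m}$, which proves the strict inclusion. I do not anticipate a serious obstacle anywhere, consistent with the lemma being elementary: the only points requiring care are the normalization of the witness (valid since $A\neq 0$) and, if one insists on the strict notion of convex body, the interior-point argument noted above.
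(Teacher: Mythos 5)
The paper records this lemma without any proof, treating it as elementary, so there is no argument of the paper's to compare yours against; your proof is correct in its essentials and is the natural one. The separability inclusion (linearity plus positivity of $\Phi$ on each $\Phi(\rho_i)\otimes\sigma_i$) and the strict-containment argument (failure of $m$-positivity yields, directly from the paper's definition of $k$-positivity, a positive $A$ with non-positive image, and normalizing by $\mathrm{Tr}(A)>0$ produces a state in $\mathcal D_{nm}\setminus K_{\Phi,m}$) are both complete and fully general.

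The one step you should tighten is the interior-point argument, which you only carry out under the extra hypothesis $\Phi(I_n)>0$; the lemma as stated applies to \emph{any} positive map, and a positive map can have singular $\Phi(I_n)$. Two repairs are available. The cheapest: since $K_{\Phi,m}\supseteq\mathcal{SEP}_{n,m}$ and the separable states contain a Euclidean ball of positive radius around the maximally mixed state (the Gurvits--Barnum theorem, \cite{gba} in this paper's bibliography), $K_{\Phi,m}$ automatically has nonempty interior in the trace-one hyperplane, with no hypothesis on $\Phi(I_n)$. Alternatively, one can salvage your perturbation argument: for any state $\rho$ one has $\rho\le I_n$, hence $0\le\Phi(\rho)\le\Phi(I_n)$, so every element of the range of $\Phi$ is supported on $\mathrm{supp}(\Phi(I_n))$; consequently $[\Phi\otimes\mathrm{id}_m](H)$ is supported on $\mathrm{supp}(\Phi(I_n))\otimes\mathbb C^m$ for every Hermitian $H$, positivity of images need only be checked on that subspace, and there the image of the maximally mixed state is bounded below by a positive constant, so your argument goes through after this compression. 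With either fix, the proof establishes the lemma in the stated generality.
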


Let us turn our attention to the particular case of a map $\Phi_\mu$ as defined in the previous sections. We start with the following proposition, showing that any non-trivial $\Phi_\mu$ detects pure entangled states. 
\begin{proposition}
Let $\mu$ be a compactly supported probability measure such that $\mathrm{supp}(\mu^{\boxplus n})\subseteq [0, \infty)$, but $\mathrm{supp}(\mu^{\boxplus n/k}) \cap (-\infty,0) \neq \emptyset$, for all $2 \leq k \leq n$. Then, the set $K_{\Phi_\mu, m}$ contains all separable states but no pure entangled state. Put differently, $\Phi_\mu$ detects perfectly the entanglement of pure states. 
\end{proposition}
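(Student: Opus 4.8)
The plan is to treat the two assertions separately. The containment $\mathcal{SEP}_{n,m}\subseteq K_{\Phi_\mu,m}$ is immediate from the preceding lemma once we observe that $\Phi_\mu$ is positive: positivity is the $k=1$ instance of Theorem \ref{thm:k-positivity-from-mu}, and the hypothesis $\mathrm{supp}(\mu^{\boxplus n})\subseteq[0,\infty)$ is exactly the condition for $1$-positivity. So the entire content lies in showing that no pure entangled state belongs to $K_{\Phi_\mu,m}$, that is, that for every $\psi\in\mathbb{C}^n\otimes\mathbb{C}^m$ of Schmidt rank $r\ge 2$ the operator $[\Phi_\mu\otimes\mathrm{id}_m](|\psi\rangle\langle\psi|)$ fails to be positive. (Here I represent the target von Neumann algebra $\mathcal M$ faithfully on a Hilbert space $\mathcal H$, so that both positivity of the output and Proposition \ref{prop:k-positive-from-Choi} apply.) Since the Schmidt rank of any vector is at most $\min(n,m)\le n$, the relevant ranks are exactly $2\le r\le n$.

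The key step is to reduce the non-positivity of the output of a fixed pure state to the non-positivity of a single sandwiched Choi matrix. Writing the Schmidt decomposition $\psi=\sum_{i=1}^r\lambda_i\,e_i\otimes f_i$ with $\lambda_i>0$ and orthonormal families $(e_i)$, $(f_i)$, I would expand $[\Phi_\mu\otimes\mathrm{id}_m](|\psi\rangle\langle\psi|)=\sum_{i,i'}\lambda_i\lambda_{i'}\,\Phi_\mu(|e_i\rangle\langle e_{i'}|)\otimes|f_i\rangle\langle f_{i'}|$ and test it against vectors of the form $\eta=\sum_{i=1}^r\xi_i\otimes f_i$, which suffices because the output is supported on $\mathcal H\otimes\mathrm{span}\{f_i\}$. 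Orthonormality of the $f_i$ collapses the quadratic form to $\sum_{i,i'}\langle\zeta_i,\Phi_\mu(|e_i\rangle\langle e_{i'}|)\zeta_{i'}\rangle$ with $\zeta_i=\lambda_i\xi_i$; as every $\lambda_i>0$, the $\zeta_i$ range over all of $\mathcal H$, and a short computation identifies this form with $\langle x, C_{\Phi_\mu}x\rangle$ for $x=\sum_{i=1}^r\overline{e_i}\otimes\zeta_i$. Letting $\overline P$ be the rank-$r$ orthogonal projection of $\mathcal M_n(\mathbb C)$ onto $\mathrm{span}\{\overline{e_1},\dots,\overline{e_r}\}$, this yields the equivalence
\[
[\Phi_\mu\otimes\mathrm{id}_m](|\psi\rangle\langle\psi|)\ge 0 \iff (\overline P\otimes 1_{\mathcal H})\,C_{\Phi_\mu}\,(\overline P\otimes 1_{\mathcal H})\ge 0,
\]
which is precisely condition (4) of Proposition \ref{prop:k-positive-from-Choi} evaluated at the single projection $\overline P$.

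It then remains to invoke the free-probability computation already carried out in the proof of Theorem \ref{thm:k-positivity-from-mu}. Under the isomorphism identifying $C_{\Phi_\mu}$ with the self-adjoint variable $X$ of distribution $\mu$, the operator $(\overline P\otimes 1_{\mathcal H})C_{\Phi_\mu}(\overline P\otimes 1_{\mathcal H})$ becomes the compression $\overline P X\overline P$, with $X$ free from $\overline P$ and $\mathrm{tr}(\overline P)=r/n$. Exactly as in that proof, Lemma \ref{lem:nica-speicher} together with faithfulness of the trace shows that $\overline P X\overline P\ge 0$ if and only if $\mathrm{supp}(\mu^{\boxplus n/r})\subseteq[0,\infty)$, and --- this is the point --- the spectral distribution involved depends only on the rank $r$ of $\overline P$, not on the particular Schmidt vectors. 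Combining with the previous step, $[\Phi_\mu\otimes\mathrm{id}_m](|\psi\rangle\langle\psi|)\ge 0$ if and only if $\mathrm{supp}(\mu^{\boxplus n/r})\subseteq[0,\infty)$. For $r=1$ this holds by hypothesis (so pure product states are indeed not detected), while for each $2\le r\le n$ the hypothesis $\mathrm{supp}(\mu^{\boxplus n/r})\cap(-\infty,0)\ne\emptyset$ forces the output to acquire a strictly negative part. Hence every pure entangled state is detected.

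The main obstacle I anticipate is the bookkeeping in the reduction step: one must verify that restricting the test vectors to $\mathcal H\otimes\mathrm{span}\{f_i\}$ loses no generality, that the strictly positive Schmidt coefficients may be absorbed without altering the signature of the quadratic form, and that the complex conjugation $e_i\mapsto\overline{e_i}$ is the correct identification so that exactly \emph{one} rank-$r$ projection appears. Once that equivalence is secured, the remainder is a direct appeal to the already-established spectral computation, whose outcome depends only on the Schmidt rank $r$; this rank-only dependence is exactly what converts the per-$k$ failure of $k$-positivity into the detection of \emph{every} entangled pure state of the corresponding Schmidt rank.
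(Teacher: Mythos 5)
Your proposal is correct and follows essentially the same route as the paper: separability handled by $1$-positivity of $\Phi_\mu$, and pure entangled states handled by reducing positivity of $[\Phi_\mu\otimes\mathrm{id}_m](|\psi\rangle\langle\psi|)$ to positivity of the Choi matrix $X$ compressed by a rank-$r$ projection ($r\geq 2$ the Schmidt rank), then invoking the Nica--Speicher computation, which depends only on $r$. Your step of absorbing the strictly positive Schmidt coefficients into the test vectors is exactly the ``Schur map argument'' that the paper's proof invokes tersely, so your write-up is a more detailed rendering of the same proof rather than a different one.
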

\begin{proof}
The claim about separable states follows from the fact that the map $\Phi_\mu$ is (1-)positive. 
Let $p$ be a rank one self-adjoint projection on $\mathcal M_n(\mathbb C)\otimes \mathcal M_k(\mathbb C)$ that is not separable (i.e.~ $p$ is a multiple of an entangled state). Let  $q \in\mathcal M_n(\mathbb C) $ be the orthogonal projection on the range of the partial trace of $p$; since $p$ is entangled, $\mathrm{rk}(q) \geq 2$. 

By direct inspection, studying the positivity of $[\Phi_\mu \otimes \mathrm{id}_m](p)$ is exactly the same problem 
as studying the positivity of $(q\otimes 1_{\mathcal M}) X (q\otimes 1_{\mathcal M})$; indeed, using a Schur map argument, one can assume without loss of generality that the eigenvalues of the partial trace of $p$ are all equal.
However, by our assumptions, the latter element does not have a positive spectrum. Therefore,  $K_{\Phi_\mu, m}$ does not contain pure entangled states.
\end{proof}

Let us now introduce two intermediate sets between the set of separable states and the set of all quantum states, corresponding to the \emph{reduction criterion} \cite{cag, hho} and the \emph{positive partial transposition} (or PPT) criterion \cite{per}:
\begin{align*}
\mathcal {RED}_{n,m} &= \{ \rho \in \mathcal D_{nm} \, : \, I_n \otimes [\mathrm{Tr}_n \otimes \mathrm{id}_m](\rho) - \rho \geq 0\}\\
\mathcal {PPT}_{n,m} &= \{ \rho \in \mathcal D_{nm} \, : \, [\top_n \otimes \mathrm{id}_m](\rho) \geq 0\},
\end{align*}
where $\top$ denotes the transposition map. It is known \cite{cag} that these sets satisfy the following chain of inclusions (which all become strict for large enough values of $n,m$)
$$\mathcal {SEP}_{n,m} \subseteq \mathcal {PPT}_{n,m} \subseteq \mathcal {RED}_{n,m} \subseteq \mathcal {D}_{nm}.$$

We show next that the maps corresponding to the ``small rank projection'' measures (see Section \ref{subsub:small-rank}) detect the same entangled states as the reduction criterion, in a suitable asymptotic regime.

\begin{theorem}\label{thm:alg-red}
In the setup of subsection \ref{subsub:small-rank}, 
with parameters $a=n$, $\varepsilon \to 0$,
the convex body
$K_{\Phi, m}$ tends to $\mathcal{RED}_{n,m}$. 
\end{theorem}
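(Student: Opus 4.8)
The plan is to recast both the free map $\Phi=\Phi^{(a,\varepsilon)}$ (with $a=n$) and the reduction map into the same normal form, and then to show that the two defining operator inequalities of the respective convex bodies coincide in the limit $\varepsilon\to0$. The first, purely formal, step is to recognize $\mathcal{RED}_{n,m}$ as a $K$-body: writing $R_n(x)=\mathrm{Tr}(x)I_n-x$ for the reduction map, the definition gives $[R_n\otimes\mathrm{id}_m](\rho)=I_n\otimes\rho_B-\rho$ with $\rho_B=[\mathrm{Tr}_n\otimes\mathrm{id}_m](\rho)$, so that $\mathcal{RED}_{n,m}=K_{R_n,m}$. Next I would put $\Phi$ in the same form. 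Since $X=1-aP=1-nP$, we have $\Phi(E_{ij})=E_{1i}XE_{j1}=\delta_{ij}1_{\mathcal M}-n\,E_{1i}PE_{j1}$, i.e.\ $\Phi=\Phi_0-n\Phi'$ with $[\Phi_0\otimes\mathrm{id}_m](\rho)=1_{\mathcal M}\otimes\rho_B$ and $\Phi'(E_{ij})=E_{1i}PE_{j1}$. Exactly as in the proof of Theorem \ref{thm:k-positivity-from-mu}, the Choi matrix of $\Phi'$ is $P$ under $\tilde{\mathcal M}\cong\mathcal M_n(\mathcal M)$; as $P\ge0$ this makes $\Phi'$ completely positive.

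The two maps then share a parallel structure,
\begin{align*}
[\Phi\otimes\mathrm{id}_m](\rho) &= 1_{\mathcal M}\otimes\rho_B - n[\Phi'\otimes\mathrm{id}_m](\rho), \\
[R_n\otimes\mathrm{id}_m](\rho) &= I_n\otimes\rho_B - \rho,
\end{align*}
with identical ``partial-trace'' terms, and with $\rho=[\mathrm{id}_n\otimes\mathrm{id}_m](\rho)$ being the image of the completely positive identity map, whose Choi matrix is $n$ times the Bell projection. Membership in the two bodies is therefore governed by the operator inequalities $n[\Phi'\otimes\mathrm{id}_m](\rho)\le 1_{\mathcal M}\otimes\rho_B$ and $\rho\le I_n\otimes\rho_B$; conjugating by $\rho_B^{-1/2}$ on the support of $\rho_B$, each reduces to comparing with $1$ a single operator norm, namely $\|(1_{\mathcal M}\otimes\rho_B^{-1/2})\,n[\Phi'\otimes\mathrm{id}_m](\rho)\,(1_{\mathcal M}\otimes\rho_B^{-1/2})\|$ on the free side and $\|(I_n\otimes\rho_B^{-1/2})\rho(I_n\otimes\rho_B^{-1/2})\|$ on the reduction side.

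The heart of the argument is the third step: showing that as $\varepsilon\to0$ the rescaled completely positive map $n\Phi'$, whose Choi matrix is the trace-$\varepsilon$ free projection $nP$, acts on states like the identity map at the level of the \emph{top of the spectrum}, so that the two weighted norms above converge to one another uniformly over $\rho\in\mathcal D_{nm}$. Here I would use the free-probabilistic structure of $P$ (a projection of trace $\varepsilon$, free from the copy of $\mathcal M_n$) together with the sandwiching estimate of Section \ref{subsub:small-rank}, which traps $X$ between two shifts $1-(a\pm\eta)Y$ of a free-Poisson element $Y$ for $\eta\ge 6\sqrt\varepsilon$. This yields two-sided operator bounds on $n[\Phi'\otimes\mathrm{id}_m](\rho)$ whose gap closes as $\varepsilon\to0$, squeezing $K_{\Phi,m}$ between two bodies that both converge to $K_{R_n,m}$; Hausdorff convergence $K_{\Phi,m}\to\mathcal{RED}_{n,m}$ then follows by compactness of the state space.

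The main obstacle is precisely this spectral-edge control. Convergence of moments, or of the trace-weighted spectral distribution, is useless: as the maximally entangled example already shows, the eigenvalue responsible for detection lives on a subspace of trace $O(\varepsilon)$ and is thus invisible to any trace-weighted limit (the projection distribution $\varepsilon\delta_1+(1-\varepsilon)\delta_0$ tends weakly to $\delta_0$ while $1$ remains in its spectrum). One must therefore track the genuine edge of the spectrum of $n[\Phi'\otimes\mathrm{id}_m](\rho)$ — an operator-norm statement — \emph{uniformly} in $\rho$, and simultaneously cope with the non-commutativity of the two summands $1_{\mathcal M}\otimes\rho_B$ and $n[\Phi'\otimes\mathrm{id}_m](\rho)$, which fail to commute once $\rho_B$ is non-scalar. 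This is exactly the regime in which the support-based free-convolution machinery of the paper, rather than any moment computation, is indispensable.
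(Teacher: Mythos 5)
Your setup coincides with the paper's: the decomposition $\Phi=\Phi_0-n\Phi'$, with $[\Phi_0\otimes\mathrm{id}_m](\rho)=1_{\mathcal M}\otimes\rho_B$ and $\Phi'$ the completely positive map whose Choi matrix is $P$, is exactly how the paper's proof begins, and you correctly identify that everything hinges on an operator-norm (not moment-level) statement about $n[\Phi'\otimes\mathrm{id}_m](\rho)$, uniform over $\rho\in\mathcal D_{nm}$. But that statement is precisely what your proposal does not prove, and the tool you propose cannot deliver it. The sandwich $1-(n+\eta)Y\le 1-nP\le 1-(n-\eta)Y$ does transfer to the convex bodies (writing $\Lambda_A$ for the partial action of the map with Choi matrix $A$, the map $A\mapsto\Lambda_A(\rho)$ is monotone for fixed $\rho\ge 0$, since $\Lambda_{B-A}$ is completely positive when $B-A\ge0$; hence $K_{\Lambda_{1-(n+\eta)Y},m}\subseteq K_{\Phi,m}\subseteq K_{\Lambda_{1-(n-\eta)Y},m}$), but this merely replaces the trace-$\varepsilon$ projection $P$ by the free Poisson element $Y$: for $Y$ you face exactly the same unproved claim, namely that the partial action of the map with Choi matrix $1-nY$ tends to the reduction map. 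The Nica--Speicher/Proposition~\ref{prop:mp} machinery you invoke computes spectra of compressions $qXq$ with $q$ a projection \emph{free} from $X$; it says nothing about $\sum_{i,j}E_{1i}(nP)E_{j1}\otimes\rho_{ij}$ for an arbitrary state $\rho$, because that operator is not such a compression --- it depends on the joint non-commutative behavior of all $n^2$ blocks of $P$ simultaneously. So your ``heart of the argument'' is asserted, not established, and your own closing paragraph essentially concedes this.

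What is missing is the mechanism the paper actually uses: (i) show that the blocks $F^{(\varepsilon)}_{ij}=nE_{1i}PE_{j1}$ are \emph{approximate matrix units}, i.e.\ $\|F^{(\varepsilon)}_{ij}F^{(\varepsilon)}_{kl}-\delta_{jk}F^{(\varepsilon)}_{il}\|\to0$, where the key estimate $\|nPE_{jk}P-\delta_{jk}P\|\to0$ comes from Bercovici--Voiculescu \emph{superconvergence} applied to $D_\varepsilon\bigl[\bigl(n^{-1}\delta_{n-1}+(1-n^{-1})\delta_{-1}\bigr)^{\boxplus 1/\varepsilon}\bigr]$ --- this is exactly the support-versus-weak-convergence issue you flag, resolved by a result you never invoke; (ii) apply Blackadar's stability theorem to replace the $F^{(\varepsilon)}_{ij}$ by exact matrix units $G^{(\varepsilon)}_{ij}$ with $\|F^{(\varepsilon)}_{ij}-G^{(\varepsilon)}_{ij}\|\to0$; (iii) use the algebraic lemma that for \emph{any} system of matrix units $G_{ij}$ in a $*$-algebra, positivity of $1\otimes\rho_B-\sum_{i,j}G_{ij}\otimes\rho_{ij}$ is equivalent to the reduction criterion for $\rho$. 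Then the bound $\bigl\|[\Phi\otimes\mathrm{id}_m](\rho)-\bigl(1\otimes\rho_B-\sum_{i,j}G^{(\varepsilon)}_{ij}\otimes\rho_{ij}\bigr)\bigr\|\le n^2\max_{i,j}\|F^{(\varepsilon)}_{ij}-G^{(\varepsilon)}_{ij}\|$ holds uniformly in $\rho$ because $\|\rho_{ij}\|\le1$; this also sidesteps your conjugation by $\rho_B^{-1/2}$, which is itself problematic since $\rho_B$ can be singular or arbitrarily ill-conditioned, destroying uniformity over $\mathcal D_{nm}$. Without (i)--(iii) or a substitute, your argument is a correct reduction of the theorem to its own central difficulty, which remains unresolved.
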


Before we prove this result, let us provide an equivalent, but more general form of the reduction criterion. 

\begin{lemma}
Let $\Phi:\mathcal M_n(\mathbb C) \to \mathcal A$ be a (non-unital) embedding of $*$-algebras, sending the matrix units $E_{ij}$ of $\mathcal M_n(\mathbb C)$ to elements $F_{ij} \in \mathcal A$. Then, for any integer $m$ and for any quantum state $\rho = \sum_{ij=1}^n E_{ij} \otimes \rho_{ij} \in \mathcal D_{nm}$, 
 $$I_n \otimes [\mathrm{Tr}_n \otimes \mathrm{id}_m](\rho) - \rho \geq 0 \iff 1_{\mathcal A}\otimes \sum_{i=1}^n \rho_{ii} - \sum_{i,j=1}^n F_{ij} \otimes \rho_{ij} \geq 0.$$
\end{lemma}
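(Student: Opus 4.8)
The plan is to exploit that $\Phi$, being an injective $*$-homomorphism, carries the matrix units $E_{ij}$ to a system of matrix units $F_{ij}$ in $\mathcal A$ (so $F_{ij}F_{kl} = \delta_{jk}F_{il}$ and $F_{ij}^* = F_{ji}$), and that $q := \sum_{i=1}^n F_{ii} = \Phi(I_n)$ is a projection satisfying $qF_{ij} = F_{ij}q = F_{ij}$. Thus $\Phi$ is a $*$-isomorphism between $\mathcal M_n(\mathbb C)$ and its image $\mathcal B := \Phi(\mathcal M_n(\mathbb C)) \subseteq q\mathcal A q$, with $\Phi(I_n) = q$. I write $R := \sum_{i=1}^n \rho_{ii} = [\mathrm{Tr}_n\otimes\mathrm{id}_m](\rho) \geq 0$ and $\Phi_m := \Phi\otimes\mathrm{id}_m$, noting that $\Phi_m(\rho) = \sum_{i,j} F_{ij}\otimes\rho_{ij}$ and $\Phi_m(I_n\otimes R) = q\otimes R$. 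The only obstruction to a one-line argument is the non-unitality: $q$ need not equal $1_{\mathcal A}$, so the two sides of the asserted equivalence live on slightly different supports, and the whole point is to reconcile $q$ with $1_{\mathcal A}$ in each direction.

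For the forward implication, I would apply the $*$-homomorphism $\Phi_m$, which is positive, to the hypothesis $I_n\otimes R - \rho \geq 0$, obtaining $q\otimes R - \Phi_m(\rho)\geq 0$. Since $1_{\mathcal A} - q \geq 0$ and $R\geq 0$, the element $(1_{\mathcal A} - q)\otimes R$ is positive, and adding it gives exactly $1_{\mathcal A}\otimes R - \sum_{i,j} F_{ij}\otimes\rho_{ij}\geq 0$, the right-hand condition.

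For the converse, I would compress the right-hand inequality by the projection $q\otimes 1_m$; compression by a projection preserves positivity. Using $qF_{ij}q = F_{ij}$ and $(q\otimes 1_m)(1_{\mathcal A}\otimes R)(q\otimes 1_m) = q\otimes R$, the compression yields $q\otimes R - \Phi_m(\rho)\geq 0$, an inequality now taking place entirely inside $\mathcal B\otimes \mathcal M_m$. Applying the $*$-isomorphism $\Phi^{-1}\otimes\mathrm{id}_m : \mathcal B\otimes\mathcal M_m \to \mathcal M_n(\mathbb C)\otimes\mathcal M_m$, which preserves positivity and sends $q\mapsto I_n$ and $\sum_{i,j}F_{ij}\otimes\rho_{ij}\mapsto\rho$, returns $I_n\otimes R - \rho\geq 0$, the reduction criterion. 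The main step to get right is this corner argument in the converse: one must verify that after compressing by $q\otimes 1_m$ the surviving element genuinely lies in the copy $\mathcal B\cong \mathcal M_n(\mathbb C)$ so that $\Phi^{-1}$ can be applied, which is precisely what the matrix-unit relations $qF_{ij}q = F_{ij}$ guarantee.
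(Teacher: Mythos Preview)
Your argument is correct. The paper itself states this lemma without proof, treating it as an auxiliary observation preceding the proof of Theorem~\ref{thm:alg-red}, so there is nothing to compare against; your write-up in fact supplies the details the paper omits. The forward direction via the completely positive map $\Phi\otimes\mathrm{id}_m$ followed by adding the positive correction $(1_{\mathcal A}-q)\otimes R$ is clean, and in the converse your compression by $q\otimes 1_m$ together with the $*$-isomorphism $\Phi^{-1}\otimes\mathrm{id}_m$ is exactly the right mechanism. The one point worth making explicit is why an element of $\mathcal B\otimes\mathcal M_m$ that is positive in $\mathcal A\otimes\mathcal M_m$ is also positive in $\mathcal B\otimes\mathcal M_m$: this uses that $\mathcal B\cong\mathcal M_n(\mathbb C)$ is a finite-dimensional (hence closed) $*$-subalgebra of the corner $q\mathcal A q$, so spectral permanence applies; you allude to this but it could be stated outright.
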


\begin{proof}[Proof of Theorem \ref{thm:alg-red}]

As in the context of section \ref{subsub:small-rank} we are looking at a Choi matrix of the form $X_\varepsilon=1-nP$ where $P$ is a projection of trace $\varepsilon$. Let $\Phi_\varepsilon$ be the corresponding map (obtained from $X_\varepsilon$ via Theorem \ref{thm:k-positivity-from-mu}).

For a given $\mathcal D_{nm} \ni \rho = \sum_{ij=1}^n E_{ij} \otimes \rho_{ij} \in \mathcal D_{nm}$, we have
$$[\Phi_\varepsilon \otimes \mathrm{id}_m](\rho) = 1_{\mathcal M} \otimes \sum_{i=1}^n \rho_{ii} - \sum_{i,j=1}^n F^{(\varepsilon)}_{ij} \otimes \rho_{ij} \geq 0,$$
where $F^{(\varepsilon)}_{ij} = n E_{1i}P E_{j1}$. Let us show now that the $F^{(\varepsilon)}_{ij}$ are \emph{approximate matrix units}, i.e.~ they verify the $*$-algebra relations for matrix units up to some vanishing precision as $\varepsilon \to 0$. To show that, as $\varepsilon \to 0$, for all $i,j,k,l$,
\begin{equation}\label{eq:ref}
\|F^{(\varepsilon)}_{ij}F^{(\varepsilon)}_{kl} - \delta_{jk}F^{(\varepsilon)}_{il} \|\to 0.
\end{equation}
we just have to compute
$$n^2E_{1i}PE_{j1}E_{1k}PE_{l1}-n\delta_{jk}E_{1i}PE_{l1}= nE_{1i} (nPE_{jk}P-\delta_{jk}P)E_{l1}.$$
We show next that, as $\varepsilon \to 0$, 
\begin{equation}\label{eq:approx-matrix-units}
\| nPE_{jk}P-\delta_{jk}P \|\to 0.
\end{equation}
Let us first consider the case $j=k$. Again from Nica and Speicher's Lemma \ref{lem:nica-speicher}, it follows that the distribution of the element $P(nE_{jj}-I)P$ in the reduced non-commutative probability space $P \mathcal M P$ is that of the dilation of the  free additive power $D_\varepsilon\left[\left(n^{-1}\delta_{n-1} + (1-n^{-1})\delta_{-1}\right)^{\boxplus 1/\varepsilon}\right]$. Let $\varepsilon = 1/N$, for some (large) integer $N$. Then, by the super-convergence result of Bercovici and Voiculescu \cite[Theorem 7]{bvo} (see also \cite[Example 1]{kar}), we have that the support of 
$$ D_{N^{-1/2}}\left[\left(n^{-1}\delta_{n-1} + (1-n^{-1})\delta_{-1}\right)^{\boxplus N}\right]$$
is contained, for $N$ large enough, into some fixed compact interval. The normalization in $N^{-1}$ shows the desired convergence in norm to zero. The case $j \neq k$ is treated in a similar manner, by considering the symmetrized version $\|P(E_{jk}  +E_{kj})P \| = 2\| PE_{jk}P\|$.

By multiplying on the left and on the right equation \eqref{eq:approx-matrix-units} by $E_{1i}$, resp. $E_{l1}$ which are both of
operator norm one, one obtains \eqref{eq:ref}; note that in all the computations above, $n$ is a fixed parameter.

It follows from \cite[Proposition II.8.3.22]{bla} that approximate matrix units can be approximated by \emph{true} (exact) matrix units: there exist true matrix units $G^{(\varepsilon)}_{ij} \in \mathcal M$ such that, for all $i,j$, $\|F^{(\varepsilon)}_{ij} - G^{(\varepsilon)}_{ij}\| \to 0$ as $\varepsilon \to 0$. Define 
$$\Psi_{\varepsilon,m} (\rho) = 1_{\mathcal M} \otimes \sum_{i=1}^n \rho_{ii} - \sum_{i,j=1}^n G^{(\varepsilon)}_{ij} \otimes \rho_{ij} \geq 0.$$
By the previous lemma, $\rho \in \mathcal{RED}_{n,m}$ iff. $\Psi_{\varepsilon,m} (\rho) \geq 0$ in $\mathcal M$. But we have
$$\| [\Phi_\varepsilon \otimes \mathrm{id}_m](\rho)  - \Psi_{\varepsilon,m} (\rho) \| = \left\|\sum_{i,j=1}^n (F^{(\varepsilon)}_{ij} - G^{(\varepsilon)}_{ij}) \otimes \rho_{ij} \right\| \leq n^2\sup_{i,j}\|F^{(\varepsilon)}_{ij} - G^{(\varepsilon)}_{ij}\|  \to 0,$$
finishing the proof. 
\end{proof}

In other words, an infinitesimally small free projection captures exactly the set of quantum states whose entanglement is detected by the reduction criterion. Since $\mathcal{RED}_{n,m} \supseteq \mathcal{PPT}_{n,m}$, an important question would is to check whether the sets $K_{\Phi_\mu, m}$ can be compared to $\mathcal{PPT}_{n,m}$ for appropriate measures $\mu$. Although we do not address this question directly, we will see with Theorem \ref{thm:GUE-PPT-entangled} that with random techniques, a result of this flavor holds.

We finish this section by a general open question. For all $n,k,l$, we introduce the set 
$$K^{free}_{n,k,m} := \bigcap_{\mu \, : \, \mathrm{supp} \left( \mu^{\boxplus n/k} \right) \subset [0, \infty)} K_{\Phi_\mu , m}.$$
We call it the collection of \emph{freely detectable quantum states} in  $\mathcal D_{nm}$.

In view of the preliminary results of this section, we believe that this set is of some interest in quantum information theory. 
The sequel of this paper will show that suitable approximations of this
set have a very strong entanglement detection power. This leads us to stating the following question, that has to our mind an interest both in free probability and quantum information theory:

\begin{question}
Can one give a description of $K^{free}_{n,k,m}$? We know this set contains the collection $k$-separable states (Theorem \ref{thm:k-positivity-from-mu}), but how much bigger is it? Are there values of the parameters $n,m,m$ for which the set $K^{free}_{n,k,m}$ is precisely the set of $k$-separable states from $\mathcal D_{nm}$?
\end{question}

\section{A finite dimensional, random version}
\label{sec:RMT}

Theorem \ref{thm:k-positivity-from-mu} and the construction of the maps $\Phi_\mu$ give a way to construct new families of positive but not completely positive maps. The results in Section \ref{sec:free-prob-version} are not very concrete in the sense that the compression of free products of von Neumann algebras is a rather abstract object. In this section, we show that it is possible to interpret the previous results as the limiting case of \emph{random} maps between finite dimensional matrix algebras. In particular, we obtain a large class of linear maps between matrix algebras that are $k$ positive but not $k+1$ positive. For this we need a result about norm approximations. 

\subsection{Norm behaviour of products of random matrices}

We recall that if $X$ is a $d$-dimensional self-adjoint matrix, its \emph{empirical eigenvalue distribution} is $d^{-1}\sum_{i=1}^{d}\delta_{\lambda_{i}}$, where $\lambda_{i}$ are the eigenvalues of $X$.
For any probability measure $\mu$ on the real line, its \emph{repartition function} is defined as $F_{\mu}: t  \mapsto \mu ((-\infty,t])$.

For the purposes of this section, we will say that a sequence of repartition functions $f_{d}$ converges to a repartition function $f$ if for all $\varepsilon >0$,
 there exists an $d_{0}$ such that for all $d\geq d_{0}$,
$$    \forall t\in\mathbb R,\quad f(t-\varepsilon )-\varepsilon \leq f_{d}(t)\leq f(t+\varepsilon ) +\varepsilon.$$

\begin{theorem}\label{thm:RMT-norm}
Let $A_{d},B_{d}$ be independent positive self-adjoint random matrices in $\mathcal{M}_{d}(\mathbb C)$, such that at least one of $A_{d}$ or $B_{d}$ has a distribution invariant under unitary conjugation. 
Let $f_{d}$ be the repartition function of $A_{d}$
and $g_{d}$ be the repartition function of $B_{d}$. Assume that the (a priori random) repartition  functions $f_{d},g_{d}$ converge almost surely respectively to $f,g$ which are repartition functions of two self-adjoint, bounded and freely independent random variables $x$ and $y$. Assume also that the operator norm of $A_d$ (resp. $B_d$) converges to the operator norm of $x$ (resp. $y$). Then, almost surely as $d \to \infty$,
$$||A_{d}B_{d}|| \to ||xy||.$$
\end{theorem}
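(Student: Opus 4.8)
The plan is to reduce the statement, which concerns the operator norm of the \emph{non}-self-adjoint product $A_d B_d$, to the convergence of the largest eigenvalue of an associated positive self-adjoint matrix. Since $A_d$ and $B_d$ are self-adjoint, we have $\|A_d B_d\|^2 = \|(A_d B_d)(A_d B_d)^*\| = \|A_d B_d^2 A_d\|$, and the limiting quantity is $\|xy\|^2 = \|x y^2 x\|$, the norm of a positive element of the tracial von Neumann algebra generated by $x$ and $y$ (note $x,y \geq 0$ as limits of positive matrices). It therefore suffices to prove that, almost surely,
\[ \lambda_{\max}(A_d B_d^2 A_d) \longrightarrow \|x y^2 x\| \quad (d \to \infty). \]

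First I would establish convergence of moments. Because $A_d$ and $B_d$ are independent and at least one of them is invariant under unitary conjugation, Voiculescu's asymptotic freeness theorem \cite{vdn} applies: almost surely $A_d$ and $B_d$ are asymptotically free, and every normalized trace of a word in $A_d, B_d$ converges to the corresponding free moment in $(x,y)$. In particular, for each fixed integer $p$,
\[ \frac1d \mathrm{Tr}\big((A_d B_d^2 A_d)^p\big) \longrightarrow \tau\big((x y^2 x)^p\big) \quad \text{almost surely,} \]
so the empirical eigenvalue distribution of $A_d B_d^2 A_d$ converges weakly almost surely to the law of $x y^2 x$, i.e.\ to $\mu_{x^2}\boxtimes\mu_{y^2}$. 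Weak convergence to a measure whose support reaches up to $\|x y^2 x\|$ forces, for large $d$, the presence of eigenvalues arbitrarily close to that edge, which yields the easy half: $\liminf_d \lambda_{\max}(A_d B_d^2 A_d) \geq \|x y^2 x\|$ almost surely.

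The main obstacle is the reverse inequality $\limsup_d \lambda_{\max}(A_d B_d^2 A_d) \leq \|x y^2 x\|$, that is, the absence of outlier eigenvalues escaping above the limiting spectral edge. The naive submultiplicative bound $\|A_d B_d\| \leq \|A_d\|\,\|B_d\| \to \|x\|\,\|y\|$ is insufficient, since freeness typically produces the strict inequality $\|xy\| < \|x\|\,\|y\|$; thus the cancellation inherent in freeness must genuinely be exploited. My plan here is a high-moment estimate: using $\lambda_{\max}(A_d B_d^2 A_d) \leq \big(\mathrm{Tr}(A_d B_d^2 A_d)^{p}\big)^{1/p} = d^{1/p}\big(\tfrac1d\mathrm{Tr}(A_d B_d^2 A_d)^{p}\big)^{1/p}$, I would let $p = p_d \to \infty$ slowly, e.g.\ $p_d \sim c\log d$, so that the dimensional factor $d^{1/p_d}$ stays bounded by a constant which tends to $1$ as $c \to \infty$. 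The crux is then a uniform control of $\mathbb E\,\tfrac1d\mathrm{Tr}\big((A_d B_d^2 A_d)^{p_d}\big)$ via a genus (Weingarten) expansion: the planar, non-crossing contributions must reproduce the bulk moment $\tau((x y^2 x)^{p_d})$ at the correct exponential rate, while crossing contributions are shown to be of lower order. Here the hypotheses $\|A_d\|\to\|x\|$ and $\|B_d\|\to\|y\|$ are essential, as they guarantee that the spectra of $A_d, B_d$ carry no outliers, so that high powers do not accumulate spurious mass; the unitary invariance is what makes the combinatorial bookkeeping tractable. This is precisely the superconvergence mechanism of Bercovici--Voiculescu \cite{bvo} and Kargin \cite{kar} invoked elsewhere in the paper, now applied to the multiplicative convolution $\mu_{x^2}\boxtimes\mu_{y^2}$.

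Finally, a concentration-of-measure estimate for the trace functionals, combined with the Borel--Cantelli lemma, upgrades the expectation bound to an almost sure statement. Taking $c \to \infty$ then gives $\limsup_d \lambda_{\max}(A_d B_d^2 A_d) \leq \|x y^2 x\|$ almost surely, which together with the lower bound completes the proof. I expect the uniform high-moment control (the no-outlier estimate through the genus expansion) to be the technical heart of the argument, the reduction and the lower bound being comparatively routine.
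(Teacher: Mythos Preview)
The paper does not give a proof of this theorem at all: immediately after the statement it simply writes ``The above result is a special case of a much more general class of results, see \cite{cm,bcn}.'' So there is nothing to compare your argument against in the paper itself; the relevant comparison is with the cited works of Collins--Male and Belinschi--Collins--Nechita.

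Those references do \emph{not} proceed via the high-moment/Weingarten route you sketch. Collins--Male obtain strong asymptotic freeness of Haar unitaries and deterministic matrices by a linearization trick that reduces the problem to polynomials in independent GUE matrices, where the Haagerup--Thorbj{\o}rnsen theorem supplies the no-outlier statement; the argument is operator-algebraic (matrix-valued Cauchy transforms, ultraproducts) rather than combinatorial. Your outline --- reduction to $\lambda_{\max}(A_dB_d^2A_d)$, lower bound from weak convergence, upper bound from growing moments $p_d\sim c\log d$ --- is a legitimate alternative strategy, and the reduction and lower bound are indeed routine.

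There is, however, a real gap in your justification of the hard direction. You write that the no-outlier estimate ``is precisely the superconvergence mechanism of Bercovici--Voiculescu \cite{bvo} and Kargin \cite{kar} \ldots now applied to the multiplicative convolution $\mu_{x^2}\boxtimes\mu_{y^2}$.'' This is a conflation of two different phenomena. Superconvergence in \cite{bvo,kar} is a purely free-probabilistic statement about the \emph{support} of $\mu^{\boxplus N}$ as $N\to\infty$; it says nothing about random matrices and in particular cannot rule out finite-$d$ outliers above the bulk edge. What you actually need is a uniform-in-$p$ bound on $\mathbb{E}\,\tfrac1d\mathrm{Tr}\big((A_dB_d^2A_d)^{p_d}\big)$ with $p_d$ growing with $d$, which is a genuinely random-matrix estimate (control of all subleading terms in the Weingarten expansion, not just their vanishing for fixed $p$). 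That estimate is delicate and is not supplied by the superconvergence references; it is essentially the content of the strong-convergence theorems you are trying to prove. So as written, the crucial step is asserted rather than argued, and the citation you give for it does not do the job.
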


The above result is a special case of a much more general class of results, see \cite{cm,bcn}.

\subsection{The finite dimensional, random matrix model}

In this section, we fix again a compactly supported probability measure $\mu$ and an integer $n$. For each integer $d$, we introduce a real-valued diagonal matrix $X_d \in \mathcal{M}_n(\mathbb{C})\otimes \mathcal{M}_d(\mathbb{C})$
whose empirical eigenvalue distribution converges to $\mu$ and whose extremal eigenvalues converge to the respective extrema of the support of $\mu$:
\begin{align*}
\lim_{d \to \infty} \lambda_{\min}(X_d) &= \min \{x : x \in \mathrm{supp}(\mu) \}\\
\lim_{d \to \infty} \lambda_{\max}(X_d) &= \max \{x : x \in \mathrm{supp}(\mu) \}.
\end{align*}

Let also $U_d$ be a random unitary matrix in $\mathcal{U}_{nd}$ distributed according to the Haar measure on
the unitary group. We introduce the map $\Phi_{\mu,d} : \mathcal{M}_n(\mathbb{C})\to \mathcal{M}_d(\mathbb{C})$ whose Choi matrix is the random matrix $C_d := U_dX_dU_d^*\in \mathcal{M}_n(\mathbb{C})\otimes \mathcal{M}_d(\mathbb{C})$; this can be done using the hermiticity preserving map $\leftrightarrow$ self-adjoint matrix incarnation of the Choi-Jamio{\l}kowski isomorphism. More precisely, we have
$$\forall X \in \mathcal{M}_n(\mathbb{C}), \quad \Phi_{\mu,d}(X) = [\mathrm{Tr} \otimes \mathrm{id}](C_d \cdot X^\top\otimes I_d).$$
We can now state the main result of this section.

\begin{theorem}\label{thm:k-positivity-from-mu-random-matrix}
The sequence of random linear maps $(\Phi_{\mu,d})_d$ has the following properties:
\begin{enumerate}
\item If $\mathrm{supp}(\mu^{\boxplus n/k})\subset (0,\infty)$, then, almost surely as $d \to \infty$, $\Phi_{\mu,d}$ is $k$-positive.
\item If $\mathrm{supp}(\mu^{\boxplus n/k})\cap (-\infty ,0)\neq \emptyset$, then, almost surely as $d \to \infty$, $\Phi_{\mu,d}$ is not $k$-positive.
\end{enumerate}
\end{theorem}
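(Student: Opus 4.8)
The plan is to read off $k$-positivity of $\Phi_{\mu,d}$ from the compression criterion of Proposition~\ref{prop:k-positive-from-Choi}(4). Since the Choi matrix of $\Phi_{\mu,d}$ is $C_d = U_d X_d U_d^*$, the map is $k$-positive if and only if $(P\otimes I_d) C_d (P\otimes I_d) \geq 0$ for every rank-$k$ orthogonal projection $P \in \mathcal M_n(\mathbb C)$. Writing $\Pi = P \otimes I_d$, a projection of normalized trace $k/n$ in $\mathcal M_{nd}(\mathbb C)$, and setting
$$g(P) := \min\{\langle v, C_d v\rangle : v \in \mathrm{range}(\Pi),\ \|v\|=1\}$$
for the smallest eigenvalue of the compression on its range, the criterion reads $\min_P g(P) \geq 0$. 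I would stress at the outset that the second assertion is the easy one: it suffices to exhibit a single projection for which positivity fails, so no control uniform in $P$ is needed there.

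For a \emph{fixed} $P$, I would compute the almost sure limit of $g(P)$. Choose a constant $M_0$ strictly above the top of $\mathrm{supp}(\mu)$, so that $B_d := M_0 I - C_d \geq 0$ almost surely for $d$ large (using the assumed convergence of the extreme eigenvalues of $X_d$). Since $\Pi$ is a projection, with $\Pi' = P'\otimes I_d$ later, one has
$$M_0 - g(P) = \lambda_{\max}\big(\Pi B_d \Pi\big) = \|\Pi B_d \Pi\| = \|\Pi B_d^{1/2}\|^2.$$
Now $\Pi$ is deterministic while $B_d^{1/2} = U_d (M_0 I - X_d)^{1/2} U_d^*$ is invariant in distribution under unitary conjugation, so by Voiculescu's asymptotic freeness theorem they become asymptotically free, with limiting laws that of a projection $q$ of trace $k/n$ and of $(M_0 - x)^{1/2}$ with $x \sim \mu$. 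Applying the norm-convergence Theorem~\ref{thm:RMT-norm} to the positive matrices $\Pi$ and $B_d^{1/2}$ (one of which is unitarily invariant, with operator norms converging by hypothesis) gives $\|\Pi B_d^{1/2}\|^2 \to \|q b^{1/2}\|^2 = \|q b q\|$ with $b = M_0 - x$. Finally I would identify this free limit exactly as in the proof of Theorem~\ref{thm:k-positivity-from-mu}: by the Nica--Speicher Lemma~\ref{lem:nica-speicher}, in the corner the rescaled compression $(k/n)^{-1} qxq$ has distribution $\mu^{\boxplus n/k}$, so that $\min\mathrm{supp}(qxq) = (k/n)\min\mathrm{supp}(\mu^{\boxplus n/k})$; since $b \geq 0$ this yields $\|qbq\| = M_0 - (k/n)\min\mathrm{supp}(\mu^{\boxplus n/k})$ and therefore
$$g(P) \xrightarrow{\ \mathrm{a.s.}\ } \frac{k}{n}\,\min\mathrm{supp}\big(\mu^{\boxplus n/k}\big).$$

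With this fixed-$P$ limit in hand, the second assertion is immediate: if $\mathrm{supp}(\mu^{\boxplus n/k})$ meets $(-\infty,0)$ the limit is strictly negative, so for any single fixed $P$ one has $g(P) < 0$ for $d$ large, and $\Phi_{\mu,d}$ is not $k$-positive. The first assertion requires a lower bound on $g(P)$ that is uniform over the \emph{continuum} of rank-$k$ projections, and this is where I expect the main difficulty. I would resolve it by a net argument exploiting that the set of rank-$k$ projections is a compact manifold whose dimension does not depend on $d$. The function $P \mapsto \|\Pi B_d \Pi\|$ is Lipschitz with constant $2\|B_d\|$, since
$$\big\|\Pi B_d \Pi - \Pi' B_d \Pi'\big\| \leq 2\|B_d\|\,\|P - P'\|,$$
and $\|B_d\| \leq M_0 + \|C_d\|$ is bounded almost surely. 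Fixing a finite $\eta$-net $\mathcal N$ of the Grassmannian, the almost sure convergence of $g(P)$ to the common positive limit $c' := (k/n)\min\mathrm{supp}(\mu^{\boxplus n/k}) > 0$ holds simultaneously for all $P \in \mathcal N$ (a finite intersection of probability-one events, together with the event $\|C_d\|\to \max\mathrm{supp}(\mu)$); the uniform Lipschitz bound then propagates a strictly positive lower bound from the net to all projections, provided $\eta$ is chosen small relative to $c'$. This delivers $\min_P g(P) > 0$ almost surely for $d$ large, hence $k$-positivity.

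The main obstacle, as indicated, is precisely the passage from the pointwise (fixed-$P$) almost sure statement to one uniform over all $P$: almost sure convergence for each individual $P$ does not by itself control the infimum over an uncountable family. The Lipschitz-plus-net mechanism is what I would use, and its validity rests on two $d$-independent facts — the compactness and fixed dimension of the Grassmannian, and the uniform boundedness of $\|C_d\|$ — both available here.
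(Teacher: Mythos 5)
Your proposal is correct and follows essentially the same route as the paper: the compression criterion of Proposition~\ref{prop:k-positive-from-Choi}, a fixed-projection eigenvalue limit obtained from Theorem~\ref{thm:RMT-norm} combined with Lemma~\ref{lem:nica-speicher}, a single projection sufficing for the negative statement, and a finite net plus the Lipschitz bound $\|\Pi C_d \Pi - \Pi' C_d \Pi'\| \le 2\|P-P'\|\,\|C_d\|$ to make the positive statement uniform over all rank-$k$ projections. The only difference is one of detail: you make explicit, via the shift $B_d = M_0 I - C_d$ and the identity $\lambda_{\max}(\Pi B_d \Pi) = \|\Pi B_d^{1/2}\|^2$, how the norm-convergence theorem for products of positive matrices yields convergence of the smallest compressed eigenvalue, a step the paper invokes without spelling out.
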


\begin{proof}
Let $C_d$ be the Choi matrix of the map $\Phi_{\mu,d}$, i.e.~ $C_d = U_dX_dU_d^*$ for a sequence of real diagonal matrices $(X_d)_d$ converging in distribution to $\mu$.
The assumption implies that there exists some positive constant $\delta >0$ such that $\mathrm{supp}(\mu^{\boxplus n/k})\subset [\delta,\infty)$. Fix $M > \max \mathrm{supp}(\mu)$ and let $\varepsilon>0$ be a small constant which will be fixed later. Fix an $\varepsilon$-net $\mathcal Q$ of projections of rank $k$ on $\mathcal{M}_n(\mathbb{C})$ (for the operator norm).
This net can be chosen to be finite because the collection of projections is a compact space. In other words, for any rank $k$ projection $P \in \mathcal M_n(\mathbb C)$, there exists a rank $k$ projection $Q \in \mathcal Q$ such that $\|P-Q\| \leq \varepsilon$. 

For a fixed element $Q \in \mathcal Q$, we obtain, using Theorem \ref{thm:RMT-norm} and Lemma \ref{lem:nica-speicher}, almost surely as $d \to \infty$,
$$\lim_{d \to \infty} \lambda_{\min}\left[\left.(Q \otimes I_d \cdot C_d \cdot Q \otimes I_d)\right|_{\mathrm{ran}(Q) \otimes \mathbb C^d}\right] = \frac{k}{n} \min \mathrm{supp}(\mu^{\boxplus n/k})\geq \frac{k \delta}{n}.$$
Taking the intersection of a finite number of almost sure events, we obtain the same conclusion for all elements $Q$ of $\mathcal Q$. Moreover, since, for all rank $k$ projections $P$, we have almost surely
$$\|P \otimes I_d \cdot C_d \cdot P \otimes I_d - Q \otimes I_d \cdot C_d \cdot Q \otimes I_d\| \leq \|P-Q\| \cdot \|C_d\| \leq M\varepsilon,$$
we conclude that, almost surely as $d \to \infty$,
$$\lim_{d \to \infty} \min_{P \text{ projection of rank $k$}} \lambda_{\min}\left[\left.(P \otimes I_d \cdot C_d \cdot P \otimes I_d)\right|_{\mathrm{ran}(P) \otimes \mathbb C^d}\right] \geq  \frac{k \delta}{n} - M \varepsilon,$$
quantity which is positive as soon as $\varepsilon < (k\delta)/(nM)$. By Proposition \ref{prop:k-positive-from-Choi},
this completes the first part.

To prove the second point, consider any fixed orthogonal projection $P \in \mathcal M_n(\mathbb C)$ of rank $k$. Using Theorem \ref{thm:RMT-norm}, we obtain that, almost surely as $d \to \infty$, the smallest non-trivial eigenvalue of the matrix $(P \otimes I_d) C_d (P \otimes I_d)$ converges to $k/n$ times the bottom of the support of $\mu^{\boxplus n/k}$, which is a negative constant. Therefore, by Proposition \ref{prop:k-positive-from-Choi}, the map $\Phi_{\mu,d}$ is not positive, almost surely as $d \to \infty$.
\end{proof}

When comparing the result above with Theorem \ref{thm:k-positivity-from-mu}, the reader will notice that one needs a gap around zero in the random matrix formulation. This is due to the fact that one needs to avoid eigenvalues fluctuating around zero in order to draw a conclusion about the positivity of a matrix. Finally, from the above theorem, one can immediately deduce the following corollary, simply by replacing the almost sure notion of convergence of random variables with the weaker one of convergence in probability. 

\begin{corollary}
The sequence of random linear maps $(\Phi_{\mu,d})_d$ has the following properties:
\begin{enumerate}
\item If $\mathrm{supp}(\mu^{\boxplus n/k})\subset (0,\infty)$, then
$$\lim_{d \to \infty} \mathbb P\left[ \Phi_{\mu,d} \text{ is $k$-positive} \right] = 1.$$
\item If $\mathrm{supp}(\mu^{\boxplus n/k})\cap (-\infty ,0)\neq \emptyset$, then
$$\lim_{d \to \infty} \mathbb P\left[ \Phi_{\mu,d} \text{ is $k$-positive} \right] = 0.$$
\end{enumerate}
\end{corollary}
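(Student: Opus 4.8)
The plan is to derive this purely formally from Theorem \ref{thm:k-positivity-from-mu-random-matrix} via the elementary principle that almost sure convergence implies convergence in probability; concretely, I will use Fatou's lemma for events. First I would fix a single probability space carrying the whole sequence $(U_d)_d$ of Haar unitaries (for instance an infinite product space), so that the events $E_d := \{\Phi_{\mu,d} \text{ is } k\text{-positive}\}$ all live on the same space and the phrase ``almost surely as $d\to\infty$'' in Theorem \ref{thm:k-positivity-from-mu-random-matrix} can be read as a statement about $\liminf_d E_d = \bigcup_{d_0}\bigcap_{d\geq d_0} E_d$, i.e.\ the event that $\Phi_{\mu,d}$ is $k$-positive for all sufficiently large $d$.

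For part (1), Theorem \ref{thm:k-positivity-from-mu-random-matrix}(1) asserts exactly that, with probability one, $\Phi_{\mu,d}$ is $k$-positive for all sufficiently large $d$; that is, $\mathbb P[\liminf_d E_d] = 1$. Fatou's lemma for sets gives $\mathbb P[\liminf_d E_d] \leq \liminf_d \mathbb P[E_d]$, whence $\liminf_d \mathbb P[E_d] \geq 1$, and since probabilities are bounded by $1$ this forces $\lim_{d\to\infty} \mathbb P[E_d] = 1$. For part (2), Theorem \ref{thm:k-positivity-from-mu-random-matrix}(2) gives $\mathbb P[\liminf_d E_d^c] = 1$; the same application of Fatou to the complements yields $\lim_{d\to\infty} \mathbb P[E_d^c] = 1$, i.e.\ $\lim_{d\to\infty} \mathbb P[E_d] = 0$.

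There is essentially no hard step here: the content is entirely in Theorem \ref{thm:k-positivity-from-mu-random-matrix}, and the corollary is a soft probabilistic repackaging. The only point requiring a word of care is the implicit choice of a common underlying probability space so that ``almost surely eventually $k$-positive'' is meaningful as $\mathbb P[\liminf_d E_d] = 1$; once that is granted the set-theoretic Fatou inequality does all the work. If one prefers to sidestep even that, I would instead argue directly at the level of the relevant random variable $\Lambda_d := \min_{P} \lambda_{\min}[(P \otimes I_d) C_d (P \otimes I_d)|_{\mathrm{ran}(P) \otimes \mathbb C^d}]$, the minimum being taken over rank $k$ projections $P \in \mathcal M_n(\mathbb C)$, which by Proposition \ref{prop:k-positive-from-Choi} satisfies $\{\Lambda_d \geq 0\} = E_d$: in case (1) the proof of Theorem \ref{thm:k-positivity-from-mu-random-matrix} shows that $\Lambda_d$ converges almost surely to a strictly positive constant, and almost sure convergence to a value bounded away from $0$ immediately gives $\mathbb P[\Lambda_d \geq 0] \to 1$, with the symmetric statement for a single fixed projection handling case (2).
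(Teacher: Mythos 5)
Your proposal is correct and follows exactly the route the paper takes: the paper deduces this corollary from Theorem \ref{thm:k-positivity-from-mu-random-matrix} in one line, by noting that almost sure convergence implies convergence in probability, which is precisely what your Fatou-for-events argument (and the alternative argument via the random variable $\Lambda_d$) makes rigorous. The only difference is that you spell out the measure-theoretic details (common probability space, $\liminf_d E_d$, Fatou's inequality) that the paper leaves implicit.
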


\section{GUE quantum states and maps}
\label{sec:GUE}

In this section, we focus on (possibly traceless) random matrices from the Gaussian Unitary Ensemble (GUE). The second chapter of the monograph \cite{agz} is an excellent exposition of the subject. The interest of GUE random matrices is two-fold. Firstly, it is the case where the norm behavior of random matrices is best understood. Secondly, GUE matrices are both unitarily invariant (their distribution is invariant with respect to unitary conjugations) and Wigner matrices (the entries of the random matrix are independent and identically distributed, up to symmetries). We write $\mathrm{i} = \sqrt{-1}$ in order to avoid confusions with the index $i$ of matrix elements. 

\begin{definition}\label{def:GUE}
A random matrix $Z \in \mathcal M_d(\mathbb C)$ is said to have the $\mathrm{GUE}_d$ distribution if its entries are as follows:
$$Z_{ij} = \begin{cases}
X_{ii}/\sqrt{d}, \quad &\text{if } i=j\\
(X_{ij} + \mathrm{i} Y_{ij})/\sqrt{2d}, \quad &\text{if } i<j\\
\bar Z_{ji}, \quad &\text{if } i>j,
\end{cases}$$
where $\{X_{ij},Y_{ij}\}_{i,j=1}^d$ are i.i.d.~ centered, standard real Gaussian random variables. 
\end{definition}

We recall next the Wigner theorem, as well as the stronger, norm convergence of GUE matrices. 

\begin{theorem}
Consider a sequence of $\mathrm{GUE}_d$ random matrices $X_d \in \mathcal M_d(\mathbb C)$. Then, almost surely as $d \to \infty$, the empirical distribution of $X_d$ converges weakly to a standard semicircular distribution: 
$$\frac{1}{d} \sum_{i=1}^d \lambda_i(X_d) \to \mathrm{SC}_{0,1}.$$
Moreover, the norm of $X_d$ converges almost surely to the maximum of the support:
$$\lim_{d \to \infty} \|X_d\| = 2.$$
\end{theorem}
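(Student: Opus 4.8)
The plan is to prove the two assertions separately by the \emph{moment method}, the standard approach (see \cite{agz}). Throughout, write $\mu_d = d^{-1} \sum_{i=1}^d \delta_{\lambda_i(X_d)}$ for the empirical spectral distribution, so that its $p$-th moment is the normalized trace $\int x^p \, d\mu_d(x) = d^{-1} \mathrm{Tr}(X_d^p)$.

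First I would establish the weak convergence $\mu_d \to \mathrm{SC}_{0,1}$ almost surely. The semicircle law $\mathrm{SC}_{0,1}$ is compactly supported, hence determined by its moments (which are $0$ in odd degree and the Catalan number $C_k = \frac{1}{k+1}\binom{2k}{k}$ in degree $2k$), so it suffices to show that for every fixed $p$ the random moment $d^{-1}\mathrm{Tr}(X_d^p)$ converges almost surely to the corresponding moment of $\mathrm{SC}_{0,1}$. Expanding $\mathrm{Tr}(X_d^p) = \sum_{i_1,\dots,i_p} Z_{i_1 i_2} Z_{i_2 i_3} \cdots Z_{i_p i_1}$ and taking expectations, Wick's theorem for the jointly Gaussian entries reduces $\mathbb{E}[d^{-1}\mathrm{Tr}(X_d^p)]$ to a sum over pair partitions of the $p$ edges of a closed walk; a genus/topological count shows that only non-crossing pairings survive in the limit, each contributing $1$, so $\mathbb{E}[d^{-1}\mathrm{Tr}(X_d^p)] \to 0$ for $p$ odd and $\to C_k$ for $p = 2k$. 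A parallel computation bounds the variance of the normalized trace by $O(d^{-2})$, which is summable in $d$; hence Borel--Cantelli upgrades convergence in expectation to almost sure convergence of each moment, and moment determinacy then yields almost sure weak convergence.

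The norm convergence requires both bounds. The lower bound $\liminf_d \|X_d\| \geq 2$ is immediate from the weak convergence just established, since $\mathrm{SC}_{0,1}$ places positive mass on every neighborhood of its edge $2$, forcing an eigenvalue into $[2-\varepsilon, \infty)$ for all large $d$. The upper bound $\limsup_d \|X_d\| \leq 2$ does \emph{not} follow from weak convergence (a few outlier eigenvalues would be invisible to $\mu_d$) and is the crux. Here I would run the moment method with a \emph{growing} exponent: for any $\delta > 0$, Markov's inequality gives
$$\mathbb{P}\big[\lambda_{\max}(X_d) \geq 2 + \delta\big] \leq (2+\delta)^{-2k} \, \mathbb{E}[\mathrm{Tr}(X_d^{2k})] \leq (2+\delta)^{-2k} \, d \cdot C_k \,(1 + o(1)),$$
provided the subleading terms in $\mathbb{E}[\mathrm{Tr}(X_d^{2k})]$ are controlled as $k$ grows with $d$. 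Choosing $k = k(d) \to \infty$ slowly (e.g. $k \sim \log d$) makes the factor $(2/(2+\delta))^{2k}$ beat the prefactor $d$, so the probabilities are summable; Borel--Cantelli gives $\limsup_d \lambda_{\max}(X_d) \leq 2 + \delta$ almost surely, and letting $\delta \downarrow 0$ along a rational sequence yields $\limsup_d \lambda_{\max}(X_d) \leq 2$. Since the $\mathrm{GUE}_d$ law is invariant under $X_d \mapsto -X_d$, the same bound holds for $-\lambda_{\min}(X_d)$, and combining gives $\|X_d\| = \max(\lambda_{\max}(X_d), -\lambda_{\min}(X_d)) \to 2$ almost surely.

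The main obstacle is the uniform combinatorial control of the subleading terms in $\mathbb{E}[\mathrm{Tr}(X_d^{2k})]$ when $k = k(d) \to \infty$: the clean limit $C_k$ captures only non-crossing pairings, and one must verify that crossing pairings and higher-genus contributions contribute a factor that remains sub-exponential in $k$ relative to $4^k$, uniformly over the chosen growth rate of $k$. This is exactly the delicate estimate that separates the (easy) bulk semicircle law from the (harder) edge/no-outliers statement; the weak-convergence half of the theorem is comparatively routine Gaussian bookkeeping.
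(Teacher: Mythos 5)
The paper does not prove this theorem at all: it is recalled as a classical result, with the reference \cite{agz} (Anderson--Guionnet--Zeitouni) standing in for the proof. Your moment-method sketch is precisely the approach taken in that reference, and it is correct in outline: Wick calculus plus the genus expansion for the bulk, a summable $O(d^{-2})$ variance bound and Borel--Cantelli for almost sure convergence, and Markov's inequality with a slowly growing power $k(d)\sim\log d$ for the edge. The one step you rightly flag as delicate --- uniform control of the crossing/higher-genus contributions to $\mathbb{E}[\mathrm{Tr}(X_d^{2k})]$ when $k$ grows --- is standard and available off the shelf: for GUE the Harer--Zagier formula gives exact control of the genus expansion (corrections enter with factors of order $k^3/d^2$ per genus, negligible for $k\sim\log d$), and the F\"uredi--Koml\'os-type argument in \cite{agz} handles it for general Wigner matrices; so your plan closes with no essential gap.
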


We start with the following simple but important lemma, which will allow us to construct different realizations of GUE matrices acting on a tensor product space. Below, that the matrices $\{E_{ij}\}$ denote the matrix units of $\mathcal M_n(\mathbb C)$. 

\begin{lemma}\label{lem:GUE-n-d-construction}
For any integers $n,d$, consider a family of i.i.d.~ $\mathrm{GUE}_d$ random matrices $\{X_{ij},Y_{ij}\}_{i,j=1}^n$. Define
\begin{align*}
\mathcal M_{nd}(\mathbb C) \ni Z &:= \frac{1}{\sqrt n}\sum_{i=1}^n E_{ii}\otimes X_{ii} \\
&\qquad + \frac{1}{\sqrt{2n}}\sum_{1 \leq i < j \leq n} E_{ij}\otimes (X_{ij} + \mathrm{i}Y_{ij}) \\
&\qquad + \frac{1}{\sqrt{2n}}\sum_{1 \leq i < j \leq n} E_{ji}\otimes (X_{ij} - \mathrm{i}Y_{ij}).
\end{align*}
Then, $Z$ has a 
$\mathrm{GUE}_{nd}$ distribution.
\end{lemma}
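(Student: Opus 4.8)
The plan is to verify directly that the matrix $Z$ defined in the statement has the entry distribution prescribed in Definition \ref{def:GUE} for a $\mathrm{GUE}_{nd}$ random matrix. The key point is to choose the correct identification of the index set of $\mathcal M_{nd}(\mathbb C)$ with the tensor-product index set $\{1,\ldots,n\} \times \{1,\ldots,d\}$, and then to track, for each pair of multi-indices, which of the three sums in the definition of $Z$ contributes, and with what Gaussian coefficient.

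First I would fix the natural basis identification $e_\alpha \otimes e_a \leftrightarrow e_{(\alpha-1)d + a}$, so that a matrix entry of $Z$ indexed by $((\alpha,a),(\beta,b))$ is the $(a,b)$-entry of the block $\langle E_\alpha, \, (\cdot) \, E_\beta\rangle$ appearing in $Z$. By inspection of the three sums, the $(\alpha,a),(\beta,b)$ entry of $Z$ is: for $\alpha=\beta$, the factor $n^{-1/2}$ times $(X_{\alpha\alpha})_{ab}$; for $\alpha<\beta$, the factor $(2n)^{-1/2}$ times $(X_{\alpha\beta}+\mathrm{i}Y_{\alpha\beta})_{ab}$; and for $\alpha>\beta$, the factor $(2n)^{-1/2}$ times $(X_{\beta\alpha}-\mathrm{i}Y_{\beta\alpha})_{ab}$. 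Since each of $X_{ij},Y_{ij}$ is itself a $\mathrm{GUE}_d$ matrix, I would then substitute the entrywise formula from Definition \ref{def:GUE} for these inner blocks, splitting into the cases $a=b$, $a<b$, $a>b$ and combining with the outer factor. The goal is to show that the resulting collection of entries $\{Z_{(\alpha,a),(\beta,b)}\}$ is exactly a $\mathrm{GUE}_{nd}$ family: self-adjoint, with diagonal entries real Gaussian of variance $1/(nd)$, and strictly-upper-triangular entries complex Gaussian whose real and imaginary parts each have variance $1/(2nd)$, all jointly independent up to the Hermitian symmetry.

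The bookkeeping breaks into a few representative cases that I would check: (i) genuine diagonal entries $\alpha=\beta,\ a=b$, coming from $n^{-1/2}\cdot d^{-1/2}(X_{\alpha\alpha})_{aa}$, giving a real Gaussian of variance $1/(nd)$; (ii) off-diagonal entries with $\alpha=\beta,\ a\neq b$, coming from the $X_{\alpha\alpha}$ block, which are already complex with the right $1/(2nd)$ variances; and (iii) entries with $\alpha\neq\beta$, where the outer $(2n)^{-1/2}$ multiplies an entire $\mathrm{GUE}_d$ block (including that block's \emph{diagonal}, which is real with variance $1/d$) — here I must confirm that the combined variance is again $1/(2nd)$ and that the real-valued nature of those inner-diagonal entries is harmless, since for $\alpha\neq\beta$ the global position $(\alpha,a),(\beta,b)$ is strictly off the global diagonal and is allowed to be complex. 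The Hermitian symmetry $Z_{(\beta,b),(\alpha,a)} = \overline{Z_{(\alpha,a),(\beta,b)}}$ follows from the third sum being the conjugate transpose of the second together with the self-adjointness of each $X_{ii}$.

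The main obstacle, and the place deserving the most care, is precisely case (iii): the inner blocks $X_{ij}$ for $i<j$ contribute their own diagonal entries $(X_{ij})_{aa}$, which by Definition \ref{def:GUE} are \emph{real} Gaussians of variance $1/d$ rather than complex ones of variance $1/(2d)$. I would need to confirm that after scaling by $(2n)^{-1/2}$ these land as off-diagonal entries of the big matrix (since $\alpha\neq\beta$) and that their being real rather than complex does not violate the $\mathrm{GUE}_{nd}$ law — the resolution being that the definition only constrains the law of each individual entry for its own position, and an off-diagonal position $(\alpha,a),(\beta,b)$ with $\alpha\neq\beta$ simply requires a complex Gaussian with independent real and imaginary parts of variance $1/(2nd)$ each; a real Gaussian of variance $1/(nd)$ would be wrong, so I must double-check that the inner $Y_{\alpha\beta}$ block supplies the matching imaginary part even on its diagonal. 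Indeed $Y_{\alpha\beta}$ is an independent $\mathrm{GUE}_d$ matrix, so $(Y_{\alpha\beta})_{aa}$ is real Gaussian of variance $1/d$, and $(2n)^{-1/2}[(X_{\alpha\beta})_{aa}+\mathrm{i}(Y_{\alpha\beta})_{aa}]$ has real and imaginary parts each of variance $(1/d)/(2n)=1/(2nd)$, exactly as required. Once this case is reconciled, independence across all positions follows from the mutual independence of the $\{X_{ij},Y_{ij}\}$ family, and the proof is complete.
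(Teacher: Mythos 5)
Your strategy---direct entrywise verification against Definition \ref{def:GUE}, after identifying the index set of $\mathcal M_{nd}(\mathbb C)$ with pairs $(\alpha,a)$---is exactly the paper's (its own proof is a two-sentence summary of the same bookkeeping), and your variance computations are correct, including the point you rightly single out: for $\alpha<\beta$ the real diagonal entries $(X_{\alpha\beta})_{aa}$ combine with the independent real Gaussians $(Y_{\alpha\beta})_{aa}$ to produce genuinely complex off-diagonal entries of $Z$ with real and imaginary parts of variance $1/(2nd)$ each.

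There is, however, one genuine gap, located at the only place in this lemma where something beyond variance bookkeeping must be checked: your closing claim that independence across all positions \emph{follows from the mutual independence of the family} $\{X_{ij},Y_{ij}\}$. That argument only covers entries coming from different blocks; it says nothing about two entries inside the \emph{same} off-diagonal block. Concretely, for $\alpha<\beta$ and $a\neq b$, the positions $((\alpha,a),(\beta,b))$ and $((\alpha,b),(\beta,a))$ are two distinct positions strictly above the global diagonal, so the $\mathrm{GUE}_{nd}$ law requires the corresponding entries to be independent; yet
\begin{align*}
Z_{(\alpha,a),(\beta,b)} &= (2n)^{-1/2}\bigl[(X_{\alpha\beta})_{ab} + \mathrm{i}(Y_{\alpha\beta})_{ab}\bigr], \\
Z_{(\alpha,b),(\beta,a)} &= (2n)^{-1/2}\bigl[\,\overline{(X_{\alpha\beta})_{ab}} + \mathrm{i}\,\overline{(Y_{\alpha\beta})_{ab}}\,\bigr]
\end{align*}
(using hermiticity of $X_{\alpha\beta}$ and $Y_{\alpha\beta}$) are functions of the \emph{same} four real Gaussians, so their independence cannot be inherited from the independence of the matrix family. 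Writing $(X_{\alpha\beta})_{ab}=u+\mathrm{i}v$ and $(Y_{\alpha\beta})_{ab}=s+\mathrm{i}t$, the two entries are $(2n)^{-1/2}[(u-t)+\mathrm{i}(v+s)]$ and $(2n)^{-1/2}[(u+t)+\mathrm{i}(s-v)]$, and independence holds because the linear forms $u-t$, $v+s$, $u+t$, $s-v$ are jointly Gaussian and pairwise uncorrelated (e.g.\ $\mathbb E[(u-t)(u+t)]=\mathbb E[u^2]-\mathbb E[t^2]=0$). This is the standard fact that $X+\mathrm{i}Y$, for independent GUE matrices $X,Y$, is a Ginibre matrix with i.i.d.\ entries subject to no Hermitian constraint. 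Once you add this covariance check (which also yields the independence of the real and imaginary parts within each such entry), your proof is complete and coincides with the paper's.
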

\begin{proof}
The $(nd)^2$ elements of $Z$ are centered, Gaussian, independent (up to symmetries) random variables. The normalizations are chosen in such a way that the variances are exactly those of a $\mathrm{GUE}_{nd}$ random matrix.
\end{proof}

\subsection{PPT entanglement}

We are now going to construct an asymptotically positive semidefinite and PPT matrix $Z_d \in \mathcal M_n(\mathbb C) \otimes \mathcal M_d(\mathbb C)$ as well as a map $\Phi_d$ which will detect the entanglement in $Z_d$. Both constructions are inspired from Lemma \ref{lem:GUE-n-d-construction}.

Consider a fixed integer $n \geq 1$ and fixed parameters $\varepsilon >0$ and $\alpha \in (0,1)$. Let $\{X_{ij}\}_{1\leq i \leq j \leq n}$ and $\{Y_{ij}\}_{1\leq i < j \leq n}$ be two families of i.i.d.~ $\mathrm{GUE}_d$ random matrices. Define the matrices 
\begin{align}
\nonumber \mathcal M_{nd}(\mathbb C) \ni Z_d :=2I_n \otimes I_d - \alpha &\left[ \frac{1}{\sqrt n}\sum_{i=1}^n E_{ii}\otimes X_{ii} \right. \\
\label{eq:Z-d} &+ \frac{1}{\sqrt{2n}}\sum_{1 \leq i < j \leq n} E_{ij}\otimes (X_{ij} + \mathrm{i}Y_{ij}) \\
\nonumber & \left. + \frac{1}{\sqrt{2n}}\sum_{1 \leq i<j \leq n} E_{ji}\otimes (X_{ij} - \mathrm{i}Y_{ij})\right],
\end{align}
and
\begin{align}
\nonumber \mathcal M_{nd}(\mathbb C) \ni C_d :=\frac{2+ \varepsilon}{\sqrt n} I_n \otimes I_d &+  \frac{1}{\sqrt n}\sum_{i=1}^n E_{ii}\otimes X_{ii}^\top \\
\label{eq:C-d} &+ \frac{1}{\sqrt{2n}}\sum_{1 \leq i < j \leq n} E_{ij}\otimes (X_{ij}^\top - \mathrm{i}Y_{ij}^\top) \\
\nonumber &+ \frac{1}{\sqrt{2n}}\sum_{1 \leq i<j \leq n} E_{ji}\otimes (X_{ij}^\top + \mathrm{i}Y_{ij}^\top),
\end{align}
and consider the linear map $\Phi_d : \mathcal M_n(\mathbb C) \to \mathcal M_d(\mathbb C)$ whose Choi matrix is $C_d$. In other words, for an input $\rho \in M_n(\mathbb C)$, 
\begin{equation}\label{eq:Phi-d}
\Phi_d(\rho) = \sum_{i,j=1}^n \rho_{ij}C_d(i,j),\end{equation}
where $C_d(i,j) \in \mathcal M_d(\mathbb C)$ denotes the $(i,j)$ block of $C_d$. 

\begin{theorem}\label{thm:GUE-PPT-entangled}
For the linear maps defined in \eqref{eq:Z-d},\eqref{eq:C-d}, and \eqref{eq:Phi-d}, the following statements hold:
\begin{enumerate}
\item For all values of $n, \alpha$, almost surely as $d \to \infty$, the random matrix $Z_d$ is positive semidefinite and PPT.
\item For all values of $n, \varepsilon$, almost surely as $d \to \infty$, the linear map $\Phi_d$ is positive.
\item If $B_d$ denotes the normalized Bell state \eqref{eq:B-d}, then, almost surely as $d \to \infty$, 
$$\lim_{d \to \infty} \langle B_d, [\Phi_d \otimes \mathrm{id}_d](Z_d) B_d \rangle =  2(2+\varepsilon)\sqrt n - \alpha n.$$
In particular, for any choice of parameters such that $2(2+\varepsilon) < \alpha \sqrt n$, the matrix $Z_d$ is, almost surely as $d \to \infty$, asymptotically \emph{PPT} and \emph{entangled}.
\end{enumerate}
\end{theorem}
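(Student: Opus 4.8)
The plan is to establish the three items separately and then combine them. The two structural facts that make everything work are consequences of Lemma \ref{lem:GUE-n-d-construction}: the noise part of $Z_d$ and the non-scalar part of the Choi matrix $C_d$ are both genuine $\mathrm{GUE}_{nd}$ matrices. This reduces items (1) and (2) to norm/support statements we already have, while item (3) becomes a direct second-moment computation.

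For item (1), I would write $Z_d = 2 I_{nd} - \alpha G$, where $G$ is the bracketed matrix in \eqref{eq:Z-d}, which by Lemma \ref{lem:GUE-n-d-construction} has a $\mathrm{GUE}_{nd}$ distribution. Norm convergence of the GUE gives $\|G\| \to 2$ almost surely, so the spectrum of $Z_d$ lies asymptotically in $[2-2\alpha,\,2+2\alpha]\subset(0,\infty)$ (here $\alpha<1$ is used), whence $Z_d\geq 0$. For the PPT property I would compute $[\top_n\otimes\mathrm{id}_d](Z_d)$: the partial transposition merely exchanges the $E_{ij}$ and $E_{ji}$ blocks, which amounts to replacing each $Y_{ij}$ by $-Y_{ij}$; since $-Y_{ij}$ has the same distribution as $Y_{ij}$, the partially transposed matrix is again $2I_{nd}-\alpha\,(\mathrm{GUE}_{nd})$ in distribution, and the same norm argument applies.

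For item (2), the analogous bookkeeping (using in addition that $X^\top$ is $\mathrm{GUE}_d$ whenever $X$ is) shows $C_d = \frac{2+\varepsilon}{\sqrt n}I_{nd} + G'$ with $G'$ a $\mathrm{GUE}_{nd}$ matrix. Since the GUE is unitarily invariant, $C_d = U_d X_d U_d^*$ for a Haar unitary $U_d$ and a diagonal $X_d$ whose empirical distribution converges to the shifted semicircle $\mu=\mathrm{SC}_{(2+\varepsilon)/\sqrt n,\,1}$, with extremal eigenvalues converging to the edges of $\mathrm{supp}(\mu)$. Thus $\Phi_d$ is a map of the type studied in Theorem \ref{thm:k-positivity-from-mu-random-matrix}, and I would invoke that theorem with $k=1$. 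Using $\mathrm{SC}_{a,\sigma}^{\boxplus n}=\mathrm{SC}_{an,\sigma\sqrt n}$ one finds $\mathrm{supp}(\mu^{\boxplus n})=[\varepsilon\sqrt n,\,(4+\varepsilon)\sqrt n]\subset(0,\infty)$, so $\Phi_d$ is $1$-positive, i.e.\ positive, almost surely.

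The heart of the argument is item (3). I would use the elementary identity $\langle B_d,(A\otimes B)B_d\rangle=\frac1d\mathrm{Tr}(A B^\top)$ together with the block decompositions $Z_d=\sum_{ij}E_{ij}\otimes (Z_d)_{ij}$ and $\Phi_d(E_{ij})=C_d(i,j)$ to rewrite the quantity of interest as $\frac1d\sum_{i,j=1}^n\mathrm{Tr}\big(C_d(i,j)\,(Z_d)_{ij}^\top\big)$. Each block trace is then evaluated asymptotically via the almost sure GUE moment laws $\frac1d\mathrm{Tr}(X^2)\to1$ and $\frac1d\mathrm{Tr}(X)\to0$, while the mixed traces $\mathrm{Tr}(X^\top Y^\top-Y^\top X^\top)$ vanish identically by cyclicity. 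The scalar$\times$scalar part of the diagonal blocks produces the leading term $2(2+\varepsilon)\sqrt n$; the quadratic GUE contributions yield $-\alpha$ from the $n$ diagonal blocks and $-2\alpha d/n$ per pair $i<j$ from the off-diagonal blocks, i.e.\ $-(n-1)\alpha$ in total after dividing by $d$ over the $\binom n2$ pairs. Summing gives the announced limit $2(2+\varepsilon)\sqrt n-\alpha n$. Finally, for the ``in particular'' claim I would argue by contradiction: $Z_d$ is PPT by item (1), and if it were separable then positivity of $\Phi_d$ (item (2)) would force $[\Phi_d\otimes\mathrm{id}_d](Z_d)\geq 0$, hence $\langle B_d,[\Phi_d\otimes\mathrm{id}_d](Z_d)B_d\rangle\geq 0$; but item (3) makes this expectation converge to $2(2+\varepsilon)\sqrt n-\alpha n$, which is strictly negative exactly when $2(2+\varepsilon)<\alpha\sqrt n$. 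I expect the main obstacle to be the bookkeeping in item (3)---getting the combinatorial factors (diagonal versus off-diagonal blocks, the $\binom n2$ pairs, the $1/\sqrt{2n}$ normalizations) to combine precisely into $-\alpha n$---together with upgrading the moment convergences from expectation to the almost sure statements; both are standard but must be carried out with care.
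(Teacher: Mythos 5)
Your proposal is correct and follows essentially the same route as the paper's proof: the same decomposition $Z_d = 2I_{nd} - \alpha\,(\mathrm{GUE}_{nd})$ with the GUE symmetry argument for positivity and PPT, the same appeal to Theorem \ref{thm:k-positivity-from-mu-random-matrix} (with $k=1$ and $\mu = \mathrm{SC}_{(2+\varepsilon)/\sqrt n,1}$) for positivity of $\Phi_d$, and the same block-trace computation via GUE moment laws for the Bell-state overlap, including the correct bookkeeping yielding $-\alpha$ from the diagonal blocks and $-(n-1)\alpha$ from the off-diagonal pairs. The only differences are that you make explicit two steps the paper leaves implicit---the unitary-invariance reduction needed to place $C_d$ in the setting of Theorem \ref{thm:k-positivity-from-mu-random-matrix}, and the contradiction argument showing separability would force a nonnegative Bell-state expectation---both of which are welcome but not a change of method.
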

\begin{proof}
For the first point, note that the matrix $Z_d$ can be written as $Z_d = 2I_{nd}-\alpha \tilde Z_d$, where $\tilde Z_d$ is a $\mathrm{GUE}_{nd}$ matrix (see Lemma \ref{lem:GUE-n-d-construction}. It follows that the support of its asymptotic spectral distribution is the interval $[2-2\alpha, 2+2\alpha]$, which is supported on the positive half-line; strong convergence of $Z_d$ proves that $Z_d$ is asymptotically positive semidefinite. It is also PPT, since transposed $\mathrm{GUE}$ matrices are still $\mathrm{GUE}$.

For the second claim, to show positivity of the map $\Phi_d$ having $C_d$ as a Choi matrix, we are going to use Theorem \ref{thm:k-positivity-from-mu-random-matrix}. Indeed, at fixed $n$, as $d \to \infty$, the random matrix $C_d$ has asymptotic spectral distribution $\mu=\mathrm{SC}_{(2+\varepsilon)/\sqrt n,1}$. We get $\mu^{\boxplus n} = \mathrm{SC}_{(2+\varepsilon)\sqrt n,\sqrt n}$, which has support $[\varepsilon\sqrt n,(4+\varepsilon)\sqrt n] \subset (0, \infty)$, hence $\Phi_d$ is asymptotically positive, by Theorem \ref{thm:k-positivity-from-mu-random-matrix}.

Let us now show that the third point holds. Denoting by $A(i,j) \in \mathcal M_d(\mathbb C)$ the $(i,j)$ block of a matrix $A \in \mathcal M_{nd}(\mathbb C)$, a direct linear algebra computation shows that
\begin{align*} \langle B_d, [\Phi_d \otimes \mathrm{id}_d](Z_d) B_d \rangle &= \frac{1}{d} \sum_{i,j=1}^n \mathrm{Tr}[Z_d(i,j) C_d(i,j)^\top]\\
&=\frac{1}{d} \left[  \frac{2(2+\varepsilon)}{\sqrt n} nd - \frac{\alpha}{n} \sum_{i=1}^n \mathrm{Tr}(X_{ii}^2)  \right. \\
&\left. \qquad-\frac{\alpha}{n}\sum_{1 \leq i < j \leq n} \mathrm{Tr}\left[ (X_{ij}+\mathrm{i}Y_{ij}) (X_{ij}-\mathrm{i}Y_{ij})\right]  \right].
\end{align*}
Using, for all $i,j$, almost surely as $d \to \infty$
$$\lim_{d \to \infty} \frac 1 d \mathrm{Tr}(X_{ij}^2) = \lim_{d \to \infty} \frac 1 d \mathrm{Tr}(Y_{ij}^2) = 1$$ 
and
$$\lim_{d \to \infty} \frac 1 d \mathrm{Tr}(X_{ij}Y_{ij}) = \lim_{d \to \infty} \frac 1 d \mathrm{Tr}(X_{ij})  =  \lim_{d \to \infty} \frac 1 d \mathrm{Tr}(Y_{ij}) =  0,$$ 
we can conclude.
\end{proof}

The construction of the matrix $Z_d$ can be interpreted as follows. Let us replace the notation $Z_d$ by $Z_d(\alpha )$, i.e. view  the Gaussian component of $Z_d$ as
a `direction' and $\alpha$ as an intensity when choosing at random from the 
vicinity of the maximally mixed state (everything up to trivial scalars). With probability $1$ we have that as $d\to\infty$, $Z_d(1)$ is a multiple of a PPT state, and that actually
it is on the boundary of PPT in the sense that if $\alpha = 1$ is replaced by a greater value of
$\alpha$, it will leave the convex cone of PPT matrices. 

The following corollary is itself of interest:
\begin{corollary}
For any $n>16$, there exist $\alpha, \varepsilon$ such that the map $\Phi_d$ detects the entanglement in the PPT matrix $Z_d$. 
 \end{corollary}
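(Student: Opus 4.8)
The plan is to derive the corollary as a direct arithmetic consequence of Theorem \ref{thm:GUE-PPT-entangled}, specifically its third point. Recall that the theorem guarantees simultaneously that $Z_d$ is asymptotically PPT (point 1), that $\Phi_d$ is asymptotically positive (point 2), and that $\Phi_d$ detects the entanglement of $Z_d$ precisely when the parameters satisfy the strict inequality $2(2+\varepsilon) < \alpha\sqrt n$. Since we also need $\alpha \in (0,1)$ for the construction of $Z_d$ in \eqref{eq:Z-d} to make sense (and for $Z_d$ to be PPT rather than merely on the boundary), the entire task reduces to showing that the system of constraints $0 < \varepsilon$, $0 < \alpha < 1$, and $2(2+\varepsilon) < \alpha\sqrt n$ admits a solution as soon as $n > 16$.

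First I would observe that the tightest demand comes from pushing $\alpha$ as close to its supremum $1$ as possible while keeping $\varepsilon$ as close to its infimum $0$ as possible. In the limiting regime $\alpha \to 1$ and $\varepsilon \to 0$, the entanglement-detection inequality becomes $2 \cdot 2 < \sqrt n$, i.e. $4 < \sqrt n$, which is equivalent to $n > 16$. This is exactly the hypothesis of the corollary, so the threshold is sharp. Concretely, given $n > 16$, one has $\sqrt n > 4$, so I can choose $\alpha \in (0,1)$ close enough to $1$ so that $\alpha \sqrt n > 4$ still holds; then, since $\alpha\sqrt n > 4 = 2(2+0)$ strictly, I can pick $\varepsilon > 0$ small enough that $2(2+\varepsilon) < \alpha\sqrt n$ continues to hold. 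Such choices exist by the openness of the strict inequalities.

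With these parameters fixed, all three conclusions of Theorem \ref{thm:GUE-PPT-entangled} apply to the corresponding random matrices $Z_d$ and maps $\Phi_d$: almost surely as $d \to \infty$, $Z_d$ is positive semidefinite and PPT, $\Phi_d$ is positive, and, because $2(2+\varepsilon) < \alpha\sqrt n$, the limiting value $2(2+\varepsilon)\sqrt n - \alpha n$ of $\langle B_d, [\Phi_d \otimes \mathrm{id}_d](Z_d) B_d\rangle$ is strictly negative. Hence $[\Phi_d \otimes \mathrm{id}_d](Z_d)$ fails to be positive semidefinite for all large $d$, which is precisely what it means for the positive map $\Phi_d$ to detect the entanglement hidden in the PPT matrix $Z_d$. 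This furnishes the required example and completes the argument.

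There is essentially no serious obstacle here, since the corollary is a feasibility statement for an elementary system of linear inequalities once Theorem \ref{thm:GUE-PPT-entangled} is in hand; the only point demanding a little care is to confirm that the two strict inequalities can be satisfied simultaneously with $\alpha$ bounded strictly below $1$ rather than equal to $1$. I would handle this by treating $\alpha$ and $\varepsilon$ as free parameters and exhibiting an explicit admissible choice, for instance $\alpha = (4 + \sqrt n)/(2\sqrt n) \in (0,1)$ when $n > 16$ together with a correspondingly small $\varepsilon$, thereby making the existence claim fully concrete.
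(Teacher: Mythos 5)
Your proposal is correct and matches the paper's (implicit) argument: the corollary is stated as an immediate consequence of Theorem \ref{thm:GUE-PPT-entangled}, and you have simply spelled out the arithmetic feasibility of $0<\varepsilon$, $0<\alpha<1$, $2(2+\varepsilon)<\alpha\sqrt n$, which holds precisely when $\sqrt n>4$, i.e. $n>16$. Your explicit choice $\alpha=(4+\sqrt n)/(2\sqrt n)$ with small $\varepsilon$ is a valid witness, so nothing is missing.
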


In other words, as soon as $n>16$, the maps we introduced in sections \ref{sec:free-prob-version} and \ref{sec:RMT} can detect PPT states that are not separable. The analysis above also shows that the maps $\Phi_d$ become very efficient as entanglement detection tools for $n$ large. Indeed, 

\begin{corollary}\label{cor:threshold-detection}
For any $1>\alpha > 4/\sqrt{n}$ the positive matrix $Z_d$ is PPT entangled.
\end{corollary}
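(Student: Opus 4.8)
The plan is to derive this corollary directly from Theorem~\ref{thm:GUE-PPT-entangled} by choosing the free parameter $\varepsilon$ appropriately. First I would observe that the entanglement-detection condition appearing in part~(3) of that theorem, namely $2(2+\varepsilon) < \alpha\sqrt n$, is equivalent to $\varepsilon < (\alpha\sqrt n - 4)/2$. The right-hand side is strictly positive precisely when $\alpha > 4/\sqrt n$, so under the hypothesis $4/\sqrt n < \alpha < 1$ the admissible window for $\varepsilon$ is nonempty. I would therefore fix any $\varepsilon \in \bigl(0, (\alpha\sqrt n - 4)/2\bigr)$ and build the corresponding matrix $C_d$ and map $\Phi_d$ exactly as in \eqref{eq:C-d} and \eqref{eq:Phi-d}; note that $Z_d$ itself, defined in \eqref{eq:Z-d}, does not depend on $\varepsilon$, so this choice only selects the witness used to detect it.

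With this choice in hand, I would invoke the three conclusions of Theorem~\ref{thm:GUE-PPT-entangled}, all holding almost surely as $d\to\infty$ and hence simultaneously on the intersection of the relevant probability-one events. Part~(1) gives that $Z_d$ is positive semidefinite and PPT, which supplies both the ``positive matrix'' assertion and the PPT half of the conclusion. Part~(2) gives that $\Phi_d$ is a positive map. Part~(3) yields $\langle B_d, [\Phi_d\otimes\mathrm{id}_d](Z_d)B_d\rangle \to 2(2+\varepsilon)\sqrt n - \alpha n$, and our choice of $\varepsilon$ makes this limit strictly negative, since dividing $2(2+\varepsilon) < \alpha\sqrt n$ by $\sqrt n$ and multiplying by $n$ gives $2(2+\varepsilon)\sqrt n < \alpha n$.

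I would then conclude as follows. Since $B_d B_d^*$ is positive semidefinite while the expectation $\langle B_d, [\Phi_d\otimes\mathrm{id}_d](Z_d)B_d\rangle$ is asymptotically negative, the operator $[\Phi_d\otimes\mathrm{id}_d](Z_d)$ fails to be positive semidefinite for all large $d$. Because $\Phi_d$ is a positive map and positive maps acting partially preserve positivity on separable states (the inclusion $\mathcal{SEP}_{n,d}\subseteq K_{\Phi_d,d}$ recorded in the lemma above), this failure certifies that $Z_d$ is not separable, i.e.~it is entangled. Combined with its PPT-ness from part~(1), this shows $Z_d$ is PPT entangled, which is the claim.

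The argument contains no genuine obstacle: it reduces to checking that the constraint window for $\varepsilon$ is nonempty, which is precisely the threshold $\alpha > 4/\sqrt n$ (and the range of admissible $\alpha$ is itself nonempty only when $n>16$, recovering the previous corollary), and then reading off the three parts of Theorem~\ref{thm:GUE-PPT-entangled}. The only point requiring minor care is the bookkeeping of almost-sure events, since all three conclusions are asymptotic; intersecting finitely many probability-one events keeps the combined statement almost sure.
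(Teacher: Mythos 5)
Your proposal is correct and is exactly the argument the paper intends: the corollary is stated as an immediate consequence of Theorem~\ref{thm:GUE-PPT-entangled}, obtained by noting that the detection condition $2(2+\varepsilon)<\alpha\sqrt n$ admits some $\varepsilon>0$ precisely when $\alpha>4/\sqrt n$ (using that $Z_d$ itself does not depend on $\varepsilon$). Your additional bookkeeping --- the negativity of the limit $2(2+\varepsilon)\sqrt n-\alpha n$, the witness argument via $\langle B_d,[\Phi_d\otimes\mathrm{id}_d](Z_d)B_d\rangle<0$ together with $\mathcal{SEP}_{n,d}\subseteq K_{\Phi_d,d}$, and the intersection of the almost-sure events --- just makes explicit what the paper leaves implicit in part~(3) of that theorem.
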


This following result is a direct extension of Theorem \ref{thm:GUE-PPT-entangled}. 

\begin{proposition}
Fix parameters $n,l$ such that $n>16 l$. For any $1>\alpha > 4\sqrt{l}/\sqrt{n}$, the positive matrix $Z_d$ from \eqref{eq:Z-d} is PPT but not $l$-separable,  with high probability as $d\to\infty$.
\end{proposition}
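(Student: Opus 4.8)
The plan is to use the very map $\Phi_d$ of \eqref{eq:C-d}--\eqref{eq:Phi-d} as an $l$-positive witness that detects $Z_d$, and then to invoke the ``easy'' direction of the duality between $l$-positivity and Schmidt number. Recall that $Z_d$ is called \emph{$l$-separable} if it is a convex combination of pure states of Schmidt rank at most $l$. By item (2) of Proposition \ref{prop:k-positive-from-Choi} applied to vectors of Schmidt rank $\le l$, an $l$-positive map $\Phi$ satisfies $[\Phi \otimes \mathrm{id}_m](p) \ge 0$ for every pure state $p$ of Schmidt rank $\le l$, hence, by convexity, $[\Phi \otimes \mathrm{id}_m](\rho) \ge 0$ for every $l$-separable $\rho$. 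Consequently, if $\Phi_d$ is $l$-positive and yet $[\Phi_d \otimes \mathrm{id}_d](Z_d) \not\ge 0$, then $Z_d$ cannot be $l$-separable. The positivity and PPT-ness of $Z_d$ are already delivered by item (1) of Theorem \ref{thm:GUE-PPT-entangled}, so the whole proof reduces to arranging that $\Phi_d$ is simultaneously $l$-positive and entanglement-detecting for $Z_d$.

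First I would pin down, as a function of the free parameter $\varepsilon > 0$, when $\Phi_d$ is asymptotically $l$-positive. Exactly as in the proof of item (2) of Theorem \ref{thm:GUE-PPT-entangled}, the asymptotic spectral distribution of $C_d$ is $\mu = \mathrm{SC}_{(2+\varepsilon)/\sqrt n,\,1}$; the semicircular computation then gives $\mu^{\boxplus n/l} = \mathrm{SC}_{(2+\varepsilon)\sqrt n/l,\,\sqrt{n/l}}$, whose support has left endpoint $\frac{\sqrt n}{l}[(2+\varepsilon) - 2\sqrt l]$. This is strictly positive precisely when $2+\varepsilon > 2\sqrt l$, and in that regime $\mathrm{supp}(\mu^{\boxplus n/l}) \subset (0,\infty)$, so Theorem \ref{thm:k-positivity-from-mu-random-matrix}(1) guarantees that $\Phi_d$ is $l$-positive, almost surely for all large $d$.

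Next I would impose the detection requirement. By item (3) of Theorem \ref{thm:GUE-PPT-entangled}, $\langle B_d, [\Phi_d \otimes \mathrm{id}_d](Z_d) B_d\rangle \to 2(2+\varepsilon)\sqrt n - \alpha n$ almost surely, and this limit is strictly negative---forcing $[\Phi_d \otimes \mathrm{id}_d](Z_d) \not\ge 0$ for large $d$---exactly when $2(2+\varepsilon) < \alpha\sqrt n$. The two constraints therefore read $2\sqrt l < 2+\varepsilon < \alpha\sqrt n/2$, and an admissible $\varepsilon > 0$ exists if and only if $2\sqrt l < \alpha\sqrt n/2$, i.e. $\alpha > 4\sqrt l/\sqrt n$, which is precisely the hypothesis; moreover the standing requirement $\alpha < 1$ is compatible with this lower bound exactly when $4\sqrt l/\sqrt n < 1$, that is, $n > 16 l$, explaining that assumption as well. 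Fixing any such $\varepsilon$ and intersecting the three almost-sure events (positivity and PPT-ness of $Z_d$, $l$-positivity of $\Phi_d$, and negativity of the Bell expectation) shows, with probability tending to one as $d \to \infty$, that $Z_d$ is positive, PPT, and not $l$-separable.

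All of the quantitative steps are straightforward recombinations of results already established, so the only genuinely conceptual point---and the step I would take most care with---is the correct use of the Schmidt-number/$l$-positivity duality, namely that an $l$-positive map cannot detect any $l$-separable state. I would state and justify this cleanly via Proposition \ref{prop:k-positive-from-Choi}(2) (for pure states of Schmidt rank $\le l$) extended by convexity, since it is exactly what upgrades the statement ``$\Phi_d$ detects the entanglement of $Z_d$'' to the desired conclusion ``$Z_d$ is not $l$-separable.''
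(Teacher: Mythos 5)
Your proof is correct and follows exactly the route the paper intends: the paper's own proof is a one-line remark that the argument of Theorem \ref{thm:GUE-PPT-entangled} goes through with positivity replaced by $l$-positivity, and your write-up is precisely that adaptation carried out in detail (the semicircular computation giving $l$-positivity when $2+\varepsilon > 2\sqrt{l}$, the Bell-state detection when $2(2+\varepsilon) < \alpha\sqrt{n}$, and the combination of the two constraints producing the thresholds $\alpha > 4\sqrt{l}/\sqrt{n}$ and $n > 16l$). Your explicit justification, via Proposition \ref{prop:k-positive-from-Choi}(2) and convexity, that an $l$-positive map cannot detect an $l$-separable state fills in a step the paper leaves implicit, and is a welcome addition rather than a deviation.
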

\begin{proof}
The proof goes exactly as for Theorem \ref{thm:GUE-PPT-entangled}, except that one has to replace positivity by $l$-positivity, which affects the threshold accordingly. 
\end{proof}

\subsection{Random separability}

Finally, we would like to finish this section by mentioning that the threshold $4/\sqrt{n}$ from Corollary \ref{cor:threshold-detection}, although perhaps
not optimal, is of the right order with respect to $n$, when $n$ is large.

\begin{theorem}
Let $X_d$ be a standard $\mathrm{GUE}$ random matrix in $\mathcal M_{nd}(\mathbb C)$, and $\alpha \in (0,1)$ be a fixed constant. With probability $1$ as $d\to\infty$, the matrix $2 I_{nd}+\alpha X_d$ is separable as soon as $\frac 2 \alpha >  2 + \frac{4(n-1)}{ \sqrt n}$. In particular, if $\alpha <\sqrt{n}/(2(n-1)+\sqrt n)$, the hermitian matrix $Z_d$ defined in \eqref{eq:Z-d} is, almost surely as $d \to \infty$, separable.
\end{theorem}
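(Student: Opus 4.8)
The plan is to exhibit an explicit separable decomposition of $2I_{nd}+\alpha X_d$ and to read off the threshold from the almost-sure operator norms of the Gaussian blocks. Since $-\tilde Z$ is again a $\mathrm{GUE}_{nd}$ matrix whenever $\tilde Z$ is, the matrix $Z_d$ from \eqref{eq:Z-d} is of the form $2I_{nd}+\alpha X_d$ with $X_d$ a standard $\mathrm{GUE}_{nd}$, so it suffices to treat $2I_{nd}+\alpha X_d$ directly. First I would invoke Lemma \ref{lem:GUE-n-d-construction} to write $X_d$ relative to the first tensor leg $\mathcal M_n(\mathbb C)$ as a block-diagonal piece $\frac1{\sqrt n}\sum_i E_{ii}\otimes X_{ii}$ together with the off-diagonal pieces $\frac1{\sqrt{2n}}\bigl(E_{ij}\otimes(X_{ij}+\mathrm iY_{ij})+E_{ji}\otimes(X_{ij}-\mathrm iY_{ij})\bigr)$, all built from an i.i.d.\ family of $\mathrm{GUE}_d$ matrices. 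I would then split each off-diagonal pair into its Hermitian ``real'' and ``imaginary'' parts $\sigma^{(ij)}_x\otimes X_{ij}$ and $\sigma^{(ij)}_y\otimes Y_{ij}$, where $\sigma^{(ij)}_x=E_{ij}+E_{ji}$ and $\sigma^{(ij)}_y=\mathrm i(E_{ij}-E_{ji})$ are the Pauli-type operators supported on the two-dimensional coordinate block $\{i,j\}\subset\mathbb C^n$ (each Hermitian with eigenvalues $\pm1$ there).

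The core of the argument is a building-block separability lemma: for Hermitian $H\in\mathcal M_d$ and $\sigma\in\{\sigma^{(ij)}_x,\sigma^{(ij)}_y\}$, the operator $b\,(E_{ii}+E_{jj})\otimes I_d+c\,\sigma\otimes H$ is separable as soon as $b\ge c\|H\|$. I would prove this by diagonalizing $H=\sum_k\lambda_k|e_k\rangle\langle e_k|$ and rewriting the operator as $\sum_k (bI_2+c\lambda_k\sigma)\otimes|e_k\rangle\langle e_k|$, a sum of tensor products of a positive semidefinite $2\times2$ matrix on the block $\{i,j\}$ (positivity holding exactly when $b\ge c|\lambda_k|$) with a rank-one projection on $\mathbb C^d$, hence a sum of product-positive terms. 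The purely block-diagonal terms $E_{ii}\otimes(s_iI_d+\frac\alpha{\sqrt n}X_{ii})$ are manifestly separable once $s_i\ge\frac\alpha{\sqrt n}\|X_{ii}\|$.

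Matching the Gaussian parts of $2I_{nd}+\alpha X_d$ against the sum of building blocks is automatic by Lemma \ref{lem:GUE-n-d-construction}, so the only constraint is that the identity weights demanded by the individual blocks can be drawn from the total budget $2I_{nd}=2\sum_iE_{ii}\otimes I_d$; concretely, for each index $i$ one needs
\begin{equation*}
\frac{\alpha}{\sqrt n}\|X_{ii}\|+\sum_{j\neq i}\Bigl(\frac{\alpha}{\sqrt{2n}}\|X_{ij}\|+\frac{\alpha}{\sqrt{2n}}\|Y_{ij}\|\Bigr)\le 2,
\end{equation*}
after which any leftover weight is harmlessly absorbed into the diagonal term $s_i$. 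To finish I would feed in the almost-sure norm asymptotics: by the strong convergence of $\mathrm{GUE}_d$ recalled above, $\|X_{ii}\|,\|X_{ij}\|,\|Y_{ij}\|\to2$ almost surely, and since $n$ is fixed this holds simultaneously for the finitely many ($O(n^2)$) blocks on a single almost-sure event. Crudely bounding $\tfrac1{\sqrt n}\le1$ on the diagonal and $\tfrac1{\sqrt{2n}}\le\tfrac1{\sqrt n}$ off-diagonal and inserting these limits turns the displayed requirement into the sufficient condition $\alpha\bigl(2+\tfrac{4(n-1)}{\sqrt n}\bigr)\le2$; the strictness of the hypothesis $\tfrac2\alpha>2+\tfrac{4(n-1)}{\sqrt n}$ provides exactly the slack needed to absorb the $o(1)$ fluctuations of the norms, so separability holds almost surely for $d$ large, and the ``in particular'' statement follows by solving for $\alpha$ and applying the result to $Z_d$.

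The step I expect to be the main obstacle is the building-block lemma, namely promoting the positivity bound $b\ge c\|H\|$ to a genuine product-state decomposition with weight accounting tight enough to survive the sum over the $\binom n2$ off-diagonal pairs. The diagonalization trick resolves this cleanly because each Hermitian block $X_{ij},Y_{ij}$ may be handled on its own eigenbasis, but it is also where the constant is lost relative to the optimal $O(1/\sqrt n)$ threshold, consistent with the paper's remark that the bound, while of the right order in $n$, is ``perhaps not optimal.'' By contrast, the randomness is not the difficulty: it enters only through the standard almost-sure $\mathrm{GUE}_d$ norm limits, applied uniformly over the fixed finite collection of blocks.
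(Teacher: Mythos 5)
Your proof is correct, and it takes a genuinely different route from the paper's. The paper does not decompose the given matrix at all: it \emph{synthesizes} a $\mathrm{GUE}_{nd}$ matrix from four auxiliary i.i.d.\ families $\{X^{(s)}_{ij}\}_{s=1}^4$ of $\mathrm{GUE}_d$ matrices, writing it (see \eqref{eq:Y-d-SEP}) as a sum of manifestly separable terms $T^{(s)}_{ij}$ --- rank-one projectors onto $e_i\pm e_j$, $e_i\pm\mathrm{i}e_j$ tensored with the full-rank positive factors $\beta X^{(s)}_{ij}+2I_d$ --- plus diagonal compensation terms $\tilde T_i$ that cancel the unwanted $E_{ii}$-contributions; the threshold $x\geq 2+\tfrac{4(n-1)}{\beta\sqrt n}$ then comes from requiring the compensated diagonal blocks, whose limiting law is $\mathrm{SC}_{2x\beta\sqrt n-8(n-1),\,2\beta\sqrt n}$, to have positive support, followed by an optimization over $\beta\in(0,1)$. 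The key saving there is probabilistic: the $4(n-1)+1$ independent Gaussians in $\tilde T_i$ recombine into a \emph{single} GUE of norm $\approx 4\beta\sqrt n$ rather than $O(n)$. You instead work directly with the blocks of the given matrix, and your building-block lemma (diagonalize the $d\times d$ Hermitian factor, note $b I_2+c\lambda_k\sigma\geq 0$ iff $b\geq |c\lambda_k|$) is a purely deterministic separability criterion; randomness enters only through the almost-sure norm convergence of the $O(n^2)$ blocks, which survives simultaneously on one event since $n$ is fixed. What each buys: your argument is more elementary --- no auxiliary Gaussian families, no semicircle-support computation, no optimization over $\beta$ --- and it actually proves a slightly \emph{stronger} statement, since your budget condition reads $\tfrac2\alpha>\tfrac{2+2\sqrt2(n-1)}{\sqrt n}$, asymptotically a factor $\sqrt 2$ better than the paper's $2+\tfrac{4(n-1)}{\sqrt n}$; your crude bounds $\tfrac1{\sqrt n}\leq 1$ and $\tfrac1{\sqrt{2n}}\leq\tfrac1{\sqrt n}$ correctly show the paper's hypothesis implies yours with slack, so the theorem as stated follows. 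The paper's construction, for its part, showcases the Gaussian-recombination trick and keeps the $d\times d$ factors full-rank shifted GUEs, which fits the free-probabilistic theme of the rest of the paper, but for this particular theorem neither feature is needed.
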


\begin{proof}
Consider a fanily $\{X_{ij}^{(1)},X_{ij}^{(2)},X_{ij}^{(3)},X_{ij}^{(4)}\}_{1\leq i \leq j \leq n}$ of i.i.d.~ $\mathrm{GUE}_d$ operators. For $\beta \in (0,1)$, define, for $1\leq i<j \leq n$, the operators 
\begin{align*}
T_{ij}^{(1)} &= (E_{ii}+E_{ij}+E_{ji}+E_{jj})\otimes (\beta X_{ij}^{(1)} +2 I_d) \\
T_{ij}^{(2)} &= (E_{ii}-E_{ij}-E_{ji}+E_{jj})\otimes (\beta X_{ij}^{(2)} +2 I_d) \\
T_{ij}^{(3)} &= (E_{ii}+\mathrm{i}E_{ij}-\mathrm{i}E_{ji}+E_{jj})\otimes (\beta X_{ij}^{(3)} +2 I_d) \\
T_{ij}^{(4)} &= (E_{ii}-\mathrm{i}E_{ij}+\mathrm{i}E_{ji}+E_{jj})\otimes (\beta X_{ij}^{(4)} +2 I_d).
\end{align*}

For $1\leq i\leq n$ we further introduce 
$$\tilde T_i=E_{ii}\otimes \left[-\beta\sum_{j\neq i}\left( X_{ij}^{(1)}+X_{ij}^{(2)}+X_{ij}^{(3)}+X_{ij}^{(4)} \right) +2\beta X_{ii}^{(1)} - 8(n-1)I_d  \right],$$
where we write, for convenience, $X^{(\cdot)}_{ji} = X^{(\cdot)}_{ij}$ for $i<j$. Let
$$\tilde Y_d = \frac{1}{2\beta \sqrt n} \left[ \sum_{1 \leq i < j \leq n} \sum_{s=1}^4 T_{ij}^{(s)} + \sum_{i=1}^n \tilde T_i \right].$$
A direct computations shows that $\tilde Y_d$ follows a $\mathrm{GUE}_{nd}$ distribution. Define 
\begin{equation}\label{eq:Y-d-SEP}
Y_d := x I_{nd} + \tilde Y_d =  \frac{1}{2\beta \sqrt n} \left[ \sum_{1 \leq i < j \leq n} \sum_{s=1}^4 T_{ij}^{(s)} + \sum_{i=1}^n T_i \right],
\end{equation}
where the new diagonal blocks read 
$$ T_i=E_{ii}\otimes \left[-\beta\sum_{j\neq i}\left( X_{ij}^{(1)}+X_{ij}^{(2)}+X_{ij}^{(3)}+X_{ij}^{(4)} \right) +2\beta X_{ii}^{(1)} - 8(n-1)I_d + 2 x \beta \sqrt n I_d \right].$$
We look now for the range of $x$ such that equation \eqref{eq:Y-d-SEP} is a separable decomposition for the matrix $Y_d$. Since $\beta <1$, each of the operators $\beta X_{ij}^{(1)} +2 I_d$ is, almost surely as $d \to \infty$, positive semidefinite; hence, the same hols for all $T_{ij}^{(\cdot)}$. Moreover, for all $i$, the limiting eigenvalue distribution of the random matrix $T_i$ is
$$\mathrm{SC}_{2 x \beta \sqrt n - 8(n-1), 2 \beta \sqrt n},$$
which has positive support if and only if 
$$x \geq 2 + \frac{4(n-1)}{\beta \sqrt n}.$$
Optimizing over the values of $\beta \in (0,1)$, we conclude that $xI_{nd}+\mathrm{GUE}_{nd}$ is separable whenever 
$$x > 2 + \frac{4(n-1)}{\sqrt n}.$$
For the matrix $Z_d$ in \eqref{eq:Z-d}, we have now that
$$Z_d = \alpha\left( \frac 2 \alpha + \mathrm{GUE}_{nd} \right)$$
is separable whenever 
$$\frac 2 \alpha >  2 + \frac{4(n-1)}{ \sqrt n},$$
which finishes the proof.

\end{proof}

\begin{corollary}
In almost all directions, for any $n>16$, the convex body of separable states is contained between two euclidean balls of 
relative radii $4(\sqrt n + 2(n-1))/n$. For $n$ large, the ratio that we can obtain is of the order of $8$. 
\end{corollary}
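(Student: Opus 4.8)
The corollary concerns the geometry of the separable states inside the maximally mixed state neighborhood. The plan is to read off the inner and outer Euclidean radii directly from the two threshold theorems just proved, and then compute their ratio in the regime $n>16$ and $n\to\infty$.

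**Setting up the radii.** First I would recall the parametrization: the relevant family is $2 I_{nd}+\alpha X_d$ (equivalently $Z_d$ written as $\alpha(\tfrac{2}{\alpha}+\mathrm{GUE}_{nd})$), where $\alpha$ plays the role of a Euclidean radius of a perturbation of the maximally mixed state in a ``typical'' (Gaussian) direction, in the limit $d\to\infty$. The separability theorem immediately above gives that $Z_d$ is separable whenever $\alpha < \sqrt n/(2(n-1)+\sqrt n)$, so the inner radius $r_{\mathrm{in}}$ (below which separability is guaranteed almost surely) is of size $\sqrt n/(2(n-1)+\sqrt n)$. From Corollary \ref{cor:threshold-detection}, $Z_d$ is PPT entangled — hence \emph{not} separable — as soon as $\alpha>4/\sqrt n$, so the outer radius $r_{\mathrm{out}}$ (above which separability fails in almost all directions) is of size $4/\sqrt n$.

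**Computing the ratio.** Next I would simply divide the two thresholds. The relative radii claim $r_{\mathrm{out}}/r_{\mathrm{in}}$ equals
\[
\frac{4/\sqrt n}{\sqrt n/(2(n-1)+\sqrt n)} = \frac{4\,(2(n-1)+\sqrt n)}{n} = \frac{4\bigl(\sqrt n+2(n-1)\bigr)}{n},
\]
which matches the stated bound $4(\sqrt n+2(n-1))/n$. The condition $n>16$ is exactly what makes the outer threshold strictly larger than the inner one (so that the two balls are genuinely nested with the separable body strictly between them): indeed $4/\sqrt n < \sqrt n/(2(n-1)+\sqrt n)$ fails and the gap is nonempty precisely in this range, consistent with the $n>16$ hypothesis appearing in the preceding corollaries.

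**Asymptotics and the main obstacle.** Finally, for the large-$n$ claim I would take the limit of $4(\sqrt n+2(n-1))/n$ as $n\to\infty$: the dominant term $8(n-1)/n\to 8$ while $4\sqrt n/n=4/\sqrt n\to 0$, so the ratio tends to $8$. The honest difficulty here is not the arithmetic but ensuring that $\alpha$ is genuinely being interpreted as a Euclidean (Hilbert-Schmidt) radius in the \emph{same} normalization for both theorems, so that the quotient of thresholds really is a ratio of radii of two concentric balls; I would verify that the Gaussian perturbation $X_d$ used in the separability theorem and the one implicit in $Z_d$ share the same variance normalization, so that ``in almost all directions'' (i.e. for typical Gaussian directions, by unitary invariance and concentration) the two radii are directly comparable. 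Once that normalization is matched, the corollary is a direct division of the two established thresholds.
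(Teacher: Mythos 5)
Your core argument is exactly the paper's (implicit) one: the corollary follows by dividing the non-separability threshold $4/\sqrt n$ of Corollary \ref{cor:threshold-detection} by the separability threshold $\sqrt n/(\sqrt n + 2(n-1))$ of the preceding theorem, both expressed in the same $\mathrm{GUE}_{nd}$ normalization of $\alpha$, which gives the ratio $4(\sqrt n + 2(n-1))/n \to 8$. Your concern about matching normalizations is well placed and is resolved as you say: both statements perturb $2I_{nd}$ by $\alpha$ times the same standard $\mathrm{GUE}_{nd}$ matrix.

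However, your justification of the hypothesis $n>16$ is incorrect. You claim $n>16$ is ``exactly what makes the outer threshold strictly larger than the inner one.'' In fact the inequality
\[
\frac{4}{\sqrt n} \;>\; \frac{\sqrt n}{\sqrt n + 2(n-1)}
\]
holds for \emph{every} $n \geq 1$: cross-multiplying, it reads $4\sqrt n + 8(n-1) > n$, which is always true (equivalently, the ratio $4(\sqrt n + 2(n-1))/n = 4/\sqrt n + 8(n-1)/n$ is always $>1$). So the nesting of the two balls has nothing to do with $n>16$. The actual role of $n>16$ is that Corollary \ref{cor:threshold-detection} requires $4/\sqrt n < \alpha < 1$: the upper bound $\alpha < 1$ is forced by the requirement that $Z_d = 2I_{nd} - \alpha \tilde Z_d$ remain (asymptotically) positive semidefinite, i.e., the cross-section of the state space in a typical GUE direction has radius $1$ in this normalization. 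The interval $(4/\sqrt n, 1)$ is nonempty precisely when $n>16$; equivalently, the outer ball of radius $4/\sqrt n$ must lie inside the set of states for the containment statement to be meaningful. This is a localized error --- the ratio computation and the limit $8$ stand --- but the sentence explaining $n>16$ asserts something false and should be replaced by the argument above.
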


Note that the comparison is quite interesting with \cite[Corollary 3]{gba} that says that the largest euclidean ball 
inside separable sets in our above setup is of radius $1/n$. Obviously our order of $1/\sqrt{n}$ is much larger, 
however, our result is true 'only' in 'almost all' directions, so it is actually compatible -- and of intrinsic interest.

Our results also complement the following finding of Aubrun, Szarek and Ye in \cite{asy}, which can be stated informally as follows:

\begin{theorem}
Let $\rho \in \mathcal M_{nm}(\mathbb C)$ be a random quantum state distributed along the induced measure with parameters $(nm, s)$. Then, there exist constants $c_{1,2}$ and a function $s_0=s_0(n,m)$ such that
$$ c_1 nm \min(n,m) \leq s_0 \leq c_2 nm \log(nm)^2 \min(n,m)$$
and 
\begin{enumerate}
\item If $s < (1 - \varepsilon)s_0$, the probability that $\rho$ is separable is exponentially low. 
\item If $s > (1 + \varepsilon)s_0$, the probability that $\rho$ is entangled is exponentially low. 
\end{enumerate}
\end{theorem}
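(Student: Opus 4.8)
The plan is to recognize the induced measure as a Wishart model and reduce the separability question to a high-dimensional ``escape from a convex body'' problem, then to estimate the relevant width functional of the separable body. Write $N = nm$ and realize $\rho = W/\mathrm{Tr}(W)$, where $W = GG^*$ with $G$ an $N \times s$ matrix of i.i.d.\ standard complex Gaussian entries; this is exactly the induced measure with parameters $(N,s)$. In the threshold regime $s \asymp N\min(n,m) \gg N$, I would first control the traceless fluctuation $\rho - I_N/N$. A direct second-moment computation for the induced measure gives $\mathbb{E}\,\|\rho - I_N/N\|_2^2 = (N^2-1)/(N(Ns+1)) \approx 1/s$, and Gaussian concentration shows that its Hilbert--Schmidt norm concentrates sharply around a deterministic radius $r(s) \asymp 1/\sqrt{s}$; meanwhile, by unitary invariance of the model (and the fact that the linearization of the fluctuation is a $\mathrm{GUE}_N$-type matrix, exactly the shifted-$\mathrm{GUE}$ picture used elsewhere in this paper), its direction is unitarily invariant and may, for the argument below, be compared to a uniformly random traceless Hermitian direction $\theta$. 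The question ``is $\rho$ separable?'' thus becomes ``does $I_N/N + r(s)\,\theta$ lie in $\mathcal{SEP}_{n,m}$?''.

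Next I would invoke the asymptotic-geometric-analysis description of escape from a convex body. Since $I_N/N$ is an interior point of $\mathcal{SEP}_{n,m}$ and we perturb in a random direction inside the space of traceless Hermitian matrices (dimension $N^2-1$), the radial function $\max\{t : I_N/N + t\theta \in \mathcal{SEP}_{n,m}\}$ concentrates around a deterministic value $r_\ast$ governed by the width functionals (the $M$- and $M^\ast$-parameters, equivalently the mean width) of the separable body. I would then \emph{define} $s_0 = s_0(n,m)$ by the calibration $r(s_0) = r_\ast$, i.e.\ $s_0 \asymp 1/r_\ast^2$. Lévy's lemma / Gaussian concentration on the sphere yields the two exponential statements at once: if $s < (1-\varepsilon)s_0$ then $r(s)$ exceeds $r_\ast$ by a fixed multiplicative gap, so with overwhelming probability $I_N/N + r(s)\theta \notin \mathcal{SEP}_{n,m}$ and $\rho$ is entangled; symmetrically, if $s > (1+\varepsilon)s_0$ then $\rho \in \mathcal{SEP}_{n,m}$ with overwhelming probability. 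Both the concentration of $r(s)$ and the concentration of the radial function are $1$-Lipschitz functionals of the underlying Gaussian vector, so the ``exponentially low'' probabilities are a single application of the Gaussian concentration inequality.

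The substantive content, and the step I expect to be the main obstacle, is the two-sided estimation of the width functional of $\mathcal{SEP}_{n,m}$, which is precisely what produces the scaling $s_0 \asymp nm\min(n,m)$ together with the stubborn $\log(nm)^2$ gap. For the lower bound $c_1\,nm\min(n,m) \le s_0$ (an upper bound on $r_\ast$, forcing separability to fail for small $s$) I would exploit the outer inclusion $\mathcal{SEP}_{n,m} \subseteq \mathcal{PPT}_{n,m}$: the PPT body is the image of the trace-one positive cone under the linear isometry $\top_n \otimes \mathrm{id}_m$, so its radial function is computable from the spectral geometry of a single $N \times N$ matrix ball, and random-matrix input (the bottom of the spectrum of the partially transposed perturbation, a shifted semicircle whose width in the relevant normalization scales like $\sqrt{\min(n,m)}$) yields the clean $nm\min(n,m)$ factor. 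The $\min(n,m)$ dependence is geometric: the direction hardest for separability is essentially the maximally entangled one, whose partial transpose has an eigenvalue of order $-1/\min(n,m)$, hence the PPT-radius from $I_N/N$ in that direction scales with $\min(n,m)$. For the matching upper bound $s_0 \le c_2\,nm\log(nm)^2\min(n,m)$ (a lower bound on $r_\ast$, i.e.\ showing $\mathcal{SEP}_{n,m}$ is fat in typical directions) I would estimate its mean width directly from the fact that it is the convex hull of product states: a Gaussian-process computation of $\mathbb{E}\sup_{\phi}\langle g,\phi\rangle$ over the low-dimensional manifold of product pure states, controlled by a Dudley entropy integral over an $\epsilon$-net, together with the duality passage from the mean width $M^\ast$ back to the radial parameter $M$ (the $MM^\ast$-estimate). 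It is exactly this covering/duality step that injects the logarithmic overhead and prevents the two bounds from matching.

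Assembling the pieces, the calibration $s_0 \asymp 1/r_\ast^2$ combined with the two width estimates gives the stated sandwich $c_1\,nm\min(n,m) \le s_0 \le c_2\,nm\log(nm)^2\min(n,m)$, while the Gaussian concentration of paragraph two upgrades the calibration into the two exponential-probability conclusions. I would expect the PPT-comparison (lower) bound to be routine given the random-matrix facts already recorded in this paper, and the mean-width (upper) bound to be the technically demanding part, precisely because it is where the polylogarithmic slack is introduced; see \cite{asy} for the full argument.
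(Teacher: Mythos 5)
First, a point of order: the paper does not prove this statement at all. It is quoted, explicitly ``informally,'' from Aubrun, Szarek and Ye \cite{asy} as background against which the GUE constructions of Section~\ref{sec:GUE} are compared, so there is no internal proof to measure your proposal against; the comparison below is with the actual argument of \cite{asy}. Your overall architecture is faithful to that argument: the Wishart realization $\rho = GG^*/\mathrm{Tr}(GG^*)$ of the induced measure, the computation $\mathbb{E}\|\rho - I_N/N\|_2^2 = (N^2-1)/(N(Ns+1)) \approx 1/s$, the reduction to an escape problem for the convex body $\mathcal{SEP}_{n,m}$ centered at $I_N/N$, the calibration $s_0 \asymp M(\mathcal{SEP}_0)^2$ (writing $\mathcal{SEP}_0$ for the centered body, $M$ for the mean of its gauge and $M^*=w$ for its mean width) via concentration in a random direction, and the appearance of the $\log^2$ factor through the $MM^*$-estimate are all as in \cite{asy}.

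There is, however, a genuine gap in how you distribute the work between the two bounds, and it kills the lower bound. The inclusion $\mathcal{SEP}_{n,m}\subseteq\mathcal{PPT}_{n,m}$ can only give $s_0 \geq s_0^{\mathrm{PPT}}$, and the PPT threshold is of order $nm$, not $nm\min(n,m)$: partial transposition is a Hilbert--Schmidt isometry which maps the GUE-type fluctuation to another GUE-type fluctuation, so in a \emph{typical} traceless direction the radial function of $\mathcal{PPT}_{n,m}$ at $I_N/N$ is of order $N^{-1/2}$ with $N=nm$, giving a threshold $\asymp nm$. This is precisely Aubrun's result \cite{aub} (threshold at $4nm$), and it is visible in Theorem~\ref{thm:GUE-PPT-entangled}~(1) of this paper, where $2I_{nd}-\alpha\,\mathrm{GUE}_{nd}$ remains PPT for \emph{every} $\alpha<1$. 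Your appeal to the worst-case (maximally entangled) direction is beside the point, since the escape problem only probes random directions; the missing factor $\min(n,m)$ is exactly the gap between PPT and separability that this whole paper is concerned with. In \cite{asy}, the lower bound $s_0 \gtrsim nm\min(n,m)$ is obtained instead from an \emph{upper} bound on the mean width, $w(\mathcal{SEP}_0)\lesssim (nm\min(n,m))^{-1/2}$ --- the Gaussian sup/Dudley computation over product states that you reserve for the other bound --- combined with the elementary duality inequality $M(K)\geq 1/M^*(K)$. Dually, the $MM^*$-step for the upper bound on $s_0$ requires a matching \emph{lower} bound on $w(\mathcal{SEP}_0)$ (via volume-radius/Urysohn estimates or a direct Gaussian computation), which a covering-number upper bound alone does not supply. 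So both of your ingredients are needed, but each must be attached to the opposite bound from the one you assign it to, and the PPT shortcut must be discarded.
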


\begin{remark}
In the vein of the above theorem, it would be interesting to find good examples of typically $l$-separable
non-centered GUE. Our construction above is specific to the one-separable case, and we do not see 
how to extend it in a non-trivial way to the $l$-separable setup.
\end{remark}

\subsection{Indecomposability}

We recall \cite[Definition 1.2.8]{sto} that a positive map $\Phi$ is called \emph{decomposable} if it can be written as
$$\Phi = \Phi_1 + \top \circ \Phi_2,$$
where $\top$ is the transposition map and $\Phi_{1,2}$ are completely positive maps. 

\begin{theorem}
In the setting of Theorem \ref{thm:GUE-PPT-entangled}, as soon as $2(2+\varepsilon) < \alpha \sqrt n$,  almost surely as $d \to \infty$, the positive linear map $\Phi_d$ having Choi matrix $C_d$, is indecomposable.
\end{theorem}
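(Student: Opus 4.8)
The key observation is that decomposability has a clean dual characterization: a positive map $\Phi$ is decomposable if and only if its Choi matrix $C_\Phi$ is block-positive after separating the completely positive and co-completely positive parts, or equivalently, $\langle x, C_\Phi x\rangle \geq 0$ for every product vector, and moreover the same holds for the partial transpose witness. More useful here is the witness formulation: if $\Phi$ is decomposable, then for \emph{every} PPT state $\sigma$ one has $\langle B, [\Phi\otimes\mathrm{id}](\sigma) B\rangle \geq 0$ for all vectors $B$ (since $\Phi_1$ completely positive keeps positivity, and $\top\circ\Phi_2$ applied to a PPT state yields a positive operator). The plan is therefore to argue by contraposition: exhibit a single PPT state that $\Phi_d$ maps to a non-positive operator, which is exactly what the third point of Theorem \ref{thm:GUE-PPT-entangled} already provides.

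First I would record the standard fact that if $\Phi=\Phi_1+\top\circ\Phi_2$ with $\Phi_{1,2}$ completely positive, then for any PPT matrix $\rho \in \mathcal D_{nm}$ and any vector $\xi$, the quantity $\langle \xi, [\Phi\otimes\mathrm{id}_m](\rho)\,\xi\rangle$ is nonnegative. This is because $[\Phi_1\otimes\mathrm{id}_m](\rho)\geq 0$ by complete positivity and positivity of $\rho$, while $[(\top\circ\Phi_2)\otimes\mathrm{id}_m](\rho) = [\Phi_2\otimes\mathrm{id}_m]([\top_n\otimes\mathrm{id}_m]\rho) \geq 0$ because $[\top_n\otimes\mathrm{id}_m]\rho\geq 0$ (the PPT condition) and $\Phi_2$ is completely positive. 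Thus a decomposable positive map cannot detect any PPT state.

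Next I would apply this with $\rho = Z_d$ (normalized to a state) and $\xi = B_d$ the Bell vector, taking $m=d$. By part (1) of Theorem \ref{thm:GUE-PPT-entangled}, $Z_d$ is, almost surely as $d\to\infty$, positive semidefinite and PPT. By part (3), almost surely as $d\to\infty$,
$$\langle B_d, [\Phi_d\otimes\mathrm{id}_d](Z_d)\,B_d\rangle \to 2(2+\varepsilon)\sqrt n - \alpha n,$$
which is strictly negative precisely when $2(2+\varepsilon)<\alpha\sqrt n$. On the almost sure event where both statements hold simultaneously (the intersection of two probability-one events), the operator $[\Phi_d\otimes\mathrm{id}_d](Z_d)$ fails to be positive semidefinite while $Z_d$ is a PPT state. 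By the contrapositive of the fact recorded above, $\Phi_d$ cannot be decomposable. Since $\Phi_d$ is positive almost surely by part (2), we conclude that $\Phi_d$ is an indecomposable positive map, almost surely as $d\to\infty$.

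The only genuine subtlety, and the step I would be most careful about, is ensuring that all three probability-one events (positivity and PPT of $Z_d$, positivity of $\Phi_d$, and the limiting value of the bilinear form) hold on a common almost sure event and that the strict negativity of the limit translates into strict negativity for all large $d$; this is immediate since a sequence converging to a strictly negative limit is eventually strictly negative. There is no hard analytic obstacle here — the entire weight of the argument has been front-loaded into Theorem \ref{thm:GUE-PPT-entangled}, and the present statement is essentially a repackaging of its three parts through the elementary dual characterization of decomposability.
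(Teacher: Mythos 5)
Your proposal follows essentially the same route as the paper: the paper's proof is a one-liner invoking part (3) of Theorem \ref{thm:GUE-PPT-entangled} together with the standard fact (cited from St{\o}rmer's book) that a decomposable map cannot detect the entanglement of a PPT state. You prove that standard fact inline rather than citing it, which is fine, but there is a small algebraic slip in your argument: the identity $[(\top\circ\Phi_2)\otimes\mathrm{id}_m](\rho) = [\Phi_2\otimes\mathrm{id}_m]\bigl([\top_n\otimes\mathrm{id}_m]\rho\bigr)$ is false in general, since $(\top\circ\Phi_2)\otimes\mathrm{id}_m = (\top\otimes\mathrm{id}_m)\circ(\Phi_2\otimes\mathrm{id}_m)$, i.e.\ the partial transpose acts \emph{after} $\Phi_2$, not before (test it on $\Phi_2(X)=AXA^*$: one side produces $\bar A\,\cdot\,A^\top$, the other $A\,\cdot\,A^*$). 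The correct repair is standard: write $\top\circ\Phi_2 = \bar\Phi_2\circ\top$ where $\bar\Phi_2 := \top\circ\Phi_2\circ\top$, and note that $\bar\Phi_2$ is again completely positive (its Kraus operators are the entrywise conjugates of those of $\Phi_2$); then $[(\top\circ\Phi_2)\otimes\mathrm{id}_m](\rho) = [\bar\Phi_2\otimes\mathrm{id}_m]\bigl([\top_n\otimes\mathrm{id}_m]\rho\bigr)\geq 0$ for PPT $\rho$, and the rest of your argument --- intersecting the three almost sure events and using that a sequence with strictly negative limit is eventually strictly negative --- goes through unchanged.
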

\begin{proof}
This follows from the third point in Theorem \ref{thm:GUE-PPT-entangled}: the map $\Phi_d$ detects (for $d$ large) the entanglement in the PPT state $Z_d$, hence it cannot be decomposable, see \cite[Corollary 7.2.5 and Proposition 7.3.6]{sto}.
\end{proof}

\end{document}